\documentclass[onecolumn]{IEEEtran}
\usepackage[mathscr]{eucal}
\usepackage[cmex10]{amsmath}
\usepackage{epsfig,epsf,psfrag}
\usepackage{amssymb,amsmath,amsthm,amsfonts,latexsym}
\usepackage{amsmath,graphicx,bm,xcolor,url,overpic}
\usepackage{fixltx2e}
\usepackage{array}
\usepackage{verbatim}
\usepackage{bm}
\usepackage{algorithmic}
\usepackage{algorithm}
\usepackage{verbatim}
\usepackage{textcomp}
\usepackage{mathrsfs}
\usepackage{epstopdf}

\newcommand{\openone}{\leavevmode\hbox{\small1\normalsize\kern-.33em1}}

\catcode`~=11 \def\UrlSpecials{\do\~{\kern -.15em\lower .7ex\hbox{~}\kern .04em}} \catcode`~=13 

\allowdisplaybreaks[4]

\newcommand{\nn}{\nonumber}

\newcommand{\calA}{\mathcal{A}}
\newcommand{\calB}{\mathcal{B}}

\newcommand{\calD}{\mathcal{D}}

\newcommand{\calF}{\mathcal{F}}
\newcommand{\calG}{\mathcal{G}}

\newcommand{\calL}{\mathcal{L}}

\newcommand{\calN}{\mathcal{N}}

\newcommand{\calP}{\mathcal{P}}

\newcommand{\calR}{\mathcal{R}}

\newcommand{\calT}{\mathcal{T}}
\newcommand{\calU}{\mathcal{U}}
\newcommand{\calV}{\mathcal{V}}

\newcommand{\calX}{\mathcal{X}}
\newcommand{\calY}{\mathcal{Y}}
\newcommand{\calZ}{\mathcal{Z}}


\newcommand{\bV}{\mathbf{V}}

\newcommand{\bx}{\mathbf{x}}


\newcommand{\rmc}{\mathrm{c}}

\newcommand{\rmd}{\mathrm{d}}

\newcommand{\rmQ}{\mathrm{Q}}

\newcommand{\rmV}{\mathrm{V}}




\newcommand{\bbR}{\mathbb{R}}



\DeclareMathAlphabet{\mathbsf}{OT1}{cmss}{bx}{n}
\DeclareMathAlphabet{\mathssf}{OT1}{cmss}{m}{sl}

\newcommand{\rvR}{\mathsf{R}}

\DeclareSymbolFont{bsfletters}{OT1}{cmss}{bx}{n}  
\DeclareSymbolFont{ssfletters}{OT1}{cmss}{m}{n}
\DeclareMathSymbol{\bsfGamma}{0}{bsfletters}{'000}
\DeclareMathSymbol{\ssfGamma}{0}{ssfletters}{'000}
\DeclareMathSymbol{\bsfDelta}{0}{bsfletters}{'001}
\DeclareMathSymbol{\ssfDelta}{0}{ssfletters}{'001}
\DeclareMathSymbol{\bsfTheta}{0}{bsfletters}{'002}
\DeclareMathSymbol{\ssfTheta}{0}{ssfletters}{'002}
\DeclareMathSymbol{\bsfLambda}{0}{bsfletters}{'003}
\DeclareMathSymbol{\ssfLambda}{0}{ssfletters}{'003}
\DeclareMathSymbol{\bsfXi}{0}{bsfletters}{'004}
\DeclareMathSymbol{\ssfXi}{0}{ssfletters}{'004}
\DeclareMathSymbol{\bsfPi}{0}{bsfletters}{'005}
\DeclareMathSymbol{\ssfPi}{0}{ssfletters}{'005}
\DeclareMathSymbol{\bsfSigma}{0}{bsfletters}{'006}
\DeclareMathSymbol{\ssfSigma}{0}{ssfletters}{'006}
\DeclareMathSymbol{\bsfUpsilon}{0}{bsfletters}{'007}
\DeclareMathSymbol{\ssfUpsilon}{0}{ssfletters}{'007}
\DeclareMathSymbol{\bsfPhi}{0}{bsfletters}{'010}
\DeclareMathSymbol{\ssfPhi}{0}{ssfletters}{'010}
\DeclareMathSymbol{\bsfPsi}{0}{bsfletters}{'011}
\DeclareMathSymbol{\ssfPsi}{0}{ssfletters}{'011}
\DeclareMathSymbol{\bsfOmega}{0}{bsfletters}{'012}
\DeclareMathSymbol{\ssfOmega}{0}{ssfletters}{'012}


\newcommand{\hatT}{\hat{T}}


\newcommand{\bmu}{\bm{\mu}}

\newcommand{\bSigma	}{\bm{\Sigma}}





\newcommand{\ceil}[1]{\lceil{#1}\rceil}


\DeclareMathOperator*{\argmin}{arg\,min}



\newtheorem{theorem}{Theorem} 
\newtheorem{lemma}[theorem]{Lemma}

\newtheorem{corollary}[theorem]{Corollary}
\newtheorem{definition}{Definition}

\graphicspath{{./figures/}}
\usepackage{epstopdf}
\usepackage[ colorlinks = true,
             linkcolor = blue,
             urlcolor  = blue,
             citecolor = red,
             anchorcolor = green,]{hyperref}

\begin{document}
\flushbottom
\title{Second-Order and Moderate Deviations Asymptotics for Successive Refinement}
\author{Lin Zhou  $\qquad$Vincent Y.~F.~Tan$\qquad$
        Mehul Motani
\thanks{The authors are with the  Department of Electrical and Computer Engineering, National University of Singapore (NUS). V.~Y.~F.~Tan is also with the Department of Mathematics, NUS. Emails: \url{lzhou@u.nus.edu}; \url{vtan@nus.edu.sg}; \url{motani@nus.edu.sg}.}
\thanks{Part of this paper has been presented in~\cite{zhou2016sr} at ISIT 2016, Barcelona, Spain.}
}
\maketitle

\begin{abstract}
We derive the optimal second-order coding region and moderate deviations constant for successive refinement source coding with a joint excess-distortion probability constraint.  We consider two scenarios: (i) a discrete memoryless source (DMS) and arbitrary distortion measures at the decoders and (ii) a Gaussian memoryless source (GMS) and quadratic distortion measures at the decoders. 
For a DMS with arbitrary distortion measures, we prove an achievable second-order coding region, using type covering lemmas by Kanlis and Narayan and by No, Ingber and Weissman.  We prove the converse using the perturbation approach by Gu and Effros.  
When the DMS is successively refinable, the expressions for the second-order coding region and the moderate deviations constant are simplified and easily computable.  For this case, we also obtain new insights on the second-order behavior compared to the scenario where separate excess-distortion proabilities are considered.  For example, we describe a DMS, for which the optimal second-order region transitions from being characterizable by a bivariate Gaussian to a univariate Gaussian, as the distortion levels are varied.
We then consider a GMS with quadratic distortion measures.  To prove the direct  part, we make use of the sphere covering theorem by Verger-Gaugry, together with appropriately-defined Gaussian type classes.  To prove the converse, we generalize Kostina and Verd\'u's one-shot converse bound for point-to-point lossy source coding.  We remark that this proof is applicable to general successively refinable sources.  In the proofs of the moderate deviations results for both scenarios, we follow a strategy similar to that for the second-order asymptotics and use the moderate deviations principle. 
\end{abstract}

\begin{IEEEkeywords}
Successive refinement, Second-order asymptotics, Moderate deviations, Discrete memoryless source, Gaussian memoryless source, Gaussian types
\end{IEEEkeywords}

\section{Introduction}

The successive refinement source coding problem~\cite{rimoldi1994,equitz1991successive} is shown in Figure~\ref{systemmodel}. There are two encoders and two decoders. Encoder $f_i,~i=1,2$ has access to a source sequence $X^n$ and compresses it into a message $S_i,~i=1,2$. Decoder $\phi_1$ aims to recover source sequence $X^n$ under distortion measure $d_1$ and distortion level $D_1$ with the encoded message $S_1$ from encoder $f_1$. The decoder $\phi_2$ aims to recover $X^n$ under distortion measure $d_2$ and distortion level $D_2$ with messages $S_1$ and $S_2$. The optimal rate region for a DMS with arbitrary distortion measures was characterized by Rimoldi in~\cite{rimoldi1994}.  This problem has many practical applications in image and video compression. For example, we may want to describe an image optimally to within  a particular amount of distortion; later when we obtain more information about the image, we hope to specify it more accurately. The successive refinement problem is an information-theoretic formulation of whether its is possible to interrupt a transmission at any time without any loss of optimality in compression~\cite{rimoldi1994}.

In this paper, we analyze two    asymptotic regimes associated with the successive refinement problem---namely, the second-order~\cite{tan2015asymptotic} and the moderate deviations asymptotic regimes~\cite{altugwagner2014}. Our analysis provides a more refined picture on the performance  of optimal codes for the setting in which the {\em joint excess-distortion probability} (in contrast to the separate  excess-distortion probabilities in \cite{no2015strong}) is non-vanishing and the setting in which this probability decays sub-exponentially fast. By joint excess-distortion probability, we mean the probability that {\em either} of the two decoders fails to reproduce  the source $X^n$ to within prescribed distortion levels $D_1$ or $D_2$. In contrast, the separate excess-distortion probability formalism places  separate upper bounds on each of the probabilities that the source is not reproduced to within distortion levels $D_1$ and $D_2$. Let us now explain some advantages of using the joint  criterion over the separate one.
\begin{enumerate}
\item The joint criterion is consistent with recent works in the second-order literature~\cite{tan2015asymptotic,vincent2014dispersion,le2015inter,watanabe2015second}. For example, in \cite{le2015inter}, Le, Tan and Motani established the second-order asymptotics for the Gaussian interference channel in the strictly very strong interference regime under the joint error probability criterion. If in \cite{le2015inter}, one adopts the separate error probabilities criterion, one would {\em not} be    able to observe the performance tradeoff between the two decoders. 
\item In Section \ref{sec:examples},  we show, via different proof techniques compared to existing works, that the second-order region (when the rate of a code is located at a corner point of the first-order rate region) is curved. This shows that if one second-order coding rate is small, the other is necessarily large. This reveals a fundamental tradeoff that cannot be observed if one adopts the separate excess-distortion probability criterion.
\item In moderate deviations analysis (see case (iii) in Theorem \ref{mdconstant} and Corollary \ref{srmdc}), under the joint criterion, we observe that the worse decoder dominates the overall performance. This parallels error exponent analysis of Kanlis and Narayan~\cite{kanlis1996error} and can only be observed under the joint excess-distortion probability criterion. 
\end{enumerate}

In this work, 
we study two classes of sources, namely discrete and Gaussian memoryless sources.

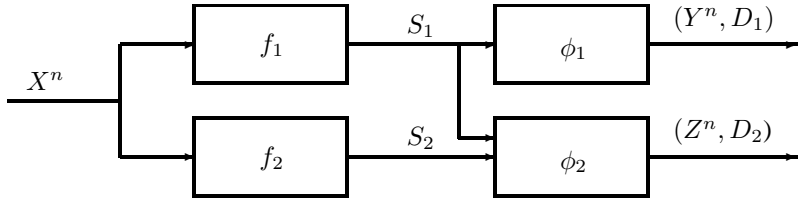
\begin{figure}[t]
\centering
\setlength{\unitlength}{0.5cm}
\scalebox{1}{
\begin{picture}(26,6)
\linethickness{1pt}
\put(1.5,3.8){\makebox{$X^n$}}
\put(6,1){\framebox(4,2)}
\put(6,4){\framebox(4,2)}
\put(7.7,1.8){\makebox{$f_2$}}
\put(7.7,4.8){\makebox{$f_1$}}
\put(1,3.5){\line(1,0){3}}
\put(4,5){\vector(1,0){2}}
\put(4,2){\line(0,1){3}}
\put(4,2){\vector(1,0){2}}
\put(14,1){\framebox(4,2)}
\put(14,4){\framebox(4,2)}

\put(15.7,4.7){\makebox{$\phi_1$}}
\put(15.7,1.7){\makebox{$\phi_2$}}
\put(10,2){\vector(1,0){4}}
\put(12,2.5){\makebox(0,0){$S_2$}}
\put(10,5){\vector(1,0){4}}
\put(12,5.5){\makebox(0,0){$S_1$}}
\put(13,5){\line(0,-1){2.5}}
\put(13,2.5){\vector(1,0){1}}
\put(18,2){\vector(1,0){4}}
\put(18.7,2.5){\makebox{$(Z^n,D_2$)}}
\put(18,5){\vector(1,0){4}}
\put(18.7,5.5){\makebox{$(Y^n,D_1)$}}
\end{picture}}
\caption{The Successive Refinement Source Coding Problem~\cite{rimoldi1994}.}
\label{systemmodel}
\end{figure}

\subsection{Main Contributions}
There are two main contributions in this paper. 

First, for a DMS with arbitrary distortion measures, we derive the optimal second-order coding region and moderate deviations constant for the successive refinement source coding problem under a {\em joint} excess-distortion criterion in contrast to the {\em separate} excess-distortion criteria in No, Ingber and Weissman~\cite{no2015strong}. As mentioned above, we opine that the joint criterion is also important and is, in fact, in line with  the original work by Rimoldi~\cite{rimoldi1994} and the work on error exponents (the reliability function) by Kanlis and Narayan~\cite{kanlis1996error}. There are several new insights on the second-order coding region that we can glean when we consider the joint excess-distortion probability (cf.\ Section \ref{sec:examples}). Moreover, we show that our result can be specialized to successively refinable discrete memoryless source-distortion measure triplets, leading to a simpler second-order region and also a simpler expression for the moderate deviations constant. In the achievability part, we leverage the type covering lemma~\cite[Lemma 8]{no2015strong}. In the converse part, we follow the perturbation approach by Gu and Effros in their proof for the strong converse of Gray-Wyner problem~\cite{wei2009strong}, leading to a type-based strong converse. In the proofs of both directions, we leverage the properties of appropriately-defined distortion-tilted information densities and we also use the (multi-variate) Berry-Esseen theorem~\cite{Ben03} and the moderate deviations principle/theorem in~\cite[Theorem~3.7.1]{dembo2009large}. Furthermore, in the proof of converse part for successively refinable source-distortion measure triplets, we generalize the one-shot converse bound of Kostina and Verd\'u in~\cite[Theorem 1]{kostina2012converse}. We remark that this converse proof is also applicable to successively refinable continuous memoryless source-distortion measure triplets such as the a GMS with quadratic distortion measures. For the moderate deviations analysis for a DMS with arbitrary distortion measures, we use an information spectrum calculation similar to that used  for the second-order asymptotics analysis.

Our second contribution pertains to a GMS with quadratic distortion measures in which we establish the second-order coding region and the moderate deviations constant. The solutions are particularly simple because a GMS with quadratic distortion measures is successively refinable~\cite{equitz1991successive}. However, because the Gaussian source is continuous, we need to modify the type covering lemma mentioned above, as it only applies to discrete sources there. We apply the sphere covering theorem~\cite{verger2005covering} multiple times to establish a Gaussian type covering lemma for the successive refinement problem. To subsequently apply this lemma to calculate the joint excess-distortion probability, we need to define the notion of Gaussian types (cf.\ \cite{arikan1998guessing,kelly2012reliability}) carefully. Indeed, the quantizations of the power of the source for the second-order and moderate deviations analyses are different and they need to be chosen carefully. We note that appropriately-defined Gaussian types have been used in the work by Scarlett for the second-order asymptotics of the dirty-paper problem~\cite{scarlett2015} and Scarlett and Tan's work for the second-order asymptotics of the Gaussian MAC with degraded message sets~\cite{scarlett2015a}.

\subsection{Related Work}
We briefly summarize other works that are related to successive refinement source coding. ~\cite{effros1999} extended Rimoldi's result in~\cite{rimoldi1994} to discrete stationary ergodic and non-ergodic sources. Motivated by memory limitation concerns, Tuncel and Rose considered additive successive refinement source coding problems in~\cite{tuncel2003additive} where the decoding scheme is   constrained to be additive over an Abelian (commutative) group.  Kanlis and Narayan~\cite{kanlis1996error} derived the error exponent under the  joint excess-distortion criterion while Tuncel and Rose~\cite{tuncel2003} considered the separate excess-distortion criterion for  two layers. Second-order coding rates were derived for the so-called {\em strong} successive refinement problem  by No, Ingber and Weissman~\cite{no2015strong}. They considered the {\em separate} excess-distortion criteria. In this work, we consider the {\em joint} excess-distortion criterion.

There are several works that consider second-order asymptotics for lossless and lossy source coding. Strassen~\cite{strassen1962asymptotische} derived the second-order coding rate for point-to-point lossless source coding and Hayashi~\cite{hayashi2008source} revisited the problem using information spectrum method. Tan and Kosut~\cite{tan2014dispersions} and Nomura and Han~\cite{nomura2014} considered the Slepian-Wolf problem and Watanabe considered  the lossless Gray-Wyner problem~\cite{watanabe2015second}. The dispersion for point-to-point lossy source coding was derived by Ingber and Kochman~\cite{ingber2011} and by Kostina and Verd\'u~\cite{kostina2012fixed}. The Wyner-Ziv problem was considered by Watanabe, Kuzuoka and Tan in~\cite{watanabe2015} and by Yassaee, Aref and Gohari in~\cite{yassaee2013technique}. In a work that can be considered dual to source coding, Kumagai and Hayashi~\cite{kumagai2014} studied the second-order asymptotics of random number conversion in quantum information and noticed that interestingly, the asymptotic distribution of interest is not the usual normal distribution but a  generalized Rayleigh-normal distribution. 

We also recall the  related works on moderate deviations analysis. Chen  {\em et al.}~\cite{chen2007redundancy} and He {\em et al.}~\cite{he2009redundancy} initiated the study of moderate deviations for  fixed-to-variable length source coding with decoder side information. For fixed-to-fixed length analysis, Altu\u{g} and Wagner~\cite{altugwagner2014} initiated the study of moderate deviations in the context of discrete memoryless channels. Polyanksiy and Verd\'u~\cite{polyanskiy2010channel} relaxed some assumptions in the conference version of Altu\u{g} and Wagner's work~\cite{altug2010moderate} and they also considered moderate deviations for AWGN channels. Altu\u{g}, Wagner and Kontoyiannis~\cite{altug2013lossless} considered moderate deviations for lossless source coding. For lossy source coding, the moderate deviations analysis was done by Tan in~\cite{tan2012moderate} using ideas from Euclidean information theory~\cite{borade2008}.

\subsection{Organization of the Paper}
The rest of the paper is organized as follows. In Section~\ref{sec:pf}, we set up the notation, formulate the successive refinement source coding problem and recall existing results including the first-order rate region and conditions for a source-distortion measure triplet to be successively refinable. In Section~\ref{sec:maindms}, we present the second-order coding region and moderate deviations constant for a DMS with arbitrary distortion measures and specialize the result to successively refinable discrete memoryless source-distortion measure triplets. We illustrate our results using two examples from Kostina and Verd\'u \cite{kostina2012fixed}, leading to new insights on the second-order fundamental limits. Furthermore, we generalize the one-shot lower bound by Kostina and Verd\'u in~\cite[Theorem 1]{kostina2012converse} to provide an alternative converse proof for successively refinable source-distortion measure triplets. Respectively in Sections~\ref{secondproof} and~\ref{proofmdc}, we present the proofs for the second-order asymptotics and moderate deviations results for a DMS. In Section~\ref{sec:maingms}, we present the second-order coding region and moderate deviations constant together with their proofs for a GMS with quadratic distortion measures. Finally, in Section~\ref{sec:conc}, we conclude the paper. To present the main results of the paper seamlessly, we defer the proofs of all supporting technical lemmas to the appendices.

\section{Problem Formulation and Existing Results}
\label{sec:pf}
\subsection{Notation}
Random variables and their realizations are in capital (e.g.,\ $X$) and lower case (e.g.,\ $x$) respectively. All sets are denoted in calligraphic font (e.g.,\ $\mathcal{X}$). We use $\calX^{\mathrm{c}}$ to denote the complement of $\calX$. Let $X^n:=(X_1,\ldots,X_n)$ be a random vector of length $n$. We use $\|x^n\| =\sqrt{\sum_i x_i^2}$ to denote the $l_2$ norm of the vector $x^n\in\bbR^n$. We use $\exp(x)$ to denote $e^x$. All logarithms are base $e$ (except in Section \ref{sec:examples} where we use base $2$). We use $\rmQ(\cdot)$ to denote the standard Gaussian complementary cumulative distribution function (cdf) and  $\rmQ^{-1}(\cdot )$ its inverse. Given two integers $a$ and $b$, we use $[a:b]$ to denote all the integers between $a$ and $b$. We use standard asymptotic notation such as $O(\cdot)$  and $o(\cdot)$. We use $\mathbb{R}_+$ to denote the set of non-negative real numbers and $\mathrm{ones}(m_1, m_2)$ to denote the $m_1\times m_2$ matrix of all ones. For mutual information, we use $I(X;Y)$ and $I(P_{X},P_{Y|X})$ interchangeably.

The set of all probability distributions on a finite set $\calX$ is denoted as $\calP(\calX)$ and the set of all conditional probability distributions from $\calX$ to $\calY$ is denoted as $\calP(\calY|\calX)$. Given $P\in\calP(\calX)$ and $V\in\calP(\calY|\calX)$, we use $P\times V$ to denote the joint distribution induced by $P$ and $V$. In terms of the method of types for a DMS, we use the notation as~\cite{tan2015asymptotic}. Given sequence $x^n$, the empirical distribution is denoted as $\hat{T}_{x^n}$. The set of types formed from length $n$ sequences in $\calX$ is denoted as $\calP_{n}(\calX)$. Given $P\in\calP_{n}(\calX)$, the set of all sequences of length $n$ with type $P$ is denoted as $\calT_{P}$.

\subsection{Problem Formulation}
We consider a memoryless source with distribution $P_{X}$ supported on an arbitrary (discrete or continuous) alphabet $\calX$. Hence $X^n$ is an i.i.d. sequence where each $X_i$ is generated according to $P_{X}$. We assume the reproduction alphabets for decoder $\phi_1,\phi_2$ are respectively alphabets $\calY$ and $\calZ$. We follow the definitions in~\cite{rimoldi1994} for codes and achievable rate region.
\begin{definition}
An $(n,M_1,M_2)$-code for successive refinement source coding consists of two encoders:
\begin{align}
f_1:\calX^n\to\{1,2,\ldots,M_1\},\\
f_2:\calX^n\to\{1,2,\ldots,M_2\},
\end{align}
and two decoders:
\begin{align}
\phi_1&:\{1,2,\ldots,M_1\}\to \calY^n,\\*
\phi_2&:\{1,2,\ldots,M_1\}\times\{1,2,\ldots,M_2\}\to \calZ^n.
\end{align}
\end{definition}

Define two distortion measures: $d_1:\calX\times\calY\to[0,\infty)$ and $d_2:\calX\times\calZ\to[0,\infty)$ such that for each $x\in\calX$, there exists $y\in\calY$ and $z\in\calZ$ satisfying $d_1(x,y)=0$ and $d_2(x,z)=0$. Let the distortion between $x^n$ and $y^n$ be defined as $d_1(x^n,y^n):=\frac{1}{n}\sum_{i=1}^nd_1(x_i,y_i)$ and the distortion $d_2(x^n,z^n)$ be defined in a similar manner. Throughout the paper, we consider the case where $D_1>0$ and $D_2>0$. Define the {\em joint excess-distortion probability} as
\begin{align}
\label{defexcessprob}
\epsilon_n(D_1,D_2)&:=\Pr\left(d_1(X^n,Y^n)>D_1~\mathrm{or}~d_2(X^n,Z^n)> D_2\right),
\end{align}
where $Y^n=\phi_1(f_1(X^n))$ and $Z^n = \phi_2(f_1(X^n), f_2(X^n))$ are the reconstructed sequences.
\begin{definition}[First-order Region]
\label{deffirst}
A rate pair $(R_1,R_2)$ is said to be $(D_1,D_2)$-achievable for the successive refinement source coding if there exists a sequence of $(n,M_1,M_2)$-codes such that
\begin{align}
\limsup_{n\to\infty}\frac{1}{n}\log M_1\leq R_1,\\*
\limsup_{n\to\infty}\frac{1}{n}\log(M_1M_2)\leq R_2\label{eqn:R2},
\end{align}
and
\begin{align}
\lim_{n\to\infty} \epsilon_n(D_1,D_2)=0.
\end{align}
The closure of the set of all $(D_1,D_2)$-achievable rate pairs is called optimal  $(D_1,D_2)$-achievable rate region and denoted as $\calR(D_1,D_2|P_{X})$.
\end{definition}
Note from~\eqref{eqn:R2} that $R_2$ corresponds to an upper bound on the {\em sum} rate (and not the rate of message $S_2$ in Figure~\ref{systemmodel}). This is in line with the original work by Rimoldi~\cite{rimoldi1994}.

Now for the following two definitions, we set $(R_1^*, R_2^*)$ to be a  rate pair  on the boundary of $\calR(D_1,D_2|P_{X})$.
\begin{definition}[Second-order Region]
\label{defsecond}
A pair $(L_1,L_2)$ is said to be second-order $(R_1^*,R_2^*,D_1,D_2,\epsilon)$-achievable if there exists a sequence of $(n,M_1,M_2)$-codes such that
\begin{align}
\limsup_{n\to\infty}\frac{1}{\sqrt{n}}\left(\log M_1-nR_1^*\right)\leq L_1,\\
\limsup_{n\to\infty}\frac{1}{\sqrt{n}}\left(\log(M_1M_2)-nR_2^*\right)\leq L_2,
\end{align}
and
\begin{align}
\limsup_{n\to\infty}\epsilon_n(D_1,D_2)\leq \epsilon.
\end{align}
The closure of the set of all second-order $(R_1^*,R_2^*,D_1,D_2,\epsilon)$-achievable pairs is called the optimal second-order $(R_1^*,R_2^*,D_1,D_2,\epsilon)$-achievable coding region and denoted as $\calL(R_1^*,R_2^*,D_1,D_2,\epsilon)$.
\end{definition}

We emphasize that we consider the {\em joint} excess-distortion probability~\eqref{defexcessprob} which is consistent with original setting in Rimoldi's work~\cite{rimoldi1994} and the work on error exponents by Kanlis and Narayan~\cite{kanlis1996error}. This is in contrast to the work by  No, Ingber and Weissman who considered separate excess-distortion events and probabilities~\cite[Definition 3]{no2015strong}. That is, they considered the setting in which   the code satisfies
\begin{align}
\epsilon_{1,n}(D_1):=\Pr\left(d_1(X^n,Y^n)>D_1\right)&\le\eta_1,\quad\mbox{and} \label{eqn:eta1}\\*
\epsilon_{2,n}(D_2):=\Pr\left(d_2(X^n,Z^n)>D_2\right)&\le\eta_2\label{eqn:eta2}
\end{align}
for some fixed $(\eta_1,\eta_2)\in (0,1)^2$. We opine that the analysis of the probability of the {\em joint excess-distortion event} in~\eqref{defexcessprob} is also of significant interest. We remark that the first-order fundamental limit (rate region) remains the same~\cite{rimoldi1994,equitz1991successive} regardless whether we consider the joint or the separate excess-distortion probabilities.     However, under joint criterion,  we are able  to obtain new insights about the second-order fundamental limits of the successive refinement problem as can be seen from the example  in Subsection \ref{sec:quat}.

\begin{definition}[Moderate Deviations Constant]
\label{defmdconstant}
Consider any sequence $\{\rho_n\}_{n=1}^{\infty}$ satisfying
\begin{align}
\label{epsilonn}
 \lim_{n\to\infty}\rho_n&=0,\\
 \lim_{n\to\infty}\sqrt{n}\rho_n&=\infty\label{normalmdc}.
\end{align}
Let $\theta_i,~i=1,2$ be two fixed positive real numbers.
A number $\nu$ is said to be a $(R_1^*,R_2^*)$-achievable moderate deviations constant if there exists a sequence of $(n,M_1,M_2)$-codes such that
\begin{align}
\limsup_{n\to\infty}\frac{1}{n\rho_n}(\log M_1-nR_1^*)&\leq \theta_1, \label{eqn:theta_1}\\
\limsup_{n\to\infty}\frac{1}{n\rho_n}(\log(M_1M_2)-nR_2^*)&\leq \theta_2,\label{eqn:theta_2}
\end{align}
and  
\begin{align}
 \liminf_{n\to\infty} -\frac{\log \epsilon_n(D_1,D_2)}{n\rho_n^2}\geq \nu\label{jmdcerror}.
\end{align}
The supremum of all $(R_1^*,R_2^*)$-achievable moderate deviations constants is denoted as $\nu^*(R_1^*,R_2^*|D_1,D_2)$. 
\end{definition}
We remark that the constants $\theta_1$ and $\theta_2$ are present in \eqref{eqn:theta_1} and \eqref{eqn:theta_2} to reflect possibly different speeds of convergence of $\frac{1}{n}\log M_1$ and $\frac{1}{n}\log (M_1 M_2)$ to $R_1^*$ and $R_2^*$ respectively. The speeds are $O(\rho_n)$ but the constants in this $O(\cdot)$  notation  are different.

The central goal of this paper is to characterize $\calL(R_1^*,R_2^*,D_1,D_2,\epsilon)$ and $\nu^*(R_1^*,R_2^*|D_1,D_2)$ for a DMS with arbitrary distortion measures (e.g., a binary source with Hamming distortion measures) and a GMS with quadratic distortion measures. We note that $\calL(R_1,R_2 ,D_1,D_2,\epsilon)$ and $\nu^*(R_1 ,R_2 |D_1,D_2)$  can, in principle, be evaluated for rate pairs that are not on the boundary of the first-order region  $\calR(D_1,D_2|P_{X})$. However, this would lead to degenerate solutions by the achievability of all rate pairs in the interior of  $\calR(D_1,D_2|P_{X})$ and the strong converse  for all rate pairs in the exterior of  $\calR(D_1,D_2|P_{X})$, a direct corollary of our main result in Theorem \ref{mainresult}. Note that the strong converse for the successive refinement problem was originally  established by Rimoldi~\cite{rimoldi1994}.

\subsection{Existing Results}
The optimal rate region for a DMS with arbitrary distortion measures was characterized in~\cite{rimoldi1994}. Let $\calP(P_X,D_1,D_2)$ be the set of joint distributions $P_{XYZ}$ such that the $\calX$-marginal is $P_{X}$, $\mathbb{E}[d_1(X,Y)]\leq D_1$ and $\mathbb{E}[d_{2}(X,Z)]\leq D_2$.
\begin{theorem}
\label{rimoldi}
The optimal $(D_1,D_2)$-achievable rate region for a DMS with arbitrary distortion measures under successive refinement source coding is
\begin{align}
\calR(D_1,D_2|P_{X})=\bigcup_{P_{XYZ}\in\calP(P_X,D_1,D_2)}\left\{(R_1,R_2):R_1\geq I(X;Y),~R_2\geq I(X;YZ)\right\}\label{eqn:opt_rr}.
\end{align}
\end{theorem}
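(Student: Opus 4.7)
The plan is to prove the two directions of Theorem~\ref{rimoldi} by classical information-theoretic arguments: a two-layer superposition random-coding argument for the achievability, and a single-letterization plus distortion-to-expectation conversion for the converse.

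For achievability, fix any $P_{XYZ}\in\calP(P_X,D_1,D_2)$ and any $\delta>0$. I would build a superposition random codebook by first drawing $M_1'=\lceil\exp(n(I(X;Y)+\delta))\rceil$ independent coarse codewords $Y^n(m_1)$ from $\prod_i P_Y$, and then, for each $m_1$, drawing $M_2'=\lceil\exp(n(I(X;Z|Y)+\delta))\rceil$ refinement codewords $Z^n(m_1,m_2)$ from $\prod_i P_{Z|Y}(\cdot\,|\,Y_i(m_1))$. The two encoders perform successive joint-typicality searches. The standard covering lemma applied at each layer shows that suitable indices $(m_1,m_2)$ exist outside a vanishing event; on the jointly typical event the empirical distortions are close to $\mathbb{E}[d_1(X,Y)]\le D_1$ and $\mathbb{E}[d_2(X,Z)]\le D_2$, so $\epsilon_n(D_1,D_2)\to 0$. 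The rates are $R_1=I(X;Y)+\delta$ and $R_2=I(X;Y)+I(X;Z|Y)+2\delta=I(X;YZ)+2\delta$; letting $\delta\downarrow 0$ and taking closure yields every boundary point of the region in~\eqref{eqn:opt_rr}.

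For the converse, take any $(n,M_1,M_2)$ code with $\epsilon_n(D_1,D_2)\to 0$. Set $S_1=f_1(X^n)$, $S_2=f_2(X^n)$, $Y^n=\phi_1(S_1)$, $Z^n=\phi_2(S_1,S_2)$, introduce a time-sharing index $J\sim\Unif[1:n]$ independent of everything else, and let $X=X_J$, $Y=Y_J$, $Z=Z_J$. Memorylessness gives $X\sim P_X$. The standard chain rule plus ``conditioning reduces entropy'' (used once to move from $Y^n$ to $Y_i$, and once more to drop $J$, invoking $X\indep J$) yield
\begin{align}
\log M_1 &\ge H(S_1)\ge I(X^n;Y^n)\ge \sum_{i=1}^n I(X_i;Y_i)=nI(X;Y|J)\ge nI(X;Y),\\
\log(M_1M_2) &\ge H(S_1,S_2)\ge I(X^n;Y^nZ^n)\ge \sum_{i=1}^n I(X_i;Y_iZ_i)=nI(X;YZ|J)\ge nI(X;YZ).
\end{align}
Boundedness of the distortion on finite alphabets (or a simple truncation) then converts $\epsilon_n(D_1,D_2)\to 0$ into $\mathbb{E}[d_1(X,Y)]\le D_1+o(1)$ and $\mathbb{E}[d_2(X,Z)]\le D_2+o(1)$, so the induced $P_{XYZ}$ lies in $\calP(P_X,D_1+o(1),D_2+o(1))$. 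Passing $n\to\infty$ and invoking continuity of the first-order region in $(D_1,D_2)$ closes the argument.

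The step I expect to be the main obstacle is the distortion-conversion step in the converse: on finite alphabets it is routine because $\max_{x,y}d_1(x,y)$ and $\max_{x,z}d_2(x,z)$ are finite, but this same template does not extend to the Gaussian setting treated later in the paper, where the quadratic distortion is unbounded and one cannot naively pass from a probabilistic excess-distortion bound to an expected-distortion bound---this is precisely why a one-shot Kostina--Verd\'u--style converse is eventually required for the GMS. A secondary, bookkeeping-level issue is a Carath\'eodory-type support argument ensuring that the union in~\eqref{eqn:opt_rr} can be taken over a compact set and is therefore closed, so that the theorem statement is meaningful.
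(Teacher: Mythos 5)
Note first that the paper does not re-derive Theorem~\ref{rimoldi}: it is quoted directly from Rimoldi~\cite{rimoldi1994}, so there is no in-paper proof to compare against. Your proposal is the standard textbook derivation and is, in essence, correct. The achievability via superposition coding (a coarse codebook drawn i.i.d.\ from $P_Y$ followed, per coarse codeword, by a refinement codebook drawn i.i.d.\ from $P_{Z|Y}$), and the converse via $\log M_1\ge I(X^n;Y^n)$, $\log(M_1M_2)\ge I(X^n;Y^nZ^n)$, single-letterization with a time-sharing index $J$ independent of the i.i.d.\ source, and the excess-distortion-to-expected-distortion conversion using boundedness of $d_1,d_2$ on finite alphabets, all match the argument in \cite{rimoldi1994} and in standard references.

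Two routine details are worth spelling out. In the achievability step, the covering/typicality argument controls the empirical per-letter distortion only to within an $\eta$-slack of $\mathbb{E}[d_i(X,\cdot)]$, so to guarantee $\Pr(d_i(X^n,\cdot)>D_i)\to 0$ for the fixed $D_i$ you should either start from an interior test channel with $\mathbb{E}[d_i]<D_i$ and appeal to continuity of the region, or run the construction at distortion $D_i+\eta$ and send $\eta\downarrow 0$; a test channel on the boundary $\mathbb{E}[d_i]=D_i$ does not immediately give a vanishing excess-distortion probability. In the converse, the per-$n$ bounds $\frac{1}{n}\log M_1\ge I(X_J;Y_J)$ and $\frac{1}{n}\log(M_1M_2)\ge I(X_J;Y_JZ_J)$ use $n$-dependent induced test channels; to conclude you should extract one convergent subsequence of $P^{(n)}_{YZ|X}$ (the finite-dimensional simplex is compact) along which both mutual informations and both expected distortions converge simultaneously, then use continuity. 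This also dispatches your closedness worry: compactness of $\calP(P_X,D_1,D_2)$ plus continuity of $I(\cdot)$ makes the union in~\eqref{eqn:opt_rr} closed, so equating it to the closure $\calR(D_1,D_2|P_X)$ of Definition~\ref{deffirst} is meaningful. Your observation that the distortion-to-expectation conversion fails for unbounded quadratic distortion is correct, but it is outside the scope of this DMS-specific theorem; the paper indeed handles the Gaussian converse by a different, one-shot route.
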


Now we introduce an important quantity for subsequent analyses for a DMS. Given a rate $R_1$ and distortion pair $(D_1,D_2)$, let the minimum rate $R_2$ such that $(R_1,R_2)\in\calR(D_1,D_2|P_{X})$ be $\rvR(R_1,D_1,D_2|P_X)$, i.e.,
\begin{align}
\label{minimumr2}
\rvR(R_1,D_1,D_2|P_X)
&:=\min\left\{R_2:(R_1,R_2)\in\calR(D_1,D_2|P_{X})\right\}\\*
&=\inf_{\substack{P_{YZ|X}:\mathbb{E}[d_1(X,Y)]\leq D_1\\ \mathbb{E}[d_2(X,Z)]\leq D_2, I(X;Y)\leq R_1}} I(X;YZ)\label{minr2},
\end{align}
where~\eqref{minr2} follows from~\cite[Corollary 1]{kanlis1996error}.

Note that if $R_1<R_Y(P_X,D_1)$, then the convex optimization in~\eqref{minr2} is infeasible, hence $\rvR(R_1,D_1,D_2|P_X)=\infty$. For other cases, since $\rvR(R_1,D_1,D_2|P_X)$ is a convex optimization problem, the minimization in~\eqref{minr2} is attained for some test channel $P_{YZ|X}$ satisfying  
\begin{align}
\sum_{x,y,z}P_{X}(x)P_{YZ|X}(yz|x)d_1(x,y)&=D_1,\\
\sum_{x,y,z}P_{X}(x)P_{YZ|X}(yz|x)d_2(x,z)&=D_2,\\
I(P_X,P_{Y|X})&=R_1.
\end{align}

Now we introduce the notion of a {\em successively refinable source-distortion measure triplet}~\cite{koshelev1981estimation,equitz1991successive}. For such a source-distortion measure triplet, the minimum $R_2$ given $R_1$ in a certain interval is exactly the rate-distortion function (see~\eqref{srminr2} to follow). This reduces the computation of the optimal rate region in~\eqref{eqn:opt_rr}. We recall the definitions with a slight generalization in accordance to~\cite[Definition~2]{no2015strong}.
Let $R_Y(P_X,D_1)$ and $R_Z(P_X,D_2)$ be  the rate-distortion functions~\cite[Chapter~3]{el2011network} when the reproduction alphabets are $\calY$ and $\calZ$ respectively, i.e.,
\begin{align}
R_Y(P_X,D_1) &:=\inf_{P_{Y|X}:\mathbb{E}[d_1(X,Y)]\leq D_1} I(X;Y),\\
R_Z(P_X,D_2)&:=\inf_{P_{Z|X}:\mathbb{E}[d_2(X,Z)]\leq D_2} I(X;Z).
\end{align}

\begin{definition}
Given distortion measures $d_1,d_2$ and a source $X$ with distribution $P_X$. A source-distortion measure triplet $(X,d_1,d_2)$ is said to be $(D_1,D_2)$-successively refinable if $(R_Y(P_X,D_1),R_Z(P_X,D_2))\in\calR(D_1,D_2|P_X)$. If the source-distortion measure triplet is $(D_1,D_2)$-successively refinable for all $(D_1,D_2)$ such that $R_Y(P_X,D_1)<R_Z(P_X,D_2)$, then it  is said to be successively refinable.
\end{definition}
Koshelev~\cite{koshelev1981estimation} presented a sufficient condition for a source-distortion measure triplet to be successively refinable while Equitz and Cover~\cite[Theorem 2]{equitz1991successive} presented a  
necessary and sufficient condition which we reproduce below. 

\begin{theorem}
A memoryless source-distortion measure triplet is successively refinable if and only if there exists a conditional distribution $P_{YZ|X}^*$ such that
\begin{align}
R_Y(P_X,D_1)=I(P_X,P_{Y|X}^*),~\mathbb{E}_{P_X\times P_{Y|X}^*}[d_1(X,Y)]\leq D_1,\\
R_Z(P_X,D_2)=I(P_X,P_{Z|X}^*),~\mathbb{E}_{P_X\times P_{Z|X}^*}[d_2(X,Z)]\leq D_2,
\end{align}
and
\begin{align}
P_{YZ|X}^*=P_{Y|Z}^*P_{Z|X}^*.
\end{align}
\end{theorem}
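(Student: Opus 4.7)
The plan is to prove the two directions separately, using Theorem~\ref{rimoldi} (Rimoldi's characterization of the optimal rate region) as the main workhorse together with the non-negativity of conditional mutual information.

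For the \emph{sufficiency} direction, suppose such a $P_{YZ|X}^*$ exists. I would form the joint distribution $P_{XYZ}^*=P_X\times P_{YZ|X}^*$ and verify that this joint distribution meets both distortion constraints by hypothesis. The key observation is that the factorization $P_{YZ|X}^*=P_{Y|Z}^*P_{Z|X}^*$ says that conditioned on $Z$, the random variable $Y$ is independent of $X$; equivalently, $X-Z-Y$ forms a Markov chain. Consequently, $I(X;Y\mid Z)=0$, whence
\begin{equation*}
I(X;YZ)=I(X;Z)+I(X;Y\mid Z)=I(X;Z)=R_Z(P_X,D_2).
\end{equation*}
Combined with $I(X;Y)=R_Y(P_X,D_1)$, Theorem~\ref{rimoldi} then shows that $(R_Y(P_X,D_1),R_Z(P_X,D_2))\in\calR(D_1,D_2|P_X)$, i.e., $(X,d_1,d_2)$ is $(D_1,D_2)$-successively refinable.

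For the \emph{necessity} direction, assume $(X,d_1,d_2)$ is $(D_1,D_2)$-successively refinable, so that $(R_Y(P_X,D_1),R_Z(P_X,D_2))\in\calR(D_1,D_2|P_X)$. By Theorem~\ref{rimoldi}, there exists a joint distribution $P_{XYZ}^*\in\calP(P_X,D_1,D_2)$ such that
\begin{equation*}
I(X;Y)\leq R_Y(P_X,D_1)\quad\text{and}\quad I(X;YZ)\leq R_Z(P_X,D_2).
\end{equation*}
By definition of the rate-distortion functions and the distortion constraints, the reverse inequalities $I(X;Y)\geq R_Y(P_X,D_1)$ and $I(X;Z)\geq R_Z(P_X,D_2)$ hold. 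Combining these with the chain rule $I(X;YZ)=I(X;Z)+I(X;Y\mid Z)$ yields
\begin{equation*}
R_Z(P_X,D_2)\leq I(X;Z)\leq I(X;YZ)\leq R_Z(P_X,D_2),
\end{equation*}
forcing $I(X;Y\mid Z)=0$. This conditional independence is exactly the Markov condition $X-Z-Y$, which gives the desired factorization $P_{YZ|X}^*=P_{Y|Z}^*P_{Z|X}^*$. The equality $I(X;Y)=R_Y(P_X,D_1)$ and the distortion constraint on $Y$ show that $P_{Y|X}^*$ achieves the rate-distortion function $R_Y(P_X,D_1)$, and similarly $P_{Z|X}^*$ achieves $R_Z(P_X,D_2)$.

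The proof is essentially bookkeeping once Rimoldi's region is invoked; the only subtle point is the equality-in-chain-rule argument that converts the achievability of the corner point into a Markov constraint. I expect no serious obstacle here, although one should be careful to note that the rate-distortion function need not have a unique minimizer, so the statement is the \emph{existence} of a test channel with the claimed factorization, not uniqueness. This matches the ``if and only if'' formulation of the theorem.
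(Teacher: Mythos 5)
The paper does not prove this result; it reproduces it verbatim as Equitz and Cover's Theorem~2 from~\cite{equitz1991successive}, with a pointer to the slight generalization to $d_1\neq d_2$ in~\cite[Theorem 4]{no2015strong}. So there is no paper-internal proof to compare against, and I am assessing your argument on its own terms.

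Your proof is correct, and it is essentially the classical Equitz--Cover argument recast through Rimoldi's single-letter characterization of $\calR(D_1,D_2|P_X)$ (Theorem~\ref{rimoldi}), which is a cleaner entry point than what Equitz and Cover had available in 1991 (Rimoldi's paper postdates theirs). The two load-bearing observations are exactly right: (i) the factorization $P^*_{YZ|X}=P^*_{Y|Z}P^*_{Z|X}$ is equivalent to the Markov chain $X-Z-Y$, i.e.\ $I(X;Y|Z)=0$, so $I(X;YZ)=I(X;Z)$; and (ii) in the necessity direction the sandwich $R_Z(P_X,D_2)\le I(X;Z)\le I(X;YZ)\le R_Z(P_X,D_2)$ forces all inequalities to equality. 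One small point worth making explicit: you implicitly use that the union in Theorem~\ref{rimoldi} is already a closed set (so that achievability of the corner point yields an actual optimizing test channel rather than a limiting sequence). For a DMS with finite alphabets this is immediate from compactness of $\calP(\calY\times\calZ\mid\calX)$ and continuity of the mutual informations and expected distortions, but the theorem as stated applies to general memoryless sources, where attainment requires a regularity assumption (which is tacit in the literature as well). Apart from flagging that hypothesis, the proof is complete and correct.
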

In~\cite{equitz1991successive}, it was shown that a DMS with Hamming distortion measures, a GMS with quadratic distortion measures and a Laplacian source with absolute distortion measures are successively refinable. Note that in the original paper of Equitz and Cover~\cite{equitz1991successive}, the authors only considered $d_1=d_2=d$. However, as pointed out in~\cite[Theorem 4]{no2015strong}, the result holds even when $d_1\neq d_2$. This can be verified easily for a DMS by invoking~\cite[Theorem 1]{rimoldi1994}.

For a successively refinable discrete memoryless source-distortion measure triplet, it is obvious that
\begin{align}
\rvR(R_Y(P_X,D_1),D_1,D_2|P_X)
&=R_Z(P_X,D_2).
\end{align}
Recall that $\rvR(R_1,D_1,D_2|P_X)$ is a non-increasing function of $R_1$. Hence for $R_Y(P_X,D_1)\leq R_1< R_Z(P_X,D_2)$,
\begin{align}
\rvR(R_1,D_1,D_2|P_X)=R_Z(P_X,D_2)\label{srminr2}.
\end{align}
We then recall the definition of distortion-tilted information density~\cite[Definition 1]{kostina2012converse}. Let $P_{Y}^*$ be induced by $P_{Y|X}^*$ which achieves $R_Y(P_X,D_1)$ and $P_Z^*$ be induced by $P_{Z|X}^*$ which achieves   $R_Z(P_X,D_2)$. The $D_1$-tilted information density~\cite{kostina2012converse} is defined as follows:
\begin{align}
\label{kostinatilt}
\jmath_Y(x,D_1|P_X):=-\log \mathbb{E}_{P_{Y}^*}[\exp(-s_1^*(d(x,Y)-D_1))],
\end{align}
where
\begin{align}
s_1^*=-\frac{\partial R_Y(P_X,D)}{\partial D}\bigg|_{D=D_1},
\end{align}
while $\jmath_Z(x,D_2|P_X)$ and $s_2^*$ are defined similarly. The properties of $\jmath_Y(x,D_1|P_X)$ and $\jmath_Z(x,D_2|P_X)$ were derived in~\cite[Properties 1-3]{kostina2012converse} and~\cite[Theorems 2.1 \& 2.2]{kostina2013lossy}

\section{A Discrete Memoryless Source with Arbitrary Distortion Measures}
\label{sec:maindms}
In this section, we consider a DMS in which the alphabets $\calX$, $\calY$ and $\calZ$ are all \emph{finite}.

\subsection{Tilted Information Density}

Throughout the section, we assume that $R_Y(P_X,D_1)\leq R_1^*<R_Z(P_X,D_2)$ and $\calR(D_1,D_2|P_X)$ is smooth on a boundary rate pair $(R_1^*,R_2^*)$ of our interest, i.e.,
\begin{align}
\label{deflambda}
\lambda^*&:=-\frac{\rvR(R,D_1,D_2|P_X)}{\partial R}\bigg|_{R=R_1^*}, 
\end{align}
is well-defined. Note that $\lambda^*\geq 0$ since $\rvR(R_1,D_2,D_2)$ is a convex and non-increasing function in $R_1$. Further, for  a positive distortion pair $(D_1,D_2)$, define
\begin{align}
\nu_1^*&:=-\frac{\rvR(P_X,R_1,D,D_2)}{\partial D}\bigg|_{D=D_1}\label{defnu1},\\*
\nu_2^*&:=-\frac{\rvR(P_X,R_1,D_1,D)}{\partial D}\bigg|_{D=D_2}\label{defnu2}.
\end{align}
Note that for a successively refinable discrete memoryless source-distortion measure triplet, from~\eqref{srminr2}, we obtain $\lambda^*=0$ and $\nu_1^*=0$. Let $P_{YZ|X}^*$ be the optimal test channel achieving $\rvR(R_1,D_1,D_2|P_X)$ in~\eqref{minimumr2} (assuming it is unique)\footnote{If optimal test channels are not unique, then following the proof of \cite[Lemma 2]{watanabe2015second}, we can argue that the tilted information density is still well defined.}. Let $P_{XY}^*,P_{XZ}^*,P_{YZ}^*$, $P_{Y}^*$ and $P_{Y|X}^*$ be the induced (conditional) marginal distributions. We are now ready to define the tilted information density for successive refinement source coding problem.

\begin{definition}
Given a boundary rate pair $(R_1^*,R_2^*)$ and distortion pair $(D_1,D_2)$, define the tilted information density as
\begin{align}
\label{srtilted}
&\nn\jmath_{YZ}(x,R_1^*,D_1,D_2|P_X)\\*
&:=-\log\mathbb{E}_{P_{YZ}^*}\left[\exp\left(-\lambda^*\left(\log\frac{P_{Y|X}^*(Y|x)}{P_Y^*(Y)}-R_1^*\right)-\nu_1^*(d_1(x,Y)-D_1)-\nu_2^*(d_2(x,Z)-D_2)\right)\right].
\end{align}
\end{definition}
We remark that for a successively refinable discrete memoryless source-distortion measure triplet, $\lambda^*=0$, $\nu_1^*=0$. Thus~\eqref{srtilted} reduces to the usual distortion-tilted information density~\eqref{kostinatilt}.

The properties of $\jmath_{YZ}(x,R_1^*,D_1,D_2|P_X)$ are summarized in the following lemma.
\begin{lemma}
\label{propertytilted}
The tilted information density $\jmath_{YZ}(x,R_1^*,D_1,D_2|P_X)$ has the following properties:
\begin{align}
\rvR(R_1^*,D_1,D_2|P_X)=\mathbb{E}_{P_X}\left[\jmath_{YZ}(X,R_1^*,D_1,D_2|P_X)\right]\label{expectlemma},
\end{align}
and for $P_{YZ}^*$-almost every $(y,z)$ and $\lambda^*>0$,
\begin{align}
 \jmath_{YZ}(x,R_1^*,D_1,D_2|P_X) &=\log\frac{P_{YZ|X}^*(y,z|x)}{P_{YZ}^*(y,z)} +\lambda^*\left(\log\frac{P_{Y|X}^*(y|x)}{P_Y^*(y)} -R_1^*\right)\nn\\* & \qquad-\nu_1^*(d_1(x,y)-D_1)-\nu_2^*(d_2(x,z)-D_2)\label{expand}. 
\end{align}
\end{lemma}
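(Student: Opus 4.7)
The plan is to treat the minimization defining $\rvR(R_1^*,D_1,D_2|P_X)$ as a convex program in $P_{YZ|X}$ (since $I(X;YZ)$ and $I(X;Y)$ are convex in $P_{YZ|X}$ for fixed $P_X$ and the feasible set is convex) and to extract both claims from its KKT conditions. By the envelope theorem, the nonnegative Lagrange multipliers attached to the constraints $I(X;Y)\le R_1^*$, $\mathbb{E}[d_1(X,Y)]\le D_1$ and $\mathbb{E}[d_2(X,Z)]\le D_2$ are exactly $\lambda^*,\nu_1^*,\nu_2^*$ from~\eqref{deflambda}--\eqref{defnu2}; let $\mu(x)$ denote additional multipliers enforcing $\sum_{y,z}P_{YZ|X}(y,z|x)=1$ for every $x$.

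For~\eqref{expand}, I would compute the gradient of the Lagrangian
\begin{align*}
L &= I(X;YZ) + \lambda^*\bigl(I(X;Y)-R_1^*\bigr) + \nu_1^*\bigl(\mathbb{E}[d_1(X,Y)]-D_1\bigr) + \nu_2^*\bigl(\mathbb{E}[d_2(X,Z)]-D_2\bigr)\\
  &\quad + \sum_{x}\mu(x)\Bigl(\sum_{y,z}P_{YZ|X}(y,z|x)-1\Bigr)
\end{align*}
with respect to $P_{YZ|X}(y,z|x)$. A standard calculation (in the spirit of Shannon's derivation of the rate-distortion function) shows that the indirect contributions through the induced marginals $P_{YZ}^*$ and $P_Y^*$ cancel the $+1$ terms coming from the log derivative, leaving $\partial I(X;YZ)/\partial P_{YZ|X}(y,z|x) = P_X(x)\log\frac{P_{YZ|X}^*(y,z|x)}{P_{YZ}^*(y,z)}$ and $\partial I(X;Y)/\partial P_{YZ|X}(y,z|x) = P_X(x)\log\frac{P_{Y|X}^*(y|x)}{P_Y^*(y)}$. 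Setting $\partial L/\partial P_{YZ|X}(y,z|x)=0$ on the support of $P_{YZ|X}^*$ and dividing by $P_X(x)>0$ yields the KKT stationarity identity, which, after rearrangement, says that $\log\frac{P_{YZ|X}^*(y,z|x)}{P_{YZ}^*(y,z)}+\lambda^*\log\frac{P_{Y|X}^*(y|x)}{P_Y^*(y)}$ equals $-\mu(x)/P_X(x)$ minus an affine combination of $d_1(x,y)$ and $d_2(x,z)$. Crucially, the $(y,z)$-dependence of the right-hand side is carried entirely by the distortion terms, which is consistent with $\jmath_{YZ}(x,R_1^*,D_1,D_2|P_X)$ being a function of $x$ alone.

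To identify $\mu(x)/P_X(x)$ with problem data, I would exponentiate the KKT stationarity condition, multiply by $P_{YZ}^*(y,z)$ and sum over $(y,z)$; the left-hand side collapses via $\sum_{y,z}P_{YZ|X}^*(y,z|x)=1$, so that the expectation inside~\eqref{srtilted}, stripped of the constant $e^{\lambda^* R_1^* + \nu_1^* D_1 + \nu_2^* D_2}$, equals $e^{\mu(x)/P_X(x)}$. Factoring that constant out of the expectation in~\eqref{srtilted} and taking logarithms then gives $\jmath_{YZ}(x,R_1^*,D_1,D_2|P_X) = -\lambda^* R_1^* - \nu_1^* D_1 - \nu_2^* D_2 - \mu(x)/P_X(x)$; substituting the KKT expression for $-\mu(x)/P_X(x)$ produces the pointwise identity~\eqref{expand}. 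Property~\eqref{expectlemma} then drops out by averaging~\eqref{expand} under $P_{XYZ}^*:=P_X\times P_{YZ|X}^*$: the log-ratio term contributes $I(X;YZ)=\rvR(R_1^*,D_1,D_2|P_X)$, while the other three terms vanish by complementary slackness, since $\lambda^*(I(X;Y)-R_1^*)=\nu_1^*(\mathbb{E}[d_1(X,Y)]-D_1)=\nu_2^*(\mathbb{E}[d_2(X,Z)]-D_2)=0$ at the optimum.

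The main obstacle is handling non-uniqueness of the optimal test channel, flagged by the footnote citing~\cite{watanabe2015second}: when $P_{YZ|X}^*$ is not unique, one must verify that $\jmath_{YZ}$ as defined in~\eqref{srtilted} is nevertheless well-posed (i.e.\ independent of the particular minimizer), by an argument analogous to Watanabe's treatment of the Gray-Wyner problem. A secondary technical subtlety is carefully bookkeeping the cancellations in $\partial I(X;YZ)/\partial P_{YZ|X}(y,z|x)$ and $\partial I(X;Y)/\partial P_{YZ|X}(y,z|x)$ so that no stray additive constants survive; this is routine but easy to misstep.
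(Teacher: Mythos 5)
Your proof is correct and is, at bottom, the same KKT argument as the paper's, just unpacked differently. The paper reframes $\rvR(R_1^*,D_1,D_2|P_X)$ as a joint minimization $\inf_{Q_{YZ}}\inf_{P_{YZ|X}}F(P_{YZ|X},Q_{YZ}|\cdot)$ via the identities $D(P_{YZ|X}\|Q_{YZ}|P_X)=I(X;YZ)+D(P_{YZ}\|Q_{YZ})$ (and its analogue in $Y$), and then applies the log-sum inequality in Lemma~\ref{relatefl}. The equality case of that log-sum inequality is precisely your KKT stationarity condition, and the normalizing factor $\exp(\Lambda(x|\cdot))$ that the paper extracts from the resulting exponential family is exactly what you obtain by exponentiating the stationarity condition, multiplying by $P_{YZ}^*(y,z)$, and summing over $(y,z)$. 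Your calculus-first version is more hands-on and avoids the auxiliary functional $F$ and its lemma, at the cost of needing to carefully track the cancellations in $\partial I(X;YZ)/\partial P_{YZ|X}$ and $\partial I(X;Y)/\partial P_{YZ|X}$, which you correctly anticipate; the paper's route trades that calculus for the log-sum inequality's information-theoretic packaging. Both establish~\eqref{expectlemma} the same way, by averaging the pointwise identity and invoking complementary slackness.

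One discrepancy is worth flagging even though it is not your error. Carrying through your algebra, $\jmath_{YZ}(x,\cdot) = -\lambda^*R_1^*-\nu_1^*D_1-\nu_2^*D_2-\mu(x)/P_X(x)$, and substituting $-\mu(x)/P_X(x)=\log\tfrac{P_{YZ|X}^*(y,z|x)}{P_{YZ}^*(y,z)}+\lambda^*\log\tfrac{P_{Y|X}^*(y|x)}{P_Y^*(y)}+\nu_1^*d_1(x,y)+\nu_2^*d_2(x,z)$ gives the pointwise identity with \emph{plus} signs in front of $\nu_1^*(d_1(x,y)-D_1)$ and $\nu_2^*(d_2(x,z)-D_2)$, whereas~\eqref{expand} as displayed has minus signs. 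The paper's own derivation in Appendix~\ref{proofpropertytilted} arrives at the same plus-sign expression, and this is also what Kostina and Verd\'u's Property~1 gives in the degenerate ($\lambda^*=\nu_1^*=0$) case, so the minus signs in the lemma statement are a typo rather than an indication that your stationarity bookkeeping is off. You have also correctly singled out non-uniqueness of the optimal test channel as the genuine technical subtlety that needs the separate argument from Watanabe cited in the footnote.
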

The proof of Lemma~\ref{propertytilted} is similar to~\cite[Lemma 1]{watanabe2015second} and given in Appendix~\ref{proofpropertytilted}. We remark that for a successively refinable discrete memoryless source-distortion measure triplet,~\eqref{expand} is replaced by~\cite[Property 1]{kostina2012converse}.

We can also relate $\jmath_{YZ}(x,R_1^*,D_1,D_2|P_X)$ to the derivative of $\rvR(R_1^*,D_1,D_2|Q_X)$ with respect to the source distribution $Q_X$ for some $Q_X$ in the neighborhood of $P_X$.
\begin{lemma}
\label{derivativer2}
Suppose that for all $Q_X$ in the neighborhood of $P_X$, $\mathrm{supp}(Q_{YZ}^*)=\mathrm{supp}(P_{YZ}^*)$. Then for all $a\in\calX$,
\begin{align}
\frac{\partial \rvR(R_1^*,D_1,D_2|Q_X)}{\partial Q_X(a)}\bigg|_{Q_X=P_X}=\jmath_{YZ}(a,R_1^*,D_1,D_2|P_X)-(1+\lambda^*).
\end{align}
\end{lemma}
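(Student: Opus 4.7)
My plan is to apply the envelope theorem to the convex program defining $\rvR(R_1^*,D_1,D_2|Q_X)$ in \eqref{minr2}, with the Lagrange multipliers fixed at the values $\lambda^*,\nu_1^*,\nu_2^*$ identified in \eqref{deflambda}--\eqref{defnu2}. Concretely, I would introduce the Lagrangian
\begin{align}
\calL(P_{YZ|X};Q_X) := I_{Q_X}(X;YZ) + \lambda^*\bigl(I_{Q_X}(X;Y)-R_1^*\bigr) + \nu_1^*\bigl(\mathbb{E}_{Q_X}[d_1(X,Y)]-D_1\bigr) + \nu_2^*\bigl(\mathbb{E}_{Q_X}[d_2(X,Z)]-D_2\bigr),
\end{align}
so that at $Q_X=P_X$ it is minimized by the optimal test channel $P^*_{YZ|X}$. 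The support hypothesis $\supp(Q_{YZ}^*)=\supp(P_{YZ}^*)$ on a neighborhood of $P_X$ ensures that $P^*_{YZ|X}$ remains an interior optimum and that the active set of constraints does not switch, so the envelope theorem delivers
\begin{align}
\frac{\partial \rvR(R_1^*,D_1,D_2|Q_X)}{\partial Q_X(a)}\bigg|_{Q_X=P_X} = \frac{\partial \calL(P^*_{YZ|X};Q_X)}{\partial Q_X(a)}\bigg|_{Q_X=P_X}.
\end{align}

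Next I would compute the four partial derivatives on the right-hand side. Using the chain rule together with $\frac{\partial Q_{YZ}(y,z)}{\partial Q_X(a)}=P^*_{YZ|X}(y,z|a)$ and the self-normalization $\sum_{y,z}P^*_{YZ|X}(y,z|a)=1$, the two mutual-information terms produce $D(P^*_{YZ|X=a}\|P^*_{YZ})-1$ and $D(P^*_{Y|X=a}\|P^*_Y)-1$ respectively, while the two expected-distortion terms produce the conditional expectations $\mathbb{E}_{P^*_{Y|X=a}}[d_1(a,Y)]$ and $\mathbb{E}_{P^*_{Z|X=a}}[d_2(a,Z)]$. Combining these with coefficients $1,\lambda^*,\nu_1^*,\nu_2^*$ and pooling the two $-1$ constants furnishes the claimed $-(1+\lambda^*)$ correction. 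Finally, invoking the identity \eqref{expand} of Lemma~\ref{propertytilted} and averaging it over $(y,z)\sim P^*_{YZ|X}(\cdot,\cdot|a)$ rewrites the remaining sum of relative entropies and conditional expected distortions as $\jmath_{YZ}(a,R_1^*,D_1,D_2|P_X)$, completing the identification.

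The main technical hurdle will be justifying the envelope step, which requires verifying that $P^*_{YZ|X}$ depends smoothly on $Q_X$ near $P_X$, that the KKT system persists under local perturbations, and that differentiation and infimum may be interchanged. The support hypothesis in the statement is precisely tailored to this end: it keeps $P^*_{YZ|X}$ in the relative interior of the feasible polytope so that \eqref{expand} continues to hold in a neighborhood of $P_X$. In case the optimal test channel is not unique, I would follow the approach of \cite[Lemma~2]{watanabe2015second} by selecting a representative on which the value function is differentiable. The overall envelope-plus-KKT strategy closely parallels the proof of that lemma for the lossless Gray-Wyner problem, adapted here to accommodate the two expected-distortion constraints appearing in the successive refinement setting.
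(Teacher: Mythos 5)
Your envelope-theorem route is genuinely different from the paper's. The paper instead differentiates the identity $\rvR(R_1^*,D_1,D_2|Q_X)=\sum_x Q_X(x)\,\jmath_{YZ}(x,R_1^*,D_1,D_2|Q_X)$ from \eqref{expectlemma}, reads off the first term $\jmath_{YZ}(a,\ldots|P_X)$, and shows that the residual $\sum_x P_X(x)\,\tfrac{\partial}{\partial Q_X(a)}\jmath_{YZ}(x,\ldots|Q_X)\big|_{Q_X=P_X}$ equals $-(1+\lambda^*)$: it expands $\jmath_{YZ}(x,\ldots|Q_X)$ via \eqref{expand}, observes that the contributions from $\partial\lambda^*_{Q_X}/\partial Q_X(a)$ and $\partial\nu^*_{i,Q_X}/\partial Q_X(a)$ average to zero against the active constraints, and evaluates the two remaining terms as $-1$ and $-\lambda^*$ using the normalizations $\sum_{y,z}Q^*_{YZ|X}(y,z|x)=1$ and $\sum_y Q^*_{Y|X}(y|x)=1$, in the manner of Kostina and Verd\'u.

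There is, however, a genuine gap in your closing identification. The envelope derivative you compute is
\begin{align*}
D\bigl(P^*_{YZ|X=a}\big\|P^*_{YZ}\bigr)-1+\lambda^*\bigl[D\bigl(P^*_{Y|X=a}\big\|P^*_Y\bigr)-1\bigr]+\nu_1^*\,\mathbb{E}_{P^*_{Y|X=a}}[d_1(a,Y)]+\nu_2^*\,\mathbb{E}_{P^*_{Z|X=a}}[d_2(a,Z)],
\end{align*}
because $-\lambda^*R_1^*$, $-\nu_1^*D_1$, $-\nu_2^*D_2$ are constants in $Q_X$ and vanish under $\partial/\partial Q_X(a)$. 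On the other hand, averaging \eqref{expand} over $(y,z)\sim P^*_{YZ|X}(\cdot,\cdot|a)$ (and using the sign $+\nu_i^*$, which is what the proof of Lemma~\ref{propertytilted} in Appendix~A actually delivers; the sign on $\nu_i^*$ as printed in \eqref{expand} is a typo) gives $\jmath_{YZ}(a,\ldots)=D(P^*_{YZ|X=a}\|P^*_{YZ})+\lambda^*[D(P^*_{Y|X=a}\|P^*_Y)-R_1^*]+\nu_1^*[\mathbb{E}[d_1(a,Y)]-D_1]+\nu_2^*[\mathbb{E}[d_2(a,Z)]-D_2]$, which \emph{retains} those constants. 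Hence your calculation produces $\jmath_{YZ}(a,\ldots)-(1+\lambda^*)+\lambda^*R_1^*+\nu_1^*D_1+\nu_2^*D_2$, not the stated $\jmath_{YZ}(a,\ldots)-(1+\lambda^*)$. The discrepancy is a constant in $a$, so it is harmless in the way the lemma is actually used (the gradient is Taylor-paired with $\hat{T}_{X^n}-P_X$, whose entries sum to zero); but to match the literal statement you would either have to work with the paper's specific extension $Q_X\mapsto\sum_x Q_X(x)\jmath_{YZ}(x,\ldots|Q_X)$ of $\rvR$ off the simplex, or explicitly note that the gradient of $\rvR$ on the probability simplex is determined only modulo addition of a constant to all coordinates.
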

The proof of Lemma~\ref{derivativer2} is similar to~\cite[Theorem 2.2]{kostina2013lossy} and given in Appendix~\ref{proofderivativer2}. For successively refinable discrete memoryless source-distortion measure triplets, the proof is exactly the same as~\cite[Theorem 2.2]{kostina2013lossy}. Hence, we remark that Lemma~\ref{derivativer2} is actually an extension of~\cite[Theorem 2.2]{kostina2013lossy}.

\subsection{General Discrete Memoryless Sources}
Define bivariate generalization of the Gaussian cdf as follows:
\begin{align}
\Psi(x,y,\bmu,\mathbf{\Sigma})&:=\int_{-\infty}^{x}\int_{-\infty}^{y}\calN(\bx; \bmu;\bSigma)\, \rmd \bx.
\end{align}
Here, $\calN(\bx; \bmu;\bSigma)$ is the pdf of a bivariate Gaussian with mean $\bmu$ and covariance matrix $\bSigma$~\cite[Chapter 1]{tan2015asymptotic}. Note that $\calN(\cdot ; \bmu;\bSigma)$  is a degenerate Gaussian if $\bSigma$ is singular.
For example if $\mathrm{rank}(\bSigma) =1 $, all the probability mass of the distribution $\calN(\cdot ; \bmu;\bSigma)$  lies on an affine subspace of dimension $1$ in $\bbR^2$.  As such, $\Psi(x,y,\bmu,\mathbf{\Sigma})$ is well-defined even if $\bSigma$ is singular.

Let $\mathrm{V}(D_1|P_X):=\mathrm{Var}[\jmath_Y(X,D_1|P_X)]$ and $\mathrm{V}(D_2|P_X):=\mathrm{Var}[\jmath_Z(X,D_2|P_X)]$ be {\em rate-dispersion functions}~\cite{kostina2012fixed}. 
Given a   rate pair $(R_1^*,R_2^*)$ on the boundary of $\calR(D_1,D_2|P_{X})$, also define another  rate-dispersion function $\mathrm{V}(R_1^*,D_1,D_2|P_X):=\mathrm{Var}\left[\jmath_{YZ}(X,R_1^*,D_1,D_2|P_X)\right]$. 
Let $\mathbf{V}(R_1^*,D_1,D_2|P_X) \succeq 0$ be the covariance matrix of the  two-dimensional random vector $[\jmath_Y(X,D_1|P_X),\jmath_{YZ}(X,R_1^*,D_1,D_2|P_X)]^T$, i.e., the {\em rate-dispersion matrix}.

We impose the following conditions on the rate pair $(R_1^*,R_2^*)$, the distortion measures $(d_1,d_2)$, the distortion levels $(D_1,D_2)$ and the source distribution $P_X$:
\begin{enumerate}
\item \label{cond1} $\rvR(R_1^*,D_1,D_2|P_X)$ is finite;
\item $\lambda^*\geq 0$ in~\eqref{deflambda} and $\nu_i^*,~i=1,2$ in \eqref{defnu1}, \eqref{defnu2} are well-defined (i.e., the derivatives exist);
\item $(Q_X,D_1')\mapsto R_Y(Q_X,D_1')$ is twice differentiable in the neighborhood of $(P_X,D_1)$ and the derivative is bounded (i.e., the spectral norm of the Hessian matrix is bounded);
\item \label{cond3} $(R_1,D_1',D_2',Q_{X})\mapsto \rvR(R_1,D_1',D_2'|Q_{X})$ is twice differentiable in the neighborhood of $(R_1^*,D_1,D_2,P_{X})$ and the derivative is bounded;
\end{enumerate}
We note that similar regularity assumptions were made in other works on second-order asymptotics for lossy source coding~\cite{ingber2011} and lossy joint source-channel coding~\cite{wang2011}. 

\begin{theorem}
\label{mainresult}
Under conditions (\ref{cond1}) to (\ref{cond3}), depending on the values of $(R_1^*,R_2^*)$, the optimal second-order $(R_1^*,R_2^*,D_1,D_2,\epsilon)$ coding region is as follows:
\begin{itemize}
\item Case (i): $R_Y(P_X,D_1)<R_1^*<\rvR(R_1^*,D_1,D_2|P_X)$ and $R_2^*=\rvR(R_1^*,D_1,D_2|P_X)$
\begin{align}
\calL(R_1^*,R_2^*,D_1,D_2,\epsilon)=\left\{(L_1,L_2):\lambda^*L_1+L_2\geq \sqrt{\mathrm{V}(R_1^*,D_1,D_2|P_X)}\mathrm{Q}^{-1}(\epsilon)\right\}.
\end{align}
\item Case (ii): $R_1^*=R_Y(P_X,D_1)$ and $R_2^*>\rvR(R_1^*,D_1,D_2|P_X)$
\begin{align}
\calL(R_1^*,R_2^*,D_1,D_2,\epsilon)=\left\{(L_1,L_2):L_1 \geq \sqrt{\mathrm{V}(D_1|P_X)} \rm\rmQ^{-1}(\epsilon)\right\}.
\end{align}
\item Case (iii): $R_1^*=R_Y(P_X,D_1)$, $R_2^*=\rvR(R_1^*,D_1,D_2|P_X)$  and $\mathrm{rank}(\mathbf{V}(R_1^*,D_1,D_2|P_X))\ge 1$,
\begin{align}
\calL(R_1^*,R_2^*,D_1,D_2,\epsilon)=\left\{(L_1,L_2): \Psi(L_1,\lambda^*L_1+L_2,\mathbf{0},\mathbf{V}(R_1^*,D_1,D_2|P_X))\geq 1-\epsilon\right\}. \label{eqn:calLiii}
\end{align}
\end{itemize} 
\end{theorem}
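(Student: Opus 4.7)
\textbf{Proof plan for Theorem~\ref{mainresult}.}
The plan is to handle achievability and converse in parallel, with the three cases (i)--(iii) distinguished by which of the two boundary constraints $R_1\geq R_Y(P_X,D_1)$ and $R_2\geq \rvR(R_1^*,D_1,D_2|P_X)$ is active at the operating point $(R_1^*,R_2^*)$. The common strategy has three steps: (a) reduce the analysis to the behaviour of the coding rates as functionals of the empirical distribution $\hat{T}_{X^n}$ via two-stage type-covering (or type-based converse) arguments; (b) linearise $R_Y(\cdot,D_1)$ and $\rvR(R_1^*,D_1,D_2|\cdot)$ around $P_X$ using Lemma~\ref{derivativer2}, which identifies the Fr\'echet derivatives as centred versions of the tilted information densities $\jmath_Y$ and $\jmath_{YZ}$; and (c) invoke the (multivariate) Berry--Esseen theorem on the resulting normalised i.i.d.\ sums.

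\emph{Achievability.} For the direct part, I would first apply the Kanlis--Narayan type covering lemma to produce, for each type $P\in\calP_n(\calX)$, a first-stage codebook of size $\exp(nR_Y(P,D_1)+O(\log n))$ covering $\calT_P$ within $D_1$, and then use the refined two-stage covering lemma \cite[Lemma~8]{no2015strong} to extend each first-stage codeword to $\exp(n[\rvR(R_Y(P,D_1),D_1,D_2|P)-R_Y(P,D_1)]+O(\log n))$ second-stage codewords covering the associated $D_1$-ball within $D_2$. Choosing $\log M_1=nR_1^*+\sqrt{n}\,L_1$ and $\log(M_1M_2)=nR_2^*+\sqrt{n}\,L_2$, the joint excess-distortion event is contained (up to lower-order terms) in the event that either $R_Y(\hat{T}_{X^n},D_1)>R_1^*+L_1/\sqrt{n}$ or $\rvR(R_1^*,D_1,D_2|\hat{T}_{X^n})>R_2^*+L_2/\sqrt{n}$. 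A second-order Taylor expansion of these two functionals at $P_X$ via Lemma~\ref{derivativer2} rewrites this as a deviation event for the bivariate i.i.d.\ vector $(\jmath_Y(X_i,D_1|P_X),\jmath_{YZ}(X_i,R_1^*,D_1,D_2|P_X))$; Berry--Esseen---univariate for Cases (i)--(ii), multivariate~\cite{Ben03} for Case (iii)---then yields the stated Gaussian expressions.

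\emph{Converse.} For the converse I would adapt the perturbation approach of Gu and Effros~\cite{wei2009strong}. The output of that approach is, for each type $P$ in an $O(1/\sqrt{n})$-neighbourhood of $P_X$ and for a dominating fraction of $x^n\in\calT_P$, the per-type lower bounds $\log M_1\geq nR_Y(P,D_1)-O(\log n)$ and $\log(M_1M_2)\geq n\rvR(R_1^*,D_1,D_2|P)-O(\log n)$. Combining these with the same Taylor linearisation via Lemma~\ref{derivativer2} and the same Berry--Esseen step as in achievability produces an outer bound that matches the direct part, since both directions use the same parametrisation of $(L_1,L_2)$ and the same pair of tilted information densities.

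\emph{Case analysis and main obstacle.} Cases (i) and (ii) are ultimately one-dimensional: Case (i) depends only on $\jmath_{YZ}$ through the single linear combination $\lambda^*L_1+L_2$, which is the direction normal to the smooth boundary at $(R_1^*,R_2^*)$, while Case (ii) reduces to the standard point-to-point lossy dispersion in $L_1$ alone because the sum-rate constraint is slack. I expect the main difficulty to lie in Case (iii), the corner where both boundary constraints bind simultaneously. There the analysis becomes genuinely bivariate and requires the multivariate Berry--Esseen theorem applied to $(\jmath_Y(X,D_1|P_X),\jmath_{YZ}(X,R_1^*,D_1,D_2|P_X))$, whose covariance matrix $\mathbf{V}(R_1^*,D_1,D_2|P_X)$ may be rank-deficient, so $\Psi(\cdot,\cdot,\mathbf{0},\mathbf{V})$ has to be read as the cdf of a possibly degenerate bivariate Gaussian. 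In parallel, the Gu--Effros perturbation step must be sharpened at the corner to preclude any trade-off that loosens one rate constraint in order to tighten the other---this is exactly what prevents $\calL(R_1^*,R_2^*,D_1,D_2,\epsilon)$ from collapsing to a rectangle and gives rise to the curved region in~\eqref{eqn:calLiii}.
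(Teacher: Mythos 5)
Your plan has the right skeleton --- two-stage type covering for the direct part, the Gu--Effros perturbation argument for the converse, linearisation of $R_Y(\cdot,D_1)$ and $\rvR(\cdot,D_1,D_2|\cdot)$ around $P_X$ via the tilted-information derivatives, and Berry--Esseen --- and that is indeed the route the paper takes. There is, however, a genuine gap in how you pass from the code parameters to the deviation event. You write the achievability event as $\{R_Y(\hat T_{X^n},D_1)>R_1^*+L_1/\sqrt n\}\cup\{\rvR(R_1^*,D_1,D_2|\hat T_{X^n})>R_2^*+L_2/\sqrt n\}$ and, in your coding scheme, attach $\exp\bigl(n[\rvR(R_Y(P,D_1),D_1,D_2|P)-R_Y(P,D_1)]+O(\log n)\bigr)$ second-stage codewords to each first-stage codeword. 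In both places the first argument of $\rvR$ is wrong: it should be the first-stage rate actually allotted by the code, $R_{1,n}=R_1^*+L_1/\sqrt n+O(\log n/n)$, not $R_1^*$ and not the minimal covering rate $R_Y(P,D_1)$. The difference $R_{1,n}-R_1^*=L_1/\sqrt n$ is \emph{not} a lower-order term --- it lives exactly at the second-order scale, and passing it through the derivative of $\rvR$ in its first argument contributes $-\lambda^*L_1/\sqrt n$, which is precisely what turns $L_2\ge\sqrt{\rmV}\rmQ^{-1}(\epsilon)$ into the stated $\lambda^*L_1+L_2\ge\sqrt{\rmV}\rmQ^{-1}(\epsilon)$ in Case (i), and turns $\Psi(L_1,L_2,\cdot,\cdot)$ into $\Psi(L_1,\lambda^*L_1+L_2,\cdot,\cdot)$ in Case (iii). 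Operationally, this encodes the fact that you should inflate the first-stage cover to the full budget $M_1$ rather than cover $\calT_P$ minimally, thereby reducing the required second-stage budget via the monotone decrease of $\rvR$ in $R_1$.

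The same issue infects your converse step: $\log(M_1M_2)\ge n\rvR(R_1^*,D_1,D_2|P)-O(\log n)$ should read $\log(M_1M_2)\ge n\rvR(R_{1,n},D_{1,n},D_{2,n}|P)-O(\log n)$ (with the small distortion slacks $D_{i,n}=D_i+O(1/n)$ introduced by the perturbation argument); your version overstates what the type-based strong converse delivers when $\rvR$ is strictly decreasing. You do assert the correct final form $\lambda^*L_1+L_2$ and correctly attribute it to the normal direction of the boundary, but that conclusion cannot be extracted from the deviation events as you have written them --- the $\lambda^*$ must emerge from Taylor-expanding $\rvR(R_{1,n},\cdot,\cdot|\hat T_{X^n})$ jointly in its first argument (evaluated at $R_1^*$, yielding $-\lambda^*$) and in the type (yielding $\jmath_{YZ}$), not from Taylor-expanding only in the type with the first argument frozen at $R_1^*$.
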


The proof of Theorem~\ref{mainresult} is provided  in Section~\ref{secondproof}.  A few remarks are in order. 

First, in both Cases (i) and (ii), the code is operating at a rate bounded away from one of the first-order fundamental limits. Hence, a univariate Gaussian suffices to characterize the second-order behavior. In contrast, for Case (iii), the code is operating at precisely the two first-order fundamental limits. Hence,  in general, we need a bivariate Gaussian to characterize the second-order behavior. Using an argument by Tan and Kosut~\cite[Theorem 6]{tan2014dispersions}, we note that  this result holds for both positive definite and rank deficient  rate-dispersion matrices $\mathbf{V}(R_1^*,D_1,D_2|P_X)$. However, we exclude the  degenerate case in which $\mathrm{rank}(\mathbf{V}(R_1^*,D_1,D_2|P_X))=0$. Note that if the rank of $\bV(R_1^*,D_1,D_2|P_X)$ is $0$, it means that the dispersion matrix is all zeros matrix, i.e., $\rmV(D_1|P_X)=0$, $\rmV(R_1^*,D_1,D_2|P_X)=0$ and $\mathrm{Cov}[\jmath_Y(x,D_1|P_X),\jmath_{YZ}(x,R_1^*,D_1,D_2|P_X)]=0$. This implies that $\jmath_Y(x,D_1|P_X)$ and $\jmath_{YZ}(x,R_1^*,D_1,D_2|P_X)$ are both deterministic random variables. In this case, the second-order term (dispersion) vanishes and if one seeks refined asymptotic estimates for the optimal finite blocklength  coding rates, one  would then be interested to analyze the {\em third-order} or $\Theta(\log n)$ asymptotics (cf.\ \cite[Theorem 18]{kostina2012fixed}). This, however, is beyond the scope of the present work.

Second, in Section \ref{srsource}, we illustrate the region in \eqref{eqn:calLiii} for successively refinable source-distortion measure triplets where the computation of $\mathbf{V}(R_1^*,D_1,D_2|P_X)$ is simplified. In principle, we can numerically evaluate the region $\calL(R_1^*,R_2^*,D_1,D_2,\epsilon)$ for non-successively refinable source-distortion measure triplets such as the one identified by Equitz and Cover in~\cite[Section IV]{equitz1991successive}, which is based on Gerrish's problem~\cite{gerrish}. However, the computations of $\rvR(R_1,D_1,D_2|P_X)$ (defined in \eqref{minr2}) and the optimal test channel $P_{YZ|X}^*$ are numerically unstable using off-the-shelf convex optimization software such as CVX~\cite{cvx}. One may need to develop specialized  Blahut-Arimoto-type algorithms~\cite[Chapter~8]{csiszar2011information} to solve for the optimal test channel. This is again  beyond the scope of this paper.

We are now ready to present our moderate deviation result. Define
\begin{align}
\theta=\lambda^*\theta_1+\theta_2.
\end{align}
\begin{theorem}
\label{mdconstant}
Given a rate pair $(R_1^*,R_2^*)\in\calR(D_1,D_2|P_X)$ satisfying that the conditions in Theorem~\ref{mainresult}, under the assumptions that $\mathrm{V}(D_1|P_X)>0$ and $\mathrm{V}(R_1^*,D_1,D_2|P_X)>0$, depending on the values of $(R_1^*,R_2^*)$, we have
\begin{itemize}
\item Case (i): $R_Y(P_X,D_1)<R_1^*<\rvR(R_1^*,D_1,D_2|P_X)$ and $R_2^*=\rvR(R_1^*,D_1,D_2|P_X)$
\begin{align}
\nu^*(R_1^*,R_2^*|D_1,D_2)=\frac{\theta^2}{2\mathrm{V}(R_1^*,D_1,D_2|P_X)}.
\end{align}
\item Case (ii): $R_1^*=R_Y(P_X,D_1)$ and $R_2^*>\rvR(R_1^*,D_1,D_2|P_X)$
\begin{align}
\nu^*(R_1^*,R_2^*|D_1,D_2)=\frac{\theta_1^2}{2\mathrm{V}(D_1|P_X)}.
\end{align}
\item Case (iii): $R_1^*=R_Y(P_X,D_1)$ and $R_2^*=\rvR(R_1^*,D_1,D_2|P_X)$
\begin{align}
\nu^*(R_1^*,R_2^*|D_1,D_2)=\min\left\{\frac{\theta_1^2}{2\mathrm{V}(D_1|P_X)},\frac{\theta^2}{2\mathrm{V}(R_1^*,D_1,D_2|P_X)}\right\}.
\end{align}
\end{itemize} 
\end{theorem}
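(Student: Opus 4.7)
The plan is to parallel the proof of the second-order Theorem~\ref{mainresult}, replacing the multivariate Berry-Esseen theorem by the moderate deviations principle (MDP) of \cite[Theorem~3.7.1]{dembo2009large}. For i.i.d.\ zero-mean random variables of variance $\sigma^2$, the MDP yields $\Pr\bigl(\tfrac{1}{n\rho_n}\sum_{i=1}^n Y_i\ge\theta\bigr)=\exp\bigl(-n\rho_n^2\tfrac{\theta^2}{2\sigma^2}(1+o(1))\bigr)$, which will play exactly the role that $\rmQ(\cdot)$ did in Theorem~\ref{mainresult}. Throughout I choose the code sizes so that $\log M_1 = nR_1^* + n\rho_n\theta_1 + o(n\rho_n)$ and $\log(M_1 M_2) = nR_2^* + n\rho_n\theta_2 + o(n\rho_n)$.

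For achievability I would apply the type-covering lemma of No, Ingber and Weissman to build a codebook for each type $Q_X$: decoder~$1$ covers $\calT_{Q_X}$ whenever $\log M_1\geq nR_Y(Q_X,D_1)+O(\log n)$ and decoder~$2$ whenever $\log(M_1 M_2)\geq n\rvR(\tfrac{1}{n}\log M_1,D_1,D_2|Q_X)+O(\log n)$. Expanding $R_Y(Q_X,D_1)$ and $\rvR(R_1,D_1,D_2|Q_X)$ to first order about $(P_X,R_1^*)$ using Lemma~\ref{derivativer2} and standard rate-distortion sensitivity formulas, and absorbing the shift $\rvR(R_1^*+\rho_n\theta_1,D_1,D_2|P_X)\approx R_2^*-\lambda^*\rho_n\theta_1$, the joint excess-distortion event is contained (up to polynomial prefactors) in the union $\calE_1\cup\calE_2$ where
\begin{align}
\calE_1 &:= \Bigl\{\tfrac{1}{n}\sum_{i=1}^n \jmath_Y(X_i,D_1|P_X) > R_Y(P_X,D_1) + \rho_n\theta_1 + o(\rho_n)\Bigr\},\\
\calE_2 &:= \Bigl\{\tfrac{1}{n}\sum_{i=1}^n \jmath_{YZ}(X_i,R_1^*,D_1,D_2|P_X) > \rvR(R_1^*,D_1,D_2|P_X) + \rho_n(\lambda^*\theta_1+\theta_2) + o(\rho_n)\Bigr\}.
\end{align}
In Case~(i) only $\calE_2$ is binding at leading order; in Case~(ii) only $\calE_1$; in Case~(iii) both are. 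Applying the MDP to each event and the union bound $\Pr(\calE_1\cup\calE_2)\le 2\max(\Pr(\calE_1),\Pr(\calE_2))$ gives the claimed upper bound, since $\tfrac{\log 2}{n\rho_n^2}=o(1)$. The minimum in Case~(iii) arises because the slower-decaying of the two events dominates the union.

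For the converse I would follow the perturbation argument of Gu and Effros used for Theorem~\ref{mainresult}. A type-counting argument shows that any code meeting \eqref{eqn:theta_1}--\eqref{eqn:theta_2} must fail on \emph{all} source sequences whose empirical type $Q_X$ lies in a certain region; Lemma~\ref{derivativer2} translates this region into a half-space condition on $\tfrac{1}{n}\sum_i\jmath_Y(X_i,D_1|P_X)$ (for the $R_1$-constraint) and/or on $\tfrac{1}{n}\sum_i\jmath_{YZ}(X_i,R_1^*,D_1,D_2|P_X)$ (for the sum-rate constraint). Since $\epsilon_n(D_1,D_2)\ge\Pr(\calE_j)$ for each $j\in\{1,2\}$, the MDP lower bound on each individual probability yields $\epsilon_n(D_1,D_2)\ge \exp(-n\rho_n^2\nu_j(1+o(1)))$, and taking the smaller $\nu_j$ produces the asserted $\nu^*$.

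I expect the main obstacle to be the converse in Case~(iii). Unlike the bivariate Gaussian in~\eqref{eqn:calLiii} which packages both constraints into a single cdf, the moderate deviations regime decouples them into two separate exponents, and one must lower-bound $\Pr(\calE_1)$ \emph{and} $\Pr(\calE_2)$ individually and argue that the larger of the two controls the overall error probability. This requires uniform control of the Taylor remainders for $R_Y(Q_X,D_1)$ and $\rvR(R_1,D_1,D_2|Q_X)$ over a neighborhood of $P_X$ of radius $O(\rho_n)$; the smoothness hypotheses together with the positivity assumptions $\mathrm{V}(D_1|P_X)>0$ and $\mathrm{V}(R_1^*,D_1,D_2|P_X)>0$ guarantee that these remainders are $o(\rho_n)$ and hence do not perturb the leading exponent.
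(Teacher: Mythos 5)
Your proposal is correct and follows essentially the same route as the paper: type covering for achievability, the Gu--Effros type-based strong converse, Taylor expansion via the tilted information densities, and the moderate deviations theorem of Dembo and Zeitouni replacing Berry--Esseen, with the union bound $\Pr(\calE_1\cup\calE_2)\leq 2\max(\Pr(\calE_1),\Pr(\calE_2))$ and the elementary bound $\epsilon_n\geq\max\{\Pr(\calE_1),\Pr(\calE_2)\}$ yielding the $\min$ in Case~(iii). The one implementation point you gloss over is that the radius of the set of typical types must be tuned (via the Weissman et al.\ concentration inequality) so that the atypical-set probability decays at least as fast as the target exponent, but your observation that the Taylor remainders are $o(\rho_n)$ on an $O(\rho_n)$-neighborhood is exactly the right ingredient to carry this out.
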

Again a few remarks are in order. 

First, Theorem~\ref{mdconstant}  can be proved similarly as in~\cite{tan2012moderate} using Euclidean Information Theory~\cite{borade2008}. However, in Section \ref{proofmdc}, we use the moderate deviations principle/theorem (cf.\ Dembo and Zeitouni~\cite[Theorem~3.7.1]{dembo2009large}). We remark that the moderate deviations result for DMSes in Tan~\cite{tan2012moderate} for the point-to-point lossy source coding problem requires that $\frac{n\rho_n^2}{\log n}\to\infty$ as $n\to\infty$. However, our proof only requires the  condition that $n\rho_n^2\to \infty$ as $n\to\infty$. 
The additional $\log n$ in the condition for the sequence $\{\rho_n\}_{n\ge 1}$ in \cite{tan2012moderate} results from the fact that the proof therein is based heavily on the method of types and the type counting lemma. Instead, if we use the information spectrum method together with  properties of the $D$-tilted information density (cf.\ Kostina and Verd\'u~\cite{kostina2012fixed}), we only require that $n\rho_n^2\to\infty$. Furthermore, Tan's result in \cite{tan2012moderate} is  a corollary of Theorem  \ref{mdconstant} since the point-to-point lossy source coding problem is a special case of the successive refinement problem.

Second, from both theorems, we observe that the rate-dispersion functions  $\mathrm{V}(R_1^*,D_1,D_2|P_X),\mathrm{V}(D_1|P_X)$ and the rate-dispersion matrix $\mathbf{V}(R_1^*,D_1,D_2|P_X)$ are essential in characterizing the fundamental limits of the successive refinement problem. 

Third, we remark that similar results (at least for  the achievability part) can be established under the separate excess-distortion probabilities criterion~\cite{no2015strong}. We discuss this in greater detail after Corollary \ref{srmdc} in the sequel for successively refinable discrete memoryless sources for which the converse is implied by the point-to-point lossy source coding results.

Finally, we remark that the two rate-dispersion functions $\rmV(D_1|P_X)$ and $\rmV(R_1^*,D_1,D_2|P_X)$ can be related with the error exponent functions in \cite{kanlis1996error}, similarly to  how the channel dispersion and the channel coding error exponent are connected \cite{altugwagner2014}. In particular, in \cite[Proposition 2]{ingber2011} (see also \cite[Lemma 2]{tan2012moderate}), it has been shown that
\begin{align}
\rmV(D_1|P_X)=\left[\frac{\partial^2 F(R_1,D_1|P_X)}{\partial R_1^2}\bigg|_{R_1=R_Y(P_X,D_1)}\right]^{-1},
\end{align}
where $F(R_1,D_1|P_X)$ is Marton's exponent for lossy source coding~\cite{Marton74}. In a completely analogous manner, one can show that
\begin{align}
\rmV(R_1^*,D_1,D_2|P_X)=\left[\frac{\partial^2 F_{\rmc}(R_2,D_1,D_2|R_1^*,P_X)}{\partial R_2^2}\bigg|_{R_2=\rvR(R_1^*,D_1,D_2|P_X)}\right]^{-1},
\end{align}
where $F_{\rmc}(R_2,D_1,D_2|R_1^*,P_X)$ is the  conditional  error exponent for the second decoder in successive refinement problem~\cite{kanlis1996error}. Note that in \cite{kanlis1996error}, the error exponent under the joint excess-distortion probability criterion is given by the minimum of $F(R_1,D_1|P_X)$ and $F_{\rmc}(R_2,D_1,D_2|R_1^*,P_X)$. Hence, from case (iii) in Theorem \ref{mdconstant}, we conclude that our moderate deviations constant result is parallel to  the error exponent result in \cite{kanlis1996error}.

\subsection{Successively Refinable Discrete Memoryless Sources}
\label{srsource}
In this subsection, we specialize the results in Theorem~\ref{mainresult} and~\ref{mdconstant} to successively refinable discrete memoryless source-distortion measure triplets. Note that for such source-distortion measure triplets, $\rvR(R_1^*,D_1,D_2|P_X)=R_Z(P_X,D_2)$ if $R_Y(P_X,D_1)\leq R_1^*<R_Z(P_X,D_2)$. Hence, $\lambda^*=0$ and $\nu^*_1=0$ and $\jmath(X,R_1^*,D_1,D_2|P_X)=\jmath_Z(X,D_2|P_X)$. The covariance matrix $\mathbf{V}(R_1^*,D_1,D_2|P_X)$ is also simplified to $\mathbf{V}(D_1,D_2|P_X)$ with diagonal elements being $\mathrm{V}(D_1|P_X)$ and $\mathrm{V}(D_2|P_X)$ and off-diagonal element being $\mathrm{Cov}[\jmath_Y(X,D_1|P_X),\jmath_Z(X,D_2|P_X)]$. The conditions in Theorem~\ref{mainresult} are also now simplified to: $(Q_X,D_1')\mapsto R_Y(Q_X,D_1')$ and $(Q_X,D_2')\mapsto R_Z(Q_X,D_2')$  are twice differentiable in the neighborhood of $(P_X,D_1,D_2)$ and the derivatives are bounded.
\begin{corollary}
\label{srmainresult}
Under the conditions stated above, depending on $(R_1^*,R_2^*)$, the optimal second-order $(R_1^*,R_2^*,D_1,D_2,\epsilon)$ coding region for a successively refinable discrete memoryless source-distortion measure triplet is as follows:
\begin{itemize}
\item Case (i): $R_Y(P_X,D_1)<R_1^*<R_Z(P_X,D_2)$ and $R_2^*=R_Z(P_X,D_2)$
\begin{align}
\calL(R_1^*,R_2^*,D_1,D_2,\epsilon)=\left\{(L_1,L_2):L_2 \geq \sqrt{\mathrm{V}(D_2|P_X)} \rm\rmQ^{-1}(\epsilon)\right\}.
\end{align}
\item Case (ii): $R_1^*=R_Z(P_X,D_2)$ and $R_2^*>R_Z(P_X,D_2)$
\begin{align}
\calL(R_1^*,R_2^*,D_1,D_2,\epsilon)=\left\{(L_1,L_2):L_1 \geq \sqrt{\mathrm{V}(D_1|P_X)} \rm\rmQ^{-1}(\epsilon)\right\}.
\end{align}
\item Case (iii): $R_1^*=R_Z(P_X,D_2)$ and $R_2^*=R_Z(P_X,D_2)$ and $\mathrm{rank}(\mathbf{V}(D_1,D_2|P_X))\geq 1$,
\begin{align}
\calL(R_1^*,R_2^*,D_1,D_2,\epsilon)=\left\{(L_1,L_2): \Psi(L_1,L_2,\mathbf{0},\mathbf{V}(D_1,D_2|P_X))\geq 1-\epsilon\right\}.\label{eqn:pd}
\end{align}
Specifically, if $\mathbf{V}(D_1,D_2|P_X)=\mathrm{V}(D_1|P_X)\cdot \mathrm{ones}(2,2)$, or equivalently $\jmath_{Y}(X,D_1|P_X)-R_1^*=\jmath_Z(X,D_2|P_X)-R_2^*$ almost surely
\begin{align}
\calL(R_1^*,R_2^*,D_1,D_2,\epsilon)=\left\{(L_1,L_2):\min\{L_1,L_2\} \geq \sqrt{\mathrm{V}(D_1|P_X)} \rm\rmQ^{-1}(\epsilon)\right\}\label{eqn:ones}.
\end{align}
\end{itemize} 
\end{corollary}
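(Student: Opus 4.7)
The plan is to derive Corollary~\ref{srmainresult} as a direct specialization of Theorem~\ref{mainresult}, by exploiting the structural simplifications afforded by successive refinability, and then to handle the degenerate rank-one subcase in~(iii) separately by direct computation of a bivariate Gaussian cdf supported on a line.

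First I would pin down the simplification of all the objects appearing in Theorem~\ref{mainresult}. By successive refinability together with~\eqref{srminr2}, the map $(R_1,D_1') \mapsto \rvR(R_1, D_1', D_2|P_X)$ is locally constant (equal to $R_Z(P_X, D_2)$) on the relevant region where $R_Y(P_X, D_1') \le R_1 < R_Z(P_X, D_2)$, while $D_2' \mapsto \rvR(R_1^*, D_1, D_2'|P_X)$ agrees locally with $R_Z(P_X, \cdot)$. Hence $\lambda^* = 0$ and $\nu_1^* = 0$ (taking right derivatives as necessary at the boundary), while $\nu_2^*$ equals the slope $s_2^*$ of the point-to-point rate-distortion function $R_Z(P_X, \cdot)$ at $D_2$. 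Substituting these into the definition~\eqref{srtilted} of $\jmath_{YZ}$, the only surviving term in the exponent is $-\nu_2^*(d_2(x,Z) - D_2)$; marginalizing $P_{YZ}^*$ over $\calY$ then yields $\jmath_{YZ}(x, R_1^*, D_1, D_2|P_X) = \jmath_Z(x, D_2|P_X)$ as defined in~\eqref{kostinatilt}. Consequently $\rmV(R_1^*, D_1, D_2|P_X) = \rmV(D_2|P_X)$ and the $2\times 2$ rate-dispersion matrix coincides with $\mathbf{V}(D_1, D_2|P_X)$ in the statement.

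With these identifications, Case~(i) follows by substituting $\lambda^* = 0$ into the half-plane condition of Theorem~\ref{mainresult}(i), which collapses to $L_2 \ge \sqrt{\rmV(D_2|P_X)}\,\rmQ^{-1}(\epsilon)$; Case~(ii) is inherited verbatim from Theorem~\ref{mainresult}(ii); and the generic subcase of Case~(iii) follows by substituting $\lambda^* = 0$ into~\eqref{eqn:calLiii}, which reduces exactly to~\eqref{eqn:pd}.

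The only step requiring a short separate argument is the rank-one special case in~\eqref{eqn:ones}. When $\mathbf{V}(D_1,D_2|P_X) = \rmV(D_1|P_X)\cdot\mathrm{ones}(2,2)$, both diagonal entries and the off-diagonal entry equal $\rmV(D_1|P_X)$, forcing the correlation coefficient between $\jmath_Y(X, D_1|P_X)$ and $\jmath_Z(X, D_2|P_X)$ to be $1$; equivalently, $\jmath_Y(X, D_1|P_X) - R_1^* = \jmath_Z(X, D_2|P_X) - R_2^*$ almost surely, since their difference is a zero-mean random variable with zero variance. The bivariate Gaussian with covariance $\mathbf{V}(D_1,D_2|P_X)$ is therefore supported on the diagonal $\{(g,g): g \in \bbR\}$, so $\Psi(L_1, L_2, \mathbf{0}, \mathbf{V}(D_1, D_2|P_X)) = \Pr(G \le L_1, G \le L_2) = \Pr(G \le \min\{L_1, L_2\})$ for $G \sim \calN(0, \rmV(D_1|P_X))$, and the inequality reduces to~\eqref{eqn:ones}. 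The main obstacle is purely bookkeeping: one must check that the locally constant behavior of $\rvR$ is compatible with the differentiability hypotheses of Theorem~\ref{mainresult} (so that they genuinely hold here), and invoke the extension of Theorem~\ref{mainresult}(iii) to rank-deficient dispersion matrices noted in the remark immediately following Theorem~\ref{mainresult}.
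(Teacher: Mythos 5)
Your proposal is correct, and for the bulk of the corollary (Cases (i), (ii), and the generic form \eqref{eqn:pd} of Case (iii)) it matches the paper's approach exactly: specialize Theorem~\ref{mainresult} by setting $\lambda^* = 0$, $\nu_1^* = 0$, identifying $\jmath_{YZ}(x,R_1^*,D_1,D_2|P_X)$ with $\jmath_Z(x,D_2|P_X)$, and hence $\rmV(R_1^*,D_1,D_2|P_X) = \rmV(D_2|P_X)$ and $\mathbf{V}(R_1^*,D_1,D_2|P_X) = \mathbf{V}(D_1,D_2|P_X)$.

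Where you diverge from the paper is in the treatment of the rank-one subcase \eqref{eqn:ones}. The paper (Section~\ref{proofsrmain}) re-derives this from scratch: it returns to the achievability and converse information-spectrum bounds \eqref{eqn:dmsach} and \eqref{eqn:dmscon}, uses the pointwise identity $\jmath_Y(X_i,D_1|P_X) - R_1^* = \jmath_Z(X_i,D_2|P_X) - R_2^*$ to collapse the joint event to a univariate one via $\Pr(X<a,\,X<b)=\Pr(X<\min\{a,b\})$, and then re-applies the one-dimensional Berry--Esseen theorem. You instead observe that \eqref{eqn:ones} is a purely algebraic specialization of \eqref{eqn:pd}: since $\mathbf{V}(D_1,D_2|P_X)$ has rank one and is proportional to $\mathrm{ones}(2,2)$, the degenerate bivariate Gaussian is supported on the diagonal $\{(g,g)\}$, so $\Psi(L_1,L_2,\mathbf{0},\mathbf{V}(D_1,D_2|P_X)) = \Pr(G\le\min\{L_1,L_2\})$ with $G\sim\calN(0,\rmV(D_1|P_X))$, whence the condition $\Psi\ge 1-\epsilon$ is equivalent to $\min\{L_1,L_2\}\ge\sqrt{\rmV(D_1|P_X)}\,\rmQ^{-1}(\epsilon)$. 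Because Theorem~\ref{mainresult} Case (iii) is already stated (and justified by the Tan--Kosut projection argument) for all $\mathrm{rank}(\mathbf{V})\ge 1$, your shortcut is legitimate and avoids duplicating the Berry--Esseen work; the paper's route has the modest advantage of being self-contained at the level of the excess-distortion bounds, but both are sound. Your remark on verifying that the locally constant behaviour of $\rvR$ is compatible with the differentiability hypotheses is well-placed and corresponds to the paper's restatement of the regularity conditions for the successively refinable case.
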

Corollary~\ref{srmainresult} results from specializations of Theorem~\ref{mainresult}. The special case in \eqref{eqn:ones} is proved in Section~\ref{proofsrmain}. We notice that the expressions in the second-order regions are simplified for successively refinable discrete memoryless source-distortion measure triplets. In particular, the optimization to compute the optimal test channel $P_{YZ|X}^*$ in $\rvR(R_1, D_1, D_2|P_X)$, defined in \eqref{minimumr2}--\eqref{minr2}, is no longer necessary since the Markov chain $X-Z-Y$ holds for $P_{YZ|X}^*$~\cite{equitz1991successive}.

The case in~\eqref{eqn:ones} pertains, for example, to a binary source with Hamming distortion measures. For such a source-distortion measure triplet, $\mathbf{V}(D_1,D_2|P_X)$ is rank $1$ and proportional to the all ones matrix. See Subsection \ref{sec:bin}. The result in~\eqref{eqn:ones} implies that both excess-distortion events in~\eqref{defexcessprob} are perfectly correlated so the one consisting of the {\em smaller} second-order rate $L_i,~i=1,2$ dominates, since the first-order rates are fixed at the first-order fundamental limits $(R_Y(P_X, D_1),R_Z(P_X, D_2))$. In fact, our result in \eqref{eqn:ones} specializes to the scenario where one considers the {\em separate} excess-distortion criterion~\cite{no2015strong} in~\eqref{eqn:eta1}--\eqref{eqn:eta2} with $\eta_1=\eta_2=\epsilon$ and $\rmV(D_1|P_X)=\rmV(D_2|P_X)$.   More importantly, the case in \eqref{eqn:pd} when $\mathbf{V}(D_1,D_2|P_X)$ is full rank pertains to a source-distortion measure triplets with more ``degrees-of-freedom''. See Subsection \ref{sec:quat} for a concrete example.  Thus our work is a strict generalization of that in \cite{no2015strong}.

\begin{corollary}
\label{srmdc}
Under the conditions in Theorem~\ref{srmainresult} and the assumptions that $\mathrm{V}(D_i|P_X)>0,~i=1,2$, depending on $(R_1^*,R_2^*)$, the moderate deviations constant for a successively refinable discrete memoryless source-distortion measure triplet is as follows:
\begin{itemize}
\item Case (i): $R_Y(P_X,D_1)<R_1^*<R_Z(P_X,D_2)$ and $R_2^*=R_Z(P_X,D_2)$
\begin{align}
\nu^*(R_1^*,R_2^*|D_1,D_2)=\frac{\theta_2^2}{2\rmV(D_2|P_X)}.
\end{align}
\item Case (ii): $R_1^*=R_Z(P_X,D_2)$ and $R_2^*>R_Z(P_X,D_2)$
\begin{align}
\nu^*(R_1^*,R_2^*|D_1,D_2)=\frac{\theta_1^2}{2\rmV(D_1|P_X)}.
\end{align}
\item Case (iii): $R_1^*=R_Z(P_X,D_2)$ and $R_2^*=R_Z(P_X,D_2)$
\begin{align}
\nu^*(R_1^*,R_2^*|D_1,D_2)=\min\left\{\frac{\theta_1^2}{2\rmV(D_1|P_X)},\frac{\theta_2^2}{2\rmV(D_2|P_X)}\right\}.
\end{align}
\end{itemize} 
\end{corollary}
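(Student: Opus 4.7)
The plan is to obtain Corollary \ref{srmdc} as a direct specialization of Theorem \ref{mdconstant}, exploiting the simplifications that occur for successively refinable source-distortion measure triplets. The essential observation is that when $(X,d_1,d_2)$ is $(D_1,D_2)$-successively refinable and $R_Y(P_X,D_1)\le R_1^*<R_Z(P_X,D_2)$, equation \eqref{srminr2} gives $\rvR(R_1^*,D_1,D_2|P_X)=R_Z(P_X,D_2)$, which is flat in $R_1$ and equal to $R_Z(P_X,D_2)$ independently of $D_1$. Therefore, by the definitions \eqref{deflambda} and \eqref{defnu1}, we have $\lambda^*=0$ and $\nu_1^*=0$, and hence $\theta=\lambda^*\theta_1+\theta_2=\theta_2$.

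Next I would substitute these values into the tilted information density in \eqref{srtilted}. Because the Markov chain $X-Z-Y$ holds under the optimal test channel $P_{YZ|X}^*$ (by Equitz--Cover), and $\lambda^*=\nu_1^*=0$, the exponent inside the expectation in \eqref{srtilted} collapses to $-\nu_2^*(d_2(x,Z)-D_2)$ and the expectation over $P_{YZ}^*$ reduces to an expectation over $P_Z^*$. Comparing to \eqref{kostinatilt}, this yields $\jmath_{YZ}(X,R_1^*,D_1,D_2|P_X)=\jmath_Z(X,D_2|P_X)$ almost surely, and consequently the scalar rate-dispersion function satisfies $\mathrm{V}(R_1^*,D_1,D_2|P_X)=\mathrm{V}(D_2|P_X)$.

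With these identifications in hand, the three cases of Corollary \ref{srmdc} follow by direct substitution into the corresponding three cases of Theorem \ref{mdconstant}. In case (i), $\theta^2/(2\mathrm{V}(R_1^*,D_1,D_2|P_X))=\theta_2^2/(2\mathrm{V}(D_2|P_X))$; case (ii) is unchanged since it only involves $\mathrm{V}(D_1|P_X)$ and $\theta_1$; in case (iii), the minimum becomes $\min\{\theta_1^2/(2\mathrm{V}(D_1|P_X)),\theta_2^2/(2\mathrm{V}(D_2|P_X))\}$, which is the announced expression. Before invoking Theorem \ref{mdconstant}, I would also verify the hypotheses, namely conditions \textbf{(1)}--\textbf{(4)} preceding Theorem \ref{mainresult}: finiteness of $\rvR$ and boundedness/differentiability of $R_Y(Q_X,D_1')$ and $\rvR(R_1,D_1',D_2'|Q_X)$ near the nominal values. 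These reduce, via \eqref{srminr2}, to the smoothness of the two rate-distortion functions $R_Y(\cdot,\cdot)$ and $R_Z(\cdot,\cdot)$ that is precisely the hypothesis stated just before Corollary \ref{srmainresult}; the positivity assumption $\mathrm{V}(D_i|P_X)>0$ for $i=1,2$ supplies the nondegeneracy needed in Theorem \ref{mdconstant}.

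The only mildly delicate point, and the one I would write out carefully, is the reduction $\jmath_{YZ}(X,R_1^*,D_1,D_2|P_X)=\jmath_Z(X,D_2|P_X)$: although it is intuitive once $\lambda^*=\nu_1^*=0$ and the Markov chain $X-Z-Y$ holds, it requires that the optimal $P_{YZ|X}^*$ achieving $\rvR(R_1^*,D_1,D_2|P_X)$ has $Z$-marginal equal to the optimizer $P_Z^*$ of $R_Z(P_X,D_2)$, and that $\nu_2^*$ in \eqref{defnu2} matches $s_2^*$ in the definition of $\jmath_Z$. Both are standard consequences of the successive refinability condition and the envelope theorem applied to \eqref{minr2}. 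Once this is established, no further analysis is needed beyond quoting Theorem \ref{mdconstant}.
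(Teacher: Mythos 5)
Your proposal is correct and matches the paper's approach: the paper itself states that Corollary~\ref{srmdc} is a "simple specialization of Theorem~\ref{mdconstant}," obtained from the simplifications $\lambda^*=0$, $\nu_1^*=0$, $\jmath_{YZ}(X,R_1^*,D_1,D_2|P_X)=\jmath_Z(X,D_2|P_X)$ (hence $\mathrm{V}(R_1^*,D_1,D_2|P_X)=\mathrm{V}(D_2|P_X)$), and $\theta=\theta_2$, exactly as you derive. Your careful note about verifying that the optimizer's $Z$-marginal equals $P_Z^*$ and that $\nu_2^*=s_2^*$ spells out a step the paper leaves implicit, but the substance is the same.
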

Corollary~\ref{srmdc} is a simple specialization of Theorem~\ref{mdconstant}.

We remark that if we consider the separate excess-distortion probability criterion, similar moderate deviations results  can be established. Recall that  the separate excess-distortion probabilities are defined as in \eqref{eqn:eta1} and \eqref{eqn:eta2}.
An $(R_1^*,R_2^*)$-achievable moderate deviations constant pair $(\nu_1,\nu_2)$ can be defined similarly as Definition \ref{defmdconstant} except that we replace \eqref{jmdcerror} with the following two constraints:
\begin{align}
\liminf_{n\to\infty}-\frac{\log \epsilon_{i,n}(D_i)}{n\rho_n^2}\geq \nu_i,~i=1,2.
\end{align}
We denote the  closure of all $(R_1^*,R_2^*)$-achievable moderate deviations constants pairs as $\calV_{\mathrm{sep}}(R_1^*,R_2^*|D_1,D_2)$. Following similar proof techniques as Theorem \ref{mdconstant}, one can easily conclude that
\begin{itemize}
\item Case (i):
\begin{align}
\calV_{\mathrm{sep}}(R_1^*,R_2^*|D_1,D_2):=\Big\{(\nu_1,\nu_2)
\in\bbR_+^2:  \nu_2\leq \frac{\theta_2^2}{2\rmV(D_2|P_X)}\Big\}
\end{align}
\item Case (ii):
\begin{align}
\calV_{\mathrm{sep}}(R_1^*,R_2^*|D_1,D_2):=\Big\{(\nu_1,\nu_2)
\in\bbR_+^2:\nu_1\leq \frac{\theta_1^2}{2\rmV(D_1|P_X)}\Big\}
\end{align}
\item Case (iii):
\begin{align}
\calV_{\mathrm{sep}}(R_1^*,R_2^*|D_1,D_2):=\Big\{(\nu_1,\nu_2)
\in\bbR_+^2:\nu_1\leq \frac{\theta_1^2}{2\rmV(D_1|P_X)},\nu_2\le\frac{\theta_2^2}{2\rmV(D_2|P_X)}\Big\}.
\end{align}
\end{itemize}
The above result is tight since the converse is implied by the converse  for the point-to-point lossy source coding problem~\cite{ingber2011}.

\subsection{Numerical Examples}
\label{sec:examples}
Recall that any discrete memoryless source with Hamming distortion measures is successively refinable~\cite{equitz1991successive}. In this subsection, we present two numerical examples from Kostina and Verd\'u~\cite{kostina2012fixed} to illustrate Corollary~\ref{srmainresult}. To be consistent with~\cite{kostina2012fixed}, we will use logarithm with base $2$ in this subsection.
\subsubsection{A Binary Memoryless Source with Hamming Distortion Measures} \label{sec:bin}
Fix $p\in[0,1]$. We consider a binary source with $P_X(0)=p$. For any distortion levels $D_2<D_1<p$, we obtain from~\cite[Example 1]{kostina2012fixed} that
\begin{align}
 \jmath_Y(x,D_1|P_X)=\imath_{P_X}(x)- h(D_1),\\
 \jmath_Z(x,D_2|P_X)=\imath_{P_X}(x)- h(D_2),
\end{align}
where $\imath_{P_X}(x)=\log\frac{1}{P_X(x)}$ and
$h(x):=-x\log x-(1-x)\log(1-x)$ is the binary entropy function. Hence,
\begin{align}
\mathrm{V}(D_1|P_X)=\mathrm{V}(D_2|P_X)=\mathrm{Cov}[\jmath_Y(X,D_1|P_X),\jmath_Z(X,D_2|P_X)].
\end{align}
and the rate-dispersion matrix is
\begin{align}
\mathbf{V}(D_1,D_2|P_X)=\mathrm{V}(D_1|P_X)\cdot \mathrm{ones}(2,2)
 = p(1-p)\log^2\left(\frac{1-p}{p}\right) \cdot \mathrm{ones}(2,2) , \label{eqn:binary}
\end{align} which does not depend on $(D_1, D_2)$. 
From the above considerations, we see that a binary source with Hamming distortion measures is an example that falls under \eqref{eqn:ones} in Corollary~\ref{srmainresult}.
\subsubsection{A Quaternary Memoryless Source with Hamming Distortion Measures} \label{sec:quat}
\begin{figure}[t]
\centering 
\includegraphics[width=10cm]{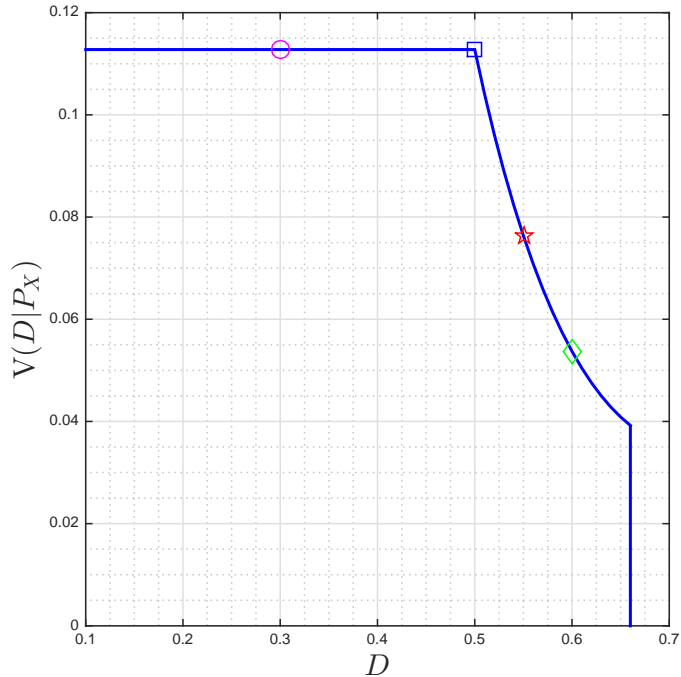}
\caption{Rate-dispersion  function $\rmV(D|P_X)$  for the source $P_X=[1/3,1/4,1/4,1/6]$~\cite[Section~VII.B]{kostina2012fixed} as a function of the distortion $D$.}
\label{rate_dispersion}
\end{figure}

We now consider a more interesting source with joint excess-distortion probability upper bounded by $\epsilon=0.005$. In particular, we consider a quaternary memoryless source with distribution $P_X=[1/3,1/4,1/4,1/6]$. This example illustrates Case (iii)  of Corollary \ref{srmainresult} and is adopted from~\cite[Section~VII.B]{kostina2012fixed}. The expressions for  the rate-distortion function  and the distortion-tilted information density are given in~\cite[Section~VII.B]{kostina2012fixed} (and will not be reproduced here as they are not important for our discussion). Since $\jmath_Y(x,D_1|P_X)=\jmath_Z(x,D_2|P_X)$ when $D_1=D_2=D$, we use $\jmath(x,D|P_X)$ to denote the common value of the distortion-tilted information density. Similarly, let $\mathrm{V}(D|P_X)$ be the common value of $\mathrm{V}(D_1|P_X)$ and $\mathrm{V}(D_2|P_X)$ when $D_1=D_2=D$. As shown in Figure~\ref{rate_dispersion} (reproduced from~\cite[Section~VII.B,~Figure~4]{kostina2012fixed}), the rate-dispersion function $\mathrm{V}(D|P_X)$ is dependent on  the distortion level $D$, unlike the binary example in Section \ref{sec:bin}.
\begin{figure}[t]
\centering
\includegraphics[width=10cm]{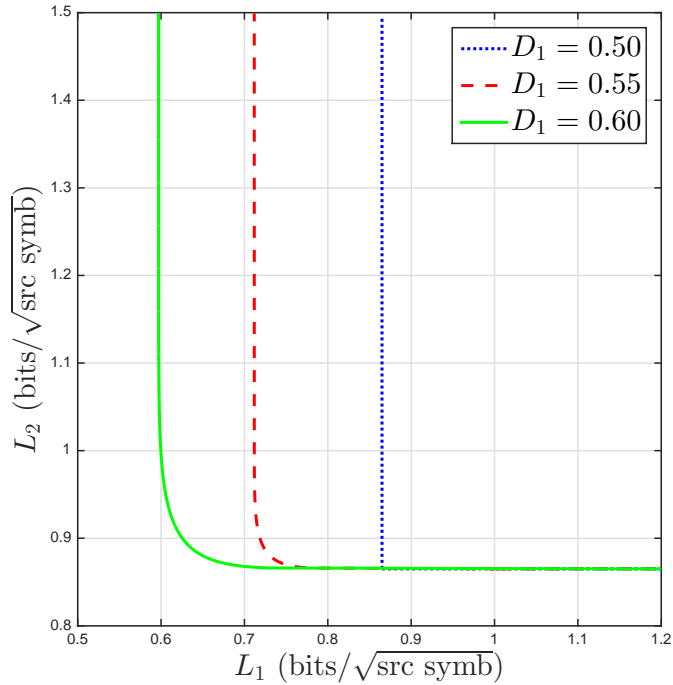}
\caption{Boundaries of the second-order coding region $\calL(R_1^*,R_2^*,D_1,D_2,\epsilon)$ for Case (iii) in Corollary~\ref{srmainresult}. The regions are to the top right of the boundaries.}
\label{plot_srregion}
\end{figure}

In this numerical example, we fix $D_2=0.3$, which is denoted by the circle in Figure~\ref{rate_dispersion}. Then we decrease $D_1$ from $0.6$ to $0.55$ and finally to $0.5$. These points are denoted respectively by the diamond, the pentagram and the square in Figure~\ref{rate_dispersion}. Given these values of $(D_1,D_2)$, we plot the second-order coding rate for Case (iii) of Corollary~\ref{srmainresult} in Figure~\ref{plot_srregion}. 

From Figure~\ref{plot_srregion}, we make the following observations and conclusions.
\begin{itemize}
\item The minimum $L_1$ converges to $\sqrt{V(D_1|P_X)}\rmQ^{-1}(\epsilon)$ as $L_2\uparrow \infty$. This is because large $L_2$, the bivariate Gaussian cdf degenerates to the univariate Gaussian cdf with mean $0$ and variance $\mathrm{V}(D_1|P_X)$. 

\item As we decrease the value of $D_1$, the second-order coding region shrinks. We remark that there is  a  transition from \eqref{eqn:pd} with $\mathrm{rank}(\bV(D_1,D_2|P_X))=2$ to \eqref{eqn:ones} (where $\mathrm{rank}(\bV(D_1,D_2|P_X))=1$) as we decrease $D_1$ with the critical value of $D_1$ being $0.5$.

\item When $D_2<D_1\le 0.5$, the rate-dispersion matrix $\mathbf{V}(D_1,D_2|P_X)$ is rank $1$ (and proportional to the all ones matrix). Correspondingly, the result in \eqref{eqn:ones} applies. Here,  the second-order region is a (unbounded) rectangle with a  sharp corner at the left bottom since the smaller $L_i,~i=1,2$ dominates. The second-order region remains unchanged as we decrease $D_1$ towards $D_2$ for fixed $D_2=0.3$. 

\item When $0.5<D_1<2/3$, the result in \eqref{eqn:pd}  with $\mathrm{rank}(\bV(D_1,D_2|P_X))=2$ applies. In this case, neither $L_1$ nor $L_2$ dominates. The second-order coding rates $(L_1,L_2)$  are coupled  together by the  full rank   rate-dispersion matrix $\mathbf{V}(D_1,D_2|P_X)$, resulting the smooth boundary at the left bottom. 
 
\end{itemize}

We conclude that depending on the value of the distortion levels, the rate-dispersion matrix is rank $1$ or rank $2$, illustrating Case (iii) of Corollary~\ref{srmainresult}. These interesting observations cannot be gleaned from the work of No, Ingber and Weissman~\cite{no2015strong} in which the separate excess-distortion criteria are employed for the successive refinement problem. When $\bV(D_1,D_2|P_X)$ is rank $1$, exactly one excess-distortion event dominates the probability in~\eqref{defexcessprob} entirely; when $\bV(D_1,D_2|P_X)$ is rank $2$, both excess-distortion events contribute non-trivially to the probability and a bivariate Gaussian is required to characterize the second-order fundamental limit.

\subsection{A One-Shot Converse Bound and an Alternative Converse Proof of Corollary~\ref{srmainresult}}
To conclude this section, we present a one-shot converse bound which generalizes the one-shot lower bounds on the excess-distortion probabilities for point-to-point lossy source coding and source coding with side information in~\cite{kostina2012converse}. Note that this converse bound is not useful to prove to the converse part for the general DMS case (of non-successively refinable source-distortion measure triplets) in Theorem \ref{mainresult}. For that we need to use a  strong converse technique of Gu and Effros~\cite{wei2009strong}, leading to the type-based ``strong converse'' in Lemma \ref{typestrongconverse}. However, this one-shot converse may be of independent interest (to other multi-terminal rate-distortion problems) and leads immediately to the converse parts of Corollary~\ref{srmainresult}.

\begin{lemma}
\label{tiltedconverse}
For any $(n,M_1,M_2)$-code  for the successive refinement problem with $n=1$ and any $\gamma_1\geq 0,~\gamma_2\geq 0$, we have
\begin{align}
\epsilon_n(D_1,D_2)
\geq \Pr\big(\jmath_Y(X,D_1|P_X) &\geq \log M_1+\gamma_1~\mathrm{or}\nn\\*
~\jmath_Z(X,D_2|P_X)&\geq \log(M_1M_2)+\gamma_2\big)\nn\\*
&\hspace{-.5in}-\exp(-\gamma_1)-\exp(-\gamma_2).
\end{align}
\end{lemma}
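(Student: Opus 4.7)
The plan is to adapt the one-shot converse of Kostina and Verd\'u~\cite[Theorem~1]{kostina2012converse} to the two-decoder successive refinement setting by establishing separate tilted-information bounds for each decoder's excess-distortion event and then gluing them together by a simple set-theoretic argument. With $n=1$, I would introduce the excess-distortion events $\mathcal{E}_1 := \{d_1(X,\phi_1(f_1(X))) > D_1\}$ and $\mathcal{E}_2 := \{d_2(X,\phi_2(f_1(X),f_2(X))) > D_2\}$, so that $\epsilon_n(D_1,D_2) = \Pr(\mathcal{E}_1 \cup \mathcal{E}_2)$, together with the tilted-information events $\mathcal{B}_1 := \{\jmath_Y(X,D_1|P_X) \ge \log M_1 + \gamma_1\}$ and $\mathcal{B}_2 := \{\jmath_Z(X,D_2|P_X) \ge \log(M_1 M_2)+\gamma_2\}$. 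The target bound rearranges to $\Pr(\mathcal{B}_1 \cup \mathcal{B}_2) \le \Pr(\mathcal{E}_1 \cup \mathcal{E}_2) + e^{-\gamma_1} + e^{-\gamma_2}$, which would follow immediately from the set inclusion
\begin{equation*}
\mathcal{B}_1 \cup \mathcal{B}_2 \;\subseteq\; (\mathcal{B}_1 \cap \mathcal{E}_1^{\mathrm{c}}) \cup (\mathcal{B}_2 \cap \mathcal{E}_2^{\mathrm{c}}) \cup (\mathcal{E}_1 \cup \mathcal{E}_2)
\end{equation*}
combined with two single-decoder estimates $\Pr(\mathcal{B}_i \cap \mathcal{E}_i^{\mathrm{c}}) \le e^{-\gamma_i}$ for $i=1,2$.

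Each single-decoder estimate reduces to exactly the inequality at the heart of Kostina and Verd\'u's point-to-point converse. Specifically, from the properties of the distortion-tilted information density I would extract the exponential inequality $\mathbb{E}_{P_X}\!\left[\exp(\jmath_Y(X,D_1|P_X))\,\mathbf{1}\{d_1(X,y) \le D_1\}\right] \le 1$, valid for every fixed $y \in \mathcal{Y}$, together with its analogue for $\jmath_Z$. I would then decompose
\begin{equation*}
\Pr(\mathcal{B}_1 \cap \mathcal{E}_1^{\mathrm{c}}) \;=\; \sum_{m_1=1}^{M_1} \Pr\!\bigl(\mathcal{B}_1,\; X \in f_1^{-1}(m_1),\; d_1(X,\phi_1(m_1)) \le D_1\bigr),
\end{equation*}
bound the indicator of $\mathcal{B}_1$ by $e^{-(\log M_1+\gamma_1)}\exp(\jmath_Y(X,D_1|P_X))$, and invoke the exponential inequality for each codeword $\phi_1(m_1)$. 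Summing over the $M_1$ codewords cancels the $M_1^{-1}$ factor and leaves exactly $e^{-\gamma_1}$. The argument for decoder~2 is identical but uses the enlarged reproduction set $\{\phi_2(m_1,m_2) : (m_1,m_2) \in [1{:}M_1]\times[1{:}M_2]\}$, of cardinality at most $M_1 M_2$; this is precisely the source of the threshold $\log(M_1 M_2)$ inside $\mathcal{B}_2$.

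No step is expected to be a genuine obstacle: the lemma is a clean two-decoder generalization of a standard one-shot argument, since each decoder's excess-distortion event is controlled independently at the level of its own tilted-information density and the two are combined only through a union-bound inclusion. The only mild subtlety is that the tilted information densities appearing here are the point-to-point ones of~\eqref{kostinatilt} and not the joint $\jmath_{YZ}$ of~\eqref{srtilted}; this is appropriate because, in the converse, each decoder is analyzed in isolation as a point-to-point lossy code of cardinality $M_1$ and $M_1 M_2$, respectively, and the joint tilted information density is never required.
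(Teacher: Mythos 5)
Your proposal is correct and follows essentially the same route as the paper: both decompose the tilted-information event into pieces intersected with the complementary excess-distortion events, and bound each piece by $e^{-\gamma_i}$ using the exponential inequality for the point-to-point $D$-tilted information density, with the $M_1$ and $M_1M_2$ factors arising from the sizes of the two codebooks. The only cosmetic difference is that you bound $\Pr(\mathcal{B}_i\cap\mathcal{E}_i^{\mathrm{c}})$ by summing over deterministic codewords, whereas the paper phrases the same estimate via a Markov-inequality argument over (possibly stochastic) random transformations $P_{U|X}, P_{V|X}, P_{Z|UV}$ and cites \cite[Theorem 1]{kostina2012converse} for the first decoder; the content and the resulting bound are identical.
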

The proof of Lemma is provided in Appendix~\ref{prooftiltedconverse}.

For a memoryless source $X^n$, it is clear that
\begin{align}
\jmath_Y(X^n,D_1|P_{X}^n)
&=\sum_{i=1}^n\jmath_Y(X_i,D_1|P_X),\label{eqn:tiltex1}
\end{align}
and similarly for $\jmath_Z(X^n,D_2|P_{X}^n)$.  
Let $\gamma_1=\gamma_2=\frac{1}{2}\log n$. Let $\log M_1=nR_1^*+L_1\sqrt{n}-\gamma_1$ and $\log(M_1M_2)=nR_2^*+L_2\sqrt{n}-\gamma_2$. Invoking Lemma~\ref{tiltedconverse}, we obtain
\begin{align}
1-\epsilon_n(D_1,D_2)
\leq \Pr\left(\sum_{i=1}^n\jmath_Y(X_i,D_1|P_X)<nR_1^*+L_1\sqrt{n},~\sum_{i=1}^n\jmath_Z(X_i,D_2|P_X)<nR_2^*+L_2\sqrt{n}\right)+\frac{2}{\sqrt{n}}\label{ntiltedconverse}.
\end{align}
The rest of the proof is similar to the converse proof of Corollary~\ref{srmainresult} (in Section~\ref{proofsrmain}). We remark that this alternative converse proof holds also sources with arbitrary alphabets such as a GMS with quadratic distortion measures and a Laplacian source with absolute distortion measures~\cite{zhong2006type}. Indeed, we use this one-shot converse bound to prove the converse part of our Gaussian results in Sections~\ref{sec:prf_g_2} and \ref{sec:prf_g_m}.

\section{Proof of Second-Order Asymptotics for A DMS}
\label{secondproof}
\subsection{Achievability Coding Theorem}
\label{secondach}
We make use of the type covering lemma~\cite[Lemma 8]{no2015strong}, which is modified from~\cite[Lemma 1]{kanlis1996error}. Leveraging  the type covering lemma, we can then upper bound the excess-distortion probability. Finally, we Taylor expand appropriate terms and invoke the Berry-Essen theorem to obtain an achievable second-order coding region.

Define two constants:
\begin{align}
c_1&=4|\calX|\cdot|\calY|+9,\\*
c_2&=6|\calX|\cdot|\calY|\cdot|\calZ|+2|\calX|\cdot|\calY|+17.
\end{align}
We are now ready to recall the \emph{discrete} type covering lemma for successive refinement source coding in~\cite{kanlis1996error} and~\cite{no2015strong}.
\begin{lemma}
\label{typecovering}
Given type $Q_X\in\calP_n(\calX)$, for all $R_1\geq R_Y(Q_X,D_1)$, the following holds:
\begin{itemize}
\item There exists a set $\calB_Y\subset\calY^n$ such that 
\begin{align}
\frac{1}{n}\log|\calB_{Y}|\leq R_1+c_1\frac{\log n}{n}
\end{align}
and $\calB_Y$ $D_1$-covers $\calT_{Q_X}$, i.e.,
\begin{align}
\calT_{Q_{X}}\subset\bigcup_{y^n\in\calB_{Y}}\calN_1(y^n,D_1),
\end{align}
where 
\begin{align}
\calN_{1}(y^n,D_1):=\left\{x^n:d_{1}(x^n,y^n)\leq D_1\right\}.
\end{align}
\item For each $x^n\in\calT_{Q}$ and each $y^n\in\calB_1$, there exists a set $\calB_{Z}(y^n)\subset\calZ^n$ such that
\begin{align}
\frac{1}{n}\log \left(\sum_{y^n\in\calB_{Y}}|\calB_{Z}(y^n)|\right)\leq \rvR(R_1,D_1,D_2|Q_X)+c_2\frac{\log n}{n}
\end{align}
and $\calB_{Z}(y^n)$ $D_2$-covers $\calN_1(y^n,D_1)$, i.e.,
\begin{align}
\calN_1(y^n,D_1)\subset\bigcup_{z^n\in\calB_{Z}(y^n)}\calN_2(z^n,D_2),
\end{align}
where
\begin{align}
\calN_2(z^n,D_2):=\left\{x^n:d_{2}(x^n,z^n)\leq D_2\right\}.
\end{align}
\end{itemize}
\end{lemma}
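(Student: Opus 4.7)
The plan is to prove the two parts of Lemma~\ref{typecovering} sequentially, following the classical method-of-types covering argument and its nested extension due to Kanlis and Narayan~\cite{kanlis1996error}.

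For part (i), I would apply the standard point-to-point type covering lemma. First, snap the optimizer of $R_Y(Q_X,D_1)$ to a conditional $n$-type $V_{Y|X}^*$ on $\calT_{Q_X}$ satisfying $\mathbb{E}[d_1(X,Y)]\leq D_1$ and $I(Q_X,V_{Y|X}^*)\leq R_Y(Q_X,D_1)+O((\log n)/n)$; the slack comes from rounding the conditional entries to multiples of $1/n$. Draw a random codebook of $M_1=\lceil\exp(nR_1+c_1\log n)\rceil$ codewords i.i.d.\ uniformly from $\calT_{P_Y^*}$, where $P_Y^*=Q_XV_{Y|X}^*$. For any fixed $x^n\in\calT_{Q_X}$, the probability that a single random $y^n$ is jointly $V_{Y|X}^*$-typical with $x^n$ (which forces $d_1(x^n,y^n)\leq D_1$) is at least $\exp(-nI(Q_X,V_{Y|X}^*)-O(\log n))$ by standard type-class size estimates. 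With $c_1$ chosen sufficiently large, the probability that all $M_1$ codewords miss $x^n$ is at most $\exp(-nH(Q_X)-1)$; a union bound over the $|\calT_{Q_X}|\leq\exp(nH(Q_X))$ sequences in $\calT_{Q_X}$ yields a deterministic $\calB_Y$ with the claimed properties.

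For part (ii), I would mirror the argument one layer deeper. Fix the optimal joint test channel $P_{YZ|X}^*$ achieving $\rvR(R_1,D_1,D_2|Q_X)$, and let $P_Y^*$ and $P_{Z|Y}^*$ be its induced marginals; without loss of generality $I(X;Y)\leq R_1$ and $I(X;YZ)\leq\rvR(R_1,D_1,D_2|Q_X)$, so $I(X;Z|Y)\leq\rvR(R_1,D_1,D_2|Q_X)-R_1$. Snap $P_{Z|Y}^*$ to a conditional $n$-type. For each $y^n\in\calB_Y$, draw $M_2=\lceil\exp(nI(X;Z|Y)+c'\log n)\rceil$ codewords $z^n$ i.i.d.\ uniformly from $\calT_{P_{Z|Y}^*}(y^n)$. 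Partition $\calN_1(y^n,D_1)\cap\calT_{Q_X}$ according to the polynomially many conditional types $V_{X|Y}$ of $x^n$ given $y^n$ compatible with $\mathbb{E}[d_1(X,Y)]\leq D_1$, and for each such conditional type apply the random-coding argument of part (i): a random $z^n$ $D_2$-covers any specific $x^n$ with probability at least $\exp(-nI(X;Z|Y)-O(\log n))$, and a union bound over the conditional type class of $x^n$ given $y^n$ (of size at most $\exp(nH(X|Y))$) produces a deterministic $\calB_Z(y^n)$ that $D_2$-covers the entire slice $\calN_1(y^n,D_1)\cap\calT_{Q_X}$.

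Summing across $\calB_Y$, $\sum_{y^n\in\calB_Y}|\calB_Z(y^n)|\leq M_1M_2\leq\exp(n\rvR(R_1,D_1,D_2|Q_X)+(c_1+c')\log n)$, and one takes $c_2=c_1+c'$. The main obstacle is faithfully accounting for all polynomial overheads that contribute to $c_1$ and $c_2$: the counts of conditional and joint $n$-types, the relative sizes of type classes under uniform sampling, the $O(1/n)$ slack from snapping optimizing channels to $n$-types, and the Markov/union-bound step converting the expected existence to deterministic existence. Matching the explicit constants $c_1=4|\calX||\calY|+9$ and $c_2=6|\calX||\calY||\calZ|+2|\calX||\calY|+17$ stated in the lemma requires careful bookkeeping through each of these steps, but no conceptually new ingredient beyond the standard covering argument.
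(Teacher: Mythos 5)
The paper does not prove Lemma~\ref{typecovering} itself; it is recalled from~\cite[Lemma 8]{no2015strong}, a modification of~\cite[Lemma 1]{kanlis1996error}, so there is no in-paper proof to compare against. Your first bullet (covering $\calT_{Q_X}$ at rate $R_1$) is the standard point-to-point type covering argument and is fine, but the second bullet has a genuine gap. Start with the arithmetic slip: from $I(X;Y)\le R_1$ and $I(X;YZ)\le\rvR(R_1,D_1,D_2|Q_X)$ the chain rule gives $I(X;Z|Y)=I(X;YZ)-I(X;Y)\le\rvR-I(X;Y)$, and since $-I(X;Y)\ge -R_1$, this upper bound is \emph{at least} $\rvR-R_1$, not at most. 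Your claim $I(X;Z|Y)\le\rvR-R_1$ holds only when the constraint $I(X;Y)\le R_1$ is tight at the optimizer. As written, $M_1M_2\approx\exp\bigl(n(R_1+I(X;Z|Y))\bigr)\ge\exp(n\rvR)$, so the required bound on $\sum_{y^n\in\calB_Y}|\calB_Z(y^n)|$ does not follow. This particular problem can be patched by drawing only about $\exp(nI(Q_X,V_{Y|X}^*))$ codewords into $\calB_Y$; the lemma only demands an upper bound on $|\calB_Y|$, and fewer codewords gives $M_1M_2\approx\exp(nI(X;YZ))$.

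The deeper gap is harder to patch. You draw every $z^n$ from a single conditional type class $\calT_{P_{Z|Y}^*}(y^n)$, yet $\calB_Z(y^n)$ must $D_2$-cover the entire slice $\calN_1(y^n,D_1)\cap\calT_{Q_X}$, which contains $x^n$'s whose joint type $V_{XY}$ with $y^n$ can be any type with $X$-marginal $Q_X$, $Y$-marginal $P_Y^*$ and $\mathbb{E}_{V_{XY}}[d_1(X,Y)]\le D_1$, not only $Q_X\times V_{Y|X}^*$. For a general such $V_{XY}$, there may exist no joint type $W_{XYZ}$ with $XY$-marginal $V_{XY}$, $Z|Y$-conditional $P_{Z|Y}^*$, and $\mathbb{E}_W[d_2(X,Z)]\le D_2$; in that case a random $z^n$ from $\calT_{P_{Z|Y}^*}(y^n)$ covers the corresponding slice with probability zero. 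Even when feasible, the per-slice exponent $\min_W I_W(X;Z|Y)$ need not be bounded by the value $I^*(X;Z|Y)$ budgeted from the optimal test channel $P_{YZ|X}^*$. The construction in~\cite{kanlis1996error} allows the $Z$-layer covering channel to depend on the conditional type of $x^n$ given $y^n$, and then carries out a nontrivial summation argument (over conditional types and over $y^n\in\calB_Y$) to show the total stays below $\exp(n\rvR(R_1,D_1,D_2|Q_X)+c_2\log n)$. That summation is the genuine content of the nested part of the lemma, and your sketch does not supply it.
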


Invoking Lemma~\ref{typecovering}, we can then upper bound the excess-distortion probability for some $(n,M_1,M_2)$-code. Given any $(n,M_1,M_2)$-code, define
\begin{align}
R_{1,n}&:=\frac{1}{n}\Bigg(\log M_1-c_1\log n-|\calX|\log(n+1)\Bigg),\\
R_{2,n}&:=\frac{1}{n}\Bigg(\log(M_1M_2)-c_2\log n\Bigg).
\end{align}

\begin{lemma}
\label{uppexcess}
There exists an $(n,M_1,M_2)$-code such that
\begin{align}
\epsilon_n(D_1,D_2)\leq \Pr\left(R_{1,n}<R_Y(\hat{T}_{X^n},D_1)~\mathrm{or}~R_{2,n}<\rvR(R_{1,n},D_1,D_2|\hat{T}_{X^n})\right).
\end{align}
\end{lemma}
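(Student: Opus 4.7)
The plan is to construct an $(n,M_1,M_2)$-code in a type-dependent fashion using Lemma~\ref{typecovering}, and then show that the only way such a code can incur an excess-distortion event is when the empirical type $\hat{T}_{X^n}$ fails one of the two rate conditions appearing on the right-hand side. The three slack terms in the definitions of $R_{1,n}$ and $R_{2,n}$ play distinct roles: the $|\calX|\log(n+1)$ term in $R_{1,n}$ absorbs the cost of indexing the at most $(n+1)^{|\calX|}$ types in $\calP_n(\calX)$, while $c_1\log n$ and $c_2\log n$ absorb the polynomial overheads that Lemma~\ref{typecovering} introduces on top of the information-theoretic rates $R_Y(Q_X,D_1)$ and $\rvR(R_1,D_1,D_2|Q_X)$.

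First, call a type $Q_X\in\calP_n(\calX)$ \emph{good} if simultaneously $R_{1,n}\ge R_Y(Q_X,D_1)$ and $R_{2,n}\ge \rvR(R_{1,n},D_1,D_2|Q_X)$. For each good type, invoke Lemma~\ref{typecovering} with $R_1=R_{1,n}$ to produce a set $\calB_Y(Q_X)\subset\calY^n$ that $D_1$-covers $\calT_{Q_X}$ with
\begin{equation*}
|\calB_Y(Q_X)|\le \exp(nR_{1,n}+c_1\log n)=\frac{M_1}{(n+1)^{|\calX|}},
\end{equation*}
together with sets $\{\calB_Z(y^n)\}_{y^n\in\calB_Y(Q_X)}$, each $D_2$-covering $\calN_1(y^n,D_1)$, with
\begin{equation*}
\sum_{y^n\in\calB_Y(Q_X)}|\calB_Z(y^n)|\le \exp(nR_{2,n}+c_2\log n)=M_1M_2.
\end{equation*}
Next, form the global first-layer codebook $\calB_Y=\bigcup_{Q_X\text{ good}}\calB_Y(Q_X)$; since there are at most $(n+1)^{|\calX|}$ types, $|\calB_Y|\le M_1$, so $y^n$ can be indexed by $S_1\in\{1,\ldots,M_1\}$. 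Define $f_1(x^n)$ to pick any $y^n\in\calB_Y(\hat{T}_{x^n})$ with $d_1(x^n,y^n)\le D_1$ whenever $\hat{T}_{x^n}$ is good (and anything otherwise), and set $\phi_1$ to be the corresponding inverse index map. Similarly, define $f_2,\phi_2$ so that given $y^n$, the encoder selects some $z^n\in\calB_Z(y^n)$ with $d_2(x^n,z^n)\le D_2$, packaging the $(y^n,z^n)$ pairs into the $M_1M_2$ slots provided by $(S_1,S_2)$ using the cumulative bound above.

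Finally, the excess-distortion probability is bounded by unpacking the construction: if $\hat{T}_{X^n}$ is good, then by the $D_1$- and $D_2$-covering guarantees the reconstructions $Y^n,Z^n$ automatically satisfy $d_1(X^n,Y^n)\le D_1$ and $d_2(X^n,Z^n)\le D_2$, so no excess-distortion event occurs. Hence
\begin{equation*}
\epsilon_n(D_1,D_2)\le\Pr(\hat{T}_{X^n}\text{ is not good}),
\end{equation*}
which is exactly the probability expression in the statement. The main obstacle I anticipate is the step of \emph{fitting} the cumulative bound $\sum_{y^n}|\calB_Z(y^n)|\le M_1M_2$ into the per-$y^n$ budget $M_2$ that is imposed by the successive refinement architecture (since $\phi_2$ may output at most $M_2$ reconstructions for each fixed $S_1$); resolving this likely requires a careful re-indexing argument or a mild strengthening of Lemma~\ref{typecovering} that upgrades the sum bound to a maximum-over-$y^n$ bound at the cost of absorbing an additional logarithmic factor into $c_2$.
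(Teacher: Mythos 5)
Your construction follows the same route as the paper's Appendix proof: build a type-dependent code, first transmit the empirical type $\hat{T}_{x^n}$ (absorbing the $|\calX|\log(n+1)$ slack built into $R_{1,n}$), then invoke Lemma~\ref{typecovering} to cover $\calT_{\hat{T}_{x^n}}$ when the type is ``good,'' and declare error otherwise. The bound on $\epsilon_n(D_1,D_2)$ then reduces to the probability that $\hat{T}_{X^n}$ violates one of the two rate conditions, exactly as in the paper.

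The concern you raise at the end is legitimate and worth highlighting: Lemma~\ref{typecovering} only guarantees the \emph{cumulative} bound $\sum_{y^n\in\calB_Y}|\calB_Z(y^n)|\le\exp(n\rvR(R_{1,n},D_1,D_2|Q_X)+c_2\log n)$, while the successive refinement architecture constrains $\phi_2(s_1,\cdot)$ to output at most $M_2$ distinct reconstructions per fixed $s_1$, i.e., a \emph{per-}$y^n$ budget. The paper's one-paragraph proof of Lemma~\ref{uppexcess} does not address this discrepancy explicitly. Your suggested re-indexing fix is the right way to close it: allot $\lceil|\calB_Z(y^n)|/M_2\rceil$ first-layer indices to each $y^n$, all decoded by $\phi_1$ to the same $y^n$, so that
\begin{align}
\sum_{y^n\in\calB_Y(Q_X)}\left\lceil\frac{|\calB_Z(y^n)|}{M_2}\right\rceil\le |\calB_Y(Q_X)|+\frac{1}{M_2}\sum_{y^n\in\calB_Y(Q_X)}|\calB_Z(y^n)|,
\end{align}
which is controlled by the two type-covering bounds up to a constant. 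Fitting this into the $S_1$ budget costs an extra additive term of order $O(\log n)$ in $\log M_1$, which is harmless: it can be absorbed into $c_1$ (or into the $|\calX|\log(n+1)$ term) without affecting anything downstream, since $R_{1,n}$ and $R_{2,n}$ already carry $O(\log n)$ slack. So your instinct that this needs either a re-indexing argument or a per-$y^n$ strengthening of the covering lemma at the cost of another logarithmic factor is exactly right; it is a detail the paper elides rather than a flaw in your plan.
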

The proof of Lemma~\ref{uppexcess} is similar to~\cite[Lemma 5]{watanabe2015second} and given in Appendix~\ref{proofuppexcess}.

Define the typical set 
\begin{align}
\calA_n(P_{X}):=\left\{Q_{X}\in\calP_n(\calX):\left\|Q_X-P_X\right\|_{\infty}\leq \sqrt{\frac{\log n}{n}}\right\}.
\end{align}
According to~\cite[Lemma 22]{tan2014state},
\begin{align}
\Pr\left(\hat{T}_{X^n}\notin\calA_n(P_{X})\right)\leq \frac{2|\calX|}{n^2}.
\end{align}
For a rate pair $(R_1^*,R_2^*)$ satisfying the conditions in Theorem~\ref{mainresult}, we choose 
\begin{align}
\frac{1}{n}\log M_1&=R_1^*+\frac{L_1}{\sqrt{n}}+\frac{c_1\log n+|\calX|\log (n+1)}{n}\label{achm1},\\
\frac{1}{n}\log(M_1M_2)&=R_2^*+\frac{L_2}{\sqrt{n}}+c_2\frac{\log n}{n}\label{achm2}.
\end{align}

Hence,
\begin{align}
R_{i,n}=R_i^*+\frac{L_i}{\sqrt{n}}~,i=1,2.
\end{align}
From the conditions in Theorem~\ref{mainresult}, we know that the second derivative  of $R_Y(Q_X,D_1)$ is bounded in the neighborhood of $P_X$, and that the second derivative  of $\rvR(R_1,D_1,D_2|Q_X)$ with respect to $(R_1,R_2,Q_X)$ is bounded around a neighborhood of $(R_1^*,P_{X})$. Hence, for any $x^n$ such that $\hat{T}_{x^n}\in\calA_n(P_X)$, applying Taylor's expansion and invoking Lemma~\ref{derivativer2} and~\cite[Theorem 2.2]{kostina2013lossy}, we obtain
\begin{align}
R_Y(\hat{T}_{x^n},D_1)
&=R_Y(P_X,D_1)+\sum_{x}\left(\hat{T}_{x^n}(x)-P_{X}(x)\right)\jmath_X(x,D_1|P_X)+O\left(\frac{\log n}{n}\right),\\*
&=\frac{1}{n}\sum_{i=1}^n \jmath_Y(x_i,D_1|P_X)+O\left(\frac{\log n}{n}\right)\label{taylor_1},
\end{align}
and
\begin{align}
\nn&\rvR(R_{1,n},D_1,D_2|\hat{T}_{x^n})\\*
&=\rvR(R_1^*,D_1,D_2|P_{XY})-\lambda^*\frac{L_1}{\sqrt{n}}+\sum_{x}\left(\hat{T}_{x^n}(x)-P_{X}(x)\right)\jmath_{YZ}(x,R_1^*,D_1,D_2|P_{X})+O\left(\frac{\log n}{n}\right)\\*
&=\frac{1}{n}\sum_{i=1}^n\jmath_{YZ}(x_i,R_1^*,D_1,D_2|P_{X})-\lambda^*\frac{L_1}{\sqrt{n}}+O\left(\frac{\log n}{n}\right)\label{taylor_2}.
\end{align}
Define $\xi_n=\frac{\log n}{n}$. 

Hence, invoking Lemma~\ref{uppexcess}, for large $n$, we obtain
\begin{align}
\epsilon_n(D_1,D_2)
&\leq \Pr\left(R_{1,n}<R_Y(\hat{T}_{X^n},D_1)~\mathrm{or}~R_{2,n}<\rvR(R_{1,n},D_1,D_2|\hat{T}_{X^n})\right)\\
&\leq \Pr\left(R_{1,n}<R_Y(\hat{T}_{X^n},D_1)~\mathrm{or}~R_{2,n}< \rvR(R_{1,n},D_1,D_2|\hat{T}_{X^n}),\hat{T}_{X^n}\in\calA_{n}(P_X)\right)+\Pr\left(\hat{T}_{X^n}\notin\calA_{n}(P_X)\right)\\
&\nn\leq \Pr\left(R_1^*+\frac{L_1}{\sqrt{n}}<\frac{1}{n}\sum_{i=1}^n \jmath_Y(X_i,D_1|P_X)+O\left(\xi_n\right)~\mathrm{or}~\right.\\
&\qquad\left.R_2^*+\frac{L_2}{\sqrt{n}}<\frac{1}{n}\sum_{i=1}^n\jmath_{YZ}(X_i,R_1^*,D_1,D_2|P_{X})-\lambda^*\frac{L_1}{\sqrt{n}}+O(\xi_n)\right)+\frac{2|\calX|}{n^2}\\
&\nn=\Pr\left(R_1^*+\frac{L_1}{\sqrt{n}}<\frac{1}{n}\sum_{i=1}^n \jmath_Y(X_i,D_1|P_X)+O\left(\xi_n\right)~\mathrm{or}~\right.\\
&\qquad \left. R_2^*+\lambda^*\frac{L_1}{\sqrt{n}}+\frac{L_2}{\sqrt{n}}<\frac{1}{n}\sum_{i=1}^n\jmath_{YZ}(X_i,R_1^*,D_1,D_2|P_{X})+O(\xi_n)\right)+\frac{2|\calX|}{n^2}\label{taylorexpand}.
\end{align}

Therefore,
\begin{align}
\nn1-\epsilon_n(D_1,D_2)
&\geq 
\Pr\left(\frac{1}{n}\sum_{i=1}^n \jmath_Y(X_i,D_1|P_X)\leq R_1^*+\frac{L_1}{\sqrt{n}}+O\left(\xi_n\right),\right.\\*
&\qquad \left.\frac{1}{n}\sum_{i=1}^n\jmath_{YZ}(X_i,R_1^*,D_1,D_2|P_{X})\leq R_2^*+\lambda^*\frac{L_1}{\sqrt{n}}+\frac{L_2}{\sqrt{n}}+O(\xi_n)\right)-\frac{2|\calX|}{n^2}\label{eqn:dmsach}.
\end{align}
We consider Case (i) first where $R_Y(P_X,D_1)<R_1^*<\rvR(R_1^*,D_1,D_2|P_X)$ and $R_2^*=\rvR(R_1^*,D_1,D_2|P_X)$. Using the weak law of large numbers, we obtain
\begin{align}
\Pr\left(\frac{1}{n}\sum_{i=1}^n \jmath_Y(X_i,D_1|P_X)\leq R_1^*+\frac{L_1}{\sqrt{n}}+O\left(\xi_n\right)\right)\to 1.
\end{align}
Using the  Berry-Esseen Theorem, we obtain
\begin{align} \nn&\Pr\left(\frac{1}{n}\sum_{i=1}^n\jmath_{YZ}(X_i,R_1^*,D_1,D_2|P_{X})\leq R_2^*+\lambda^*\frac{L_1}{\sqrt{n}}+\frac{L_2}{\sqrt{n}}+O(\xi_n)\right)\\*
&\geq 1-\rmQ\left(\frac{\lambda^*L_1+L_2+O(\sqrt{n}\xi_n)}{\sqrt{\mathrm{V}(R_1^*,D_1,D_2|P_X)}}\right)-\frac{6\mathrm{T}(R_1^*,D_1,D_2|P_X)}{\sqrt{n}\mathrm{V}^{3/2}(R_1^*,D_1,D_2|P_X)}\label{berryesseen},
\end{align}
where  $\mathrm{T}(R_1^*,D_1,D_2|P_X)$ is the third absolute moment of $\jmath_{YZ}(X,R_1^*,D_1,D_2|P_{X})$, which is finite. 
Hence,
\begin{align}
\epsilon_n(D_1,D_2)\leq \rmQ\left(\frac{\lambda^*L_1+L_2+O(\sqrt{n}\xi_n)}{\sqrt{\mathrm{V}(R_1^*,D_1,D_2|P_X)}}\right)+\frac{6\mathrm{T}(R_1^*,D_1,D_2|P_X)}{\sqrt{n}\mathrm{V}^{3/2}(R_1^*,D_1,D_2|P_X)}+\frac{2|\calX|}{n^2}\label{phitoq}.
\end{align}

From the conditions in Theorem~\ref{mainresult}, we conclude that $\mathrm{T}(R_1^*,D_1,D_2|P_X)$ is finite. Hence, if $(L_1,L_2)$ satisfies 
\begin{align}
\lambda^*L_1+L_2\geq \sqrt{\mathrm{V}(R_1^*,D_1,D_2|P_X)}\rm\rmQ^{-1}(\epsilon),
\end{align}
then $\limsup_{n\to\infty}\epsilon_n(D_1,D_2)\leq \epsilon$. We omit the proof for Case (ii) since it is similar to Case (i). 

The most interesting case is Case (iii) where $R_1^*=R_Y(P_X,D_1)$ and $R_2^*=\rvR(R_1^*,D_1,D_2|P_X)$. If $\mathbf{V}(R_1^*,D_1,D_2|P_X)$ is positive definite we invoke the multi-variate Berry-Esseen Theorem~\cite{Ben03} to obtain
\begin{align}
\epsilon_n(D_1,D_2)\leq 1-\Psi\left(L_1+O\left(\xi_n\right),\lambda^*L_1+L_2+O\left(\xi_n\right),\mathbf{0},\mathbf{V}(R_1^*,D_1,D_2|P_X)\right)+O\left(\frac{1}{\sqrt{n}}\right). \label{eqn:multi-be}
\end{align}  
Note that if $\mathbf{V}(R_1^*,D_1,D_2|P_X)$ is rank $1$, we can use the argument (projection onto a lower-dimensional subspace) in \cite[Proof of Theorem 6]{tan2014dispersions} to conclude that \eqref{eqn:multi-be} also holds. 
Now if we choose $(L_1,L_2)$ such that
\begin{align}
\Psi\left(L_1,\lambda^*L_1+L_2,\mathbf{0},\mathbf{V}(R_1^*,D_1,D_2|P_X)\right)\geq 1-\epsilon,
\end{align}
then $\limsup_{n\to\infty}\epsilon_n(D_1,D_2)\leq \epsilon$. The achievability proof is now complete.

\subsection{Converse Coding Theorem}
We first prove a type-based ``strong converse''. Define $\overline{d}_1 :=\max_{x,y}d_1(x,y)$ and $\overline{d}_2:=\max_{x,z}d_2(x,z)$.
\begin{lemma}
\label{typestrongconverse}
Fix $\alpha>0$ and a type $Q_X\in\calP_n(\calX)$. If the excess-distortion probability satisfies 
\begin{align}
 \Pr\big(d_1(X^n,\hat{X}^n)\leq D_1,~d_2(X^n,Z^n)\leq D_2\,\big|\,X^n\in\calT_{Q_X}\big)\geq \exp(-n\alpha), \label{eqn:lb_type_str_conv}
\end{align}
then there exists a conditional distribution $Q_{YZ|X}$ such that
\begin{align}
\log M_1 &\geq nI(Q_X,Q_{Y|X})-\vartheta_n,\\
\log (M_1M_2)&\geq nI(Q_X,Q_{YZ|X})-\vartheta_n,
\end{align}

where $\vartheta_n:=|\calX|\log (n+1)+\log n+n\alpha$, 
and the expected distortions are bounded as 
\begin{align}
\mathbb{E}_{Q_X\times Q_{YZ|X}}[d_1(X,Y)]&\leq
D_1+\frac{\overline{d}_1}{n}=:D_{1,n},\\*
\mathbb{E}_{Q_X\times Q_{YZ|X}}[d_2(X,Z)]&\leq D_2+\frac{\overline{d}_2}{n}=:D_{2,n}.
\end{align}
\end{lemma}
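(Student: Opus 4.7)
The plan is to combine a type-class counting argument with single-letterization via the concavity of conditional entropy in the joint distribution. First I let $\mathcal{S}\subseteq\mathcal{T}_{Q_X}$ denote the set of source sequences in the type class for which \emph{both} distortion constraints are met; by hypothesis \eqref{eqn:lb_type_str_conv}, $|\mathcal{S}|\geq\exp(-n\alpha)|\mathcal{T}_{Q_X}|$, and by the standard type-class lower bound $\log|\mathcal{T}_{Q_X}|\geq nH(Q_X)-|\mathcal{X}|\log(n+1)$. Let $P'$ be the distribution on $(X^n,S_1,S_2,Y^n,Z^n)$ in which $X^n$ is uniform on $\mathcal{S}$ and the remaining variables are the deterministic outputs of the code. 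Under $P'$, every realization has empirical type exactly $Q_X$ and satisfies both distortion constraints, so $H_{P'}(X^n)=\log|\mathcal{S}|\geq nH(Q_X)-|\mathcal{X}|\log(n+1)-n\alpha$ and $\mathbb{E}_{P'}[d_j(X^n,\cdot)]\leq D_j$ for $j=1,2$.

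Next, I define the averaged single-letter distribution $Q_{XYZ}(x,y,z):=\frac{1}{n}\sum_{i=1}^n P'(X_i=x,Y_i=y,Z_i=z)$ and let $Q_{YZ|X}$ be its conditional (extended arbitrarily on $\{x:Q_X(x)=0\}$), with $Q_{Y|X}$ the corresponding $Y$-marginal. Because every realization lies in $\mathcal{T}_{Q_X}$, the $X$-marginal of $Q_{XYZ}$ is exactly $Q_X$; by linearity, $\mathbb{E}_{Q_{XYZ}}[d_j(X,\cdot)]=\mathbb{E}_{P'}[d_j(X^n,\cdot)]\leq D_j\leq D_{j,n}$, establishing the distortion bounds in the lemma (the $\overline{d}_j/n$ appears as harmless slack).

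For the rate bounds, $H_{P'}(S_1)\leq\log M_1$ and $H_{P'}(S_1,S_2)\leq\log(M_1M_2)$ are immediate from the alphabet sizes; since $Y^n$ is a function of $S_1$, data processing gives $H_{P'}(S_1)\geq I_{P'}(X^n;Y^n)=H_{P'}(X^n)-H_{P'}(X^n|Y^n)$, and analogously for $(Y^n,Z^n)$. The crucial single-letterization is
\begin{equation*}
H_{P'}(X^n|Y^n)\leq\sum_{i=1}^n H_{P'}(X_i|Y_i)\leq n H_Q(X|Y),
\end{equation*}
where the first inequality is subadditivity of entropy and the second uses the concavity of $P_{XY}\mapsto H(X|Y)$ (proved by introducing the position index $W$ uniform on $[1:n]$ as a mixture-selector variable and applying ``conditioning reduces entropy'' to $H(X|Y)\geq H(X|Y,W)$). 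Combined with the lower bound on $H_{P'}(X^n)$, this yields $\log M_1\geq nI(Q_X,Q_{Y|X})-\vartheta_n$ with $\vartheta_n$ of the claimed order; the same argument with $(Y^n,Z^n)$ in place of $Y^n$ delivers the second rate bound.

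The main obstacle is that under $P'$, the coordinates $X_1,\ldots,X_n$ are correlated --- they are conditioned to lie in the distortion-feasible subset of $\mathcal{T}_{Q_X}$ --- so no independence argument is available. The concavity of $H(X|Y)$ in the joint distribution is precisely what lets me replace the intractable $\frac{1}{n}\sum_i H_{P'}(X_i|Y_i)$ by the single-letter quantity $H_Q(X|Y)$; meanwhile, the fact that every $x^n\in\mathcal{S}$ has the same empirical type $Q_X$ pins the $X$-marginal of $Q_{XYZ}$ to $Q_X$ with no slack. Together these two facts guarantee that both rate bounds share a single coherent single-letter distribution $Q_{YZ|X}$, rather than two unrelated ones, which is what makes the lemma usable as the starting point of the downstream second-order converse.
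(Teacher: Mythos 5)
Your proof is correct, and it takes a genuinely different (and arguably cleaner) route than the paper. The paper follows the Gu--Effros perturbation strategy: it starts from the uniform distribution on the \emph{entire} type class $\calT_{Q_X}$ and tilts it by boosting the good set $\calD_{Q_X}$ by a factor $\exp(n(\alpha+\beta))$, yielding a distribution whose mass lies mostly, but not entirely, on the feasible sequences. This is what produces the $\overline{d}_j/n$ slack in the distortion bounds (the residual bad sequences contribute at most $\overline{d}_j/n$) and the extra $\log n$ inside $\vartheta_n$ (coming from $\beta=\log n/n$). The paper then single-letterizes via the time-sharing index $J$ and outsources the bound $|H(X_J)-\tfrac1n H(X^n)|\le \tfrac{|\calX|\log(n+1)}{n}+(\alpha+\beta)$ to a reference. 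You instead condition \emph{exactly} on the good set $\calS\subseteq\calT_{Q_X}$ and take the uniform distribution on it. This bypasses the perturbation entirely: $H_{P'}(X^n)=\log|\calS|$ is controlled directly by the type-class lower bound together with $|\calS|\ge e^{-n\alpha}|\calT_{Q_X}|$, the per-letter average distribution has $X$-marginal exactly $Q_X$ because every $x^n\in\calS$ has type $Q_X$, and the distortion bounds hold with no slack at all since every supported $x^n$ is feasible. The single-letterization via $H_{P'}(X^n|Y^n)\le\sum_i H_{P'}(X_i|Y_i)\le nH_Q(X|Y)$ (the time-sharing/concavity step) is the same mechanism the paper uses with its index $J$, and gives a single coherent $Q_{YZ|X}$ for both rate bounds. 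The net effect is that you prove a slightly \emph{stronger} statement, with $\vartheta_n'=|\calX|\log(n+1)+n\alpha\le\vartheta_n$ and distortions bounded by $D_j$ rather than $D_{j,n}$, which of course still implies the lemma. The Gu--Effros route has the advantage of generality in settings where one cannot simply condition on a feasible set, but for this lemma your direct conditioning is both simpler and self-contained, not relying on the external entropy-change estimates the paper invokes.
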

The proof of Lemma~\ref{typestrongconverse} is given in Appendix~\ref{prooftypesc}. The proof is done in a similar manner as~\cite[Lemma 6]{watanabe2015second} and is inspired by~\cite{wei2009strong}.

Invoking Lemma~\ref{typestrongconverse} with $\alpha=\frac{\log n}{n}$, we can lower bound the excess-distortion probability for any $(n,M_1,M_2)$-code. Define $\beta_n=|\calX|\log (n+1)+2\log n$.
Define
\begin{align}
R_{1,n}&=\frac{1}{n}\log M_1+\beta_n,\\
R_{2,n}&=\frac{1}{n}\log(M_1M_2)+\beta_n.
\end{align}

\begin{lemma}
\label{lbexcessp}
For any $(n,M_1,M_2)$-code, we have
\begin{align}
\epsilon_n(D_1,D_2)\geq \Pr\left(R_{1,n}<R_Y(\hat{T}_{X^n},D_{1,n})~\mathrm{or}~R_{2,n}<\rvR(R_{1,n},D_{1,n},D_{2,n}|\hat{T}_{X^n})\right)-\frac{1}{n}.
\end{align}
\end{lemma}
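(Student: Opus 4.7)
\textbf{Proof plan for Lemma \ref{lbexcessp}.} The strategy is to invoke Lemma \ref{typestrongconverse} type-by-type with the choice $\alpha = \frac{\log n}{n}$, so that the exponential slack $\exp(-n\alpha)$ becomes $1/n$, and the loss term $\vartheta_n = |\calX|\log(n+1) + \log n + n\alpha$ matches precisely $n\beta_n$. With this choice, the rate conclusions of Lemma \ref{typestrongconverse} become $R_{1,n} \geq I(Q_X, Q_{Y|X})$ and $R_{2,n} \geq I(Q_X, Q_{YZ|X})$, which is exactly what we need.

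First, I will decompose the success probability by the type of $X^n$:
\begin{align}
1-\epsilon_n(D_1,D_2) = \sum_{Q_X\in\calP_n(\calX)} \Pr(X^n\in\calT_{Q_X})\cdot\Pr\bigl(d_1(X^n,Y^n)\leq D_1,~d_2(X^n,Z^n)\leq D_2\,\big|\,X^n\in\calT_{Q_X}\bigr).
\end{align}
Let $\calQ_n$ be the set of types $Q_X\in\calP_n(\calX)$ for which both $R_{1,n}\geq R_Y(Q_X,D_{1,n})$ and $R_{2,n}\geq \rvR(R_{1,n},D_{1,n},D_{2,n}|Q_X)$ hold. The key claim is that if $Q_X\notin\calQ_n$, then the conditional success probability is strictly less than $1/n$. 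This follows from the contrapositive of Lemma \ref{typestrongconverse}: if the conditional success probability were at least $1/n=\exp(-n\alpha)$, then there would exist $Q_{YZ|X}$ with $\mathbb{E}[d_1(X,Y)]\leq D_{1,n}$ and $\mathbb{E}[d_2(X,Z)]\leq D_{2,n}$ satisfying $I(Q_X,Q_{Y|X})\leq R_{1,n}$ and $I(Q_X,Q_{YZ|X})\leq R_{2,n}$; by the definitions in \eqref{minimumr2}, this would imply $R_Y(Q_X,D_{1,n})\leq R_{1,n}$ and $\rvR(R_{1,n},D_{1,n},D_{2,n}|Q_X)\leq R_{2,n}$, contradicting $Q_X\notin\calQ_n$.

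Combining these pieces, I split the decomposition into contributions from $\calQ_n$ and its complement, upper-bounding the conditional success probability by $1$ on $\calQ_n$ and by $1/n$ otherwise:
\begin{align}
1-\epsilon_n(D_1,D_2) \leq \Pr(\hat{T}_{X^n}\in\calQ_n) + \frac{1}{n}.
\end{align}
Rearranging and noting that the complement event $\{\hat{T}_{X^n}\notin\calQ_n\}$ is precisely $\{R_{1,n}<R_Y(\hat{T}_{X^n},D_{1,n})~\text{or}~R_{2,n}<\rvR(R_{1,n},D_{1,n},D_{2,n}|\hat{T}_{X^n})\}$, yields the claimed bound.

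The main delicate point will be verifying the logical equivalence in the previous paragraph: one needs to confirm that $Q_X\notin\calQ_n$ is equivalent to the non-existence of any test channel $Q_{YZ|X}$ simultaneously satisfying the four conditions in the conclusion of Lemma \ref{typestrongconverse}. This uses the minimality definitions of $R_Y$ and $\rvR$, together with the fact that $\rvR(R_{1,n},\cdots|Q_X)$ is by definition only meaningful when $R_{1,n}\geq R_Y(Q_X,D_{1,n})$ (otherwise it is $+\infty$, making the second inequality vacuously fail). Aside from this careful bookkeeping, the remainder of the proof is a routine partition-and-bound argument.
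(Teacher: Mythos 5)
Your proof is correct and takes essentially the same approach as the paper: both invoke Lemma~\ref{typestrongconverse} with $\alpha=\tfrac{\log n}{n}$, use the contrapositive together with the minimality definitions of $R_Y$ and $\rvR$ to conclude that types outside $\calQ_n$ have conditional success probability below $1/n$, and then apply a partition-over-types bound. The only cosmetic difference is that you bound the success probability $1-\epsilon_n(D_1,D_2)$ from above, whereas the paper bounds $\epsilon_n(D_1,D_2)$ from below by restricting the type sum and lower-bounding the conditional failure probability by $1-1/n$; the two are equivalent. Your explicit introduction of $\calQ_n$ and its complement is in fact slightly cleaner bookkeeping than the paper's displayed argument, which stacks the two defining inequalities in a \texttt{\textbackslash substack} in a way that reads as a conjunction even though the intended event (as you correctly identify, using the convention $\rvR=\infty$ when $R_{1,n}<R_Y$) is the disjunction appearing in the lemma statement.
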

The proof of Lemma~\ref{lbexcessp} is similar to~\cite[Lemma 7]{watanabe2015second} and given in Appendix~\ref{prooflbexcessp}.

Choose $\log M_1=nR_1^*+L_1\sqrt{n}+\beta_n$ and $\log (M_1 M_2)=nR_2^*+L_2\sqrt{n}+\beta_n$. Hence, $R_{i,n}=R_i^*+\frac{L_i}{\sqrt{n}}~i=1,2$. Recall that we use the shorthand $\xi_n:=\frac{\log n}{n}$. Now for $x^n$ such that $\hat{T}_{x^n}\in\calA_n(P_X)$, applying Taylor's expansion in a similar manner as~\eqref{taylor_1} and~\eqref{taylor_2}, invoking Lemma~\ref{lbexcessp} and noting that $\Pr\left(\calF\cap \calG\right)\geq \Pr(\calF)-\Pr(\calG^{\mathrm{c}})$, we obtain
\begin{align}
\nn1-\epsilon_n(D_1,D_2)
&\leq 
\Pr\left(\frac{1}{n}\sum_{i=1}^n \jmath_Y(X_i,D_1|P_X)\leq R_1^*+\frac{L_1}{\sqrt{n}}+O\left(\xi_n\right),\right.\\*
&\qquad \left.\frac{1}{n}\sum_{i=1}^n\jmath_{YZ}(X_i,R_1^*,D_1,D_2|P_{X})\leq R_2^*+\lambda^*\frac{L_1}{\sqrt{n}}+\frac{L_2}{\sqrt{n}}+O(\xi_n)\right)+\frac{1}{n}+\frac{2|\calX|}{n^2}\label{eqn:dmscon}.
\end{align}
Note that in \eqref{eqn:dmscon}, we Taylor expand $R_Y(\hat{T}_{X^n},D_{1,n})$ at the source distribution $P_X$ and distortion level $D_1$.  We also Taylor expand $\rvR(R_{1,n},D_1,D_2|\hat{T}_{X^n})$ at $(P_X,D_1,D_2)$. The residual terms when we Taylor expand with respect to the distortion levels are of the order $O(\frac{1}{n})$, which can be absorbed into $O(\xi_n)$.

The rest of converse proof can be done similarly as the achievability part in Section~\ref{secondach} by using the uni- or multi-variate  Berry-Esseen Theorem~\cite{Ben03} for Cases (i), (ii) and (iii).

\subsection{The Special case~\eqref{eqn:ones} in Corollary~\ref{srmainresult}}
\label{proofsrmain}
Recall that for successively refinable discrete memoryless source-distortion measure triplet, $\lambda^*=0$, $\nu_1^*=0$, and $\jmath_{YZ}(x_i,R_1^*,D_1,D_2|P_X)=\jmath_{Z}(x_i,D_2|P_X)$ for $R_Y(P_X,D_1)\leq R_1^*<R_Z(P_X,D_2)$. For the achievability part, invoking~\eqref{eqn:dmsach}, we obtain
\begin{align}
1-\epsilon_n(D_1,D_2)
&\geq \Pr\left(\frac{1}{n}\sum_{i=1}^n \left(\jmath_Y(X_i,D_1|P_X)- R_1^*\right)\leq \frac{L_1}{\sqrt{n}}+O\left(\xi_n\right),\right.\\
&\qquad \left.\frac{1}{n}\sum_{i=1}^n\left(\jmath_{Z}(X_i,D_2|P_{X})- R_2^*\right)\leq \frac{L_2}{\sqrt{n}}+O(\xi_n)\right)-\frac{2|\calX|}{n^2}.
\end{align}
According to the assumption in \eqref{eqn:ones} of Corollary~\ref{srmainresult}, we have $\jmath_Y(X_i,D_1|P_X)- R_1^*=\jmath_{Z}(X_i,D_2|P_{X})- R_2^*$. Given a random variable $X$ and two real numbers $a<b$, we obtain $\Pr(X<a,~X<b)=\Pr(X<a)$. Hence,
\begin{align}
1-\epsilon_n(D_1,D_2)\geq \Pr\left(\frac{1}{n}\sum_{i=1}^n \left(\jmath_Y(X_i,D_1|P_X)- R_1^*\right)\leq \frac{\min\{L_1,L_2\}}{\sqrt{n}}+O\left(\xi_n\right)\right).
\end{align}
The rest of the proof is similar to Case (i) in Section~\ref{secondach}.

Using~\eqref{eqn:dmscon}, in a similar manner as the achievability part, we complete the proof of converse part.

\section{Proof of Moderate Deviations for A DMS}
\label{proofmdc}
Consider a rate pair $(R_1^*,R_2^*)$ satisfying the conditions in Theorem \ref{mdconstant}.
\subsection{Achievability}
\label{proofmdcach}
Define
\begin{align}
\rho_{1,n}'&=\theta_1\rho_n-\frac{c_1\log n+|\calX|\log(n+1)}{n},\\
\rho_{2,n}'&=\theta_2\rho_n-c_2\frac{\log n}{n},\\
R_{i,n}'&=R_i^*+\rho_{i,n}',~i=1,2.
\end{align} 

Consider Case (i) where $R_Y(P_X,D_1)<R_1^*<\rvR(R_1^*,D_1,D_2|P_X)$ and $R_2^*=\rvR(R_1^*,D_1,D_2|P_X)$. 
Define the typical set 
\begin{align}
\calA_{n}'(P_{X})
:=\left\{Q_{X}\in\calP_n(\calX):\left\|Q_{X}-P_{X}\right\|_{1}\leq \frac{\theta\rho_n}{\sqrt{\mathrm{V}(R_1^*,D_1,D_2|P_{X})}}\right\}. \label{eqn:typ}
\end{align}
Invoking Lemma~\ref{uppexcess} with $\frac{1}{n}\log M_1=R_1^*+\theta_1\rho_n$ and $\frac{1}{n}\log(M_1M_2)=R_2^*+\theta_2\rho_n$, we obtain
\begin{align}
\epsilon_n(D_1,D_2)
&\leq \Pr\left(R_{1,n}'<R_Y(\hat{T}_{X^n},D_1)~\mathrm{or}~R_{2,n}'<\rvR(R_{1,n}',D_1,D_2|\hat{T}_{X^n})\right)\\*
\nn&\leq \Pr\left(\hat{T}_{X^n}\notin\calA_n'(P_{X})\right)+ \Pr\left(R_{1,n}'<R_Y(\hat{T}_{X^n},D_1),\hatT_{X^n}\in\calA_n'(P_{X})\right)\\*
&\qquad+\Pr\left(R_{2,n}'<\rvR(R_{1,n}',D_1,D_2|\hat{T}_{X^n}),\hatT_{X^n}\in\calA_n'(P_{X})\right)\label{eqn:dmsmdcc1}.
\end{align}
According to Weissman {\em et al.}~\cite{weissman2003inequalities}, we obtain
\begin{align}
\Pr\left(\hat{T}_{X^n}\notin\calA_n'(P_{X})\right)\leq \exp(|\calX|)\exp\left(-\frac{ n\rho_n^2\theta^2}{2\mathrm{V}(R_1^*,D_1,D_2|P_{X})}\right). \label{eqn:weiss}
\end{align}
For any $x^n$ such that $\hat{T}_{x^n}\in\calA_n'(P_{X})$, for $n$ large enough, applying Taylor's expansion, we obtain
\begin{align}
R_Y(\hat{T}_{x^n},D_1)
&=R_Y(P_X,D_1)+\sum_{x}\left(\hat{T}_{x^n}(x)-P_X(x)\right)\jmath_Y(x_i,D_1|P_X)+O\left(\|\hat{T}_{x^n}-P_X\|^2\right)\\
&=\frac{1}{n}\sum_{i=1}^n\jmath_Y(x_i,D_1|P_X)+o\left(\rho_n\right)\label{taylor1},
\end{align}
and
\begin{align}
\nn&\rvR(R_{1,n}',D_1,D_2|\hat{T}_{x^n})\\*
&=\rvR(R_1^*,D_1,D_2|P_{X})-\lambda^*\rho_{1,n}'+\sum_{x}\left(\hat{T}_{x^n}(x)-P_{X}(x)\right)\jmath_{YZ}(x_i,R_1^*,D_1,D_2|P_{X})+O\left(\rho_{1,n}'^2+\left\|\hat{T}_{x^n}-P_X\right\|^2\right)\\
&=-\lambda^*\theta_1\rho_n+\frac{1}{n}\sum_{i=1}^n\jmath_{YZ}(x_i,R_1^*,D_1,D_2|P_{X})+o\left(\rho_n\right)\label{taylor2},
\end{align}
where~\eqref{taylor2} follows because (i) according to~\eqref{expectlemma}, $\rvR(R_1^*,D_1,D_2|P_X)=\mathbb{E}[\jmath_{YZ}(X,R_1^*,R_2^*,D_1|P_X)]$; (ii) according to~\eqref{normalmdc}, we have $\frac{\log n}{n}=o(\rho_n)$, $\rho_{i,n}'^2=O(\rho_n^2)=o(\rho_n)$; (iii) since $\hat{T}_{x^n}\in\calA_n'(P_X)$, we have $O\big( \|\hat{T}_{x^n}-P_X \|^2\big)=O(\rho_n^2)=o(\rho_n)$.

For $n$ large enough, using~\eqref{taylor1} and the Chernoff bound, we obtain that for some $\gamma>0$,
\begin{align}
\Pr\left(R_{1,n}'<R_Y(\hat{T}_{X^n},D_1),\hatT_{X^n}\in\calA_n'(P_{X})\right)
&\leq \Pr\left(\frac{1}{n}\sum_{i=1}^n\jmath_Y(X_i,D_1|P_X)>R_1^*+\rho_{1,n}+o\left(\rho_n\right)\right)\leq \exp(-n\gamma)\label{wllnt1}.
\end{align}
Invoking~\eqref{taylor2}, we obtain
\begin{align}
\nn&\Pr\left(R_{2,n}'<\rvR(R_{1,n}',D_1,D_2|\hat{T}_{X^n}),\hat{T}_{X^n}\in\calA_n'(P_{X})\right)\\
&=\Pr\left(\frac{1}{n}\sum_{i=1}^n\jmath_{YZ}(X_i,R_1^*,D_1,D_2|P_{X})>R_2^*+\left(\theta \rho_n+o\left(\rho_n\right)\right)\right)\\
&=\Pr\left(\frac{1}{n}\sum_{i=1}^n\left(\jmath_{YZ}(X_i,R_1^*,D_1,D_2|P_{X})-\rvR(R_1^*,D_1,D_2|P_{X})\right)>\theta \rho_n+o\left(\rho_n\right)\right)\label{termendsec}.
\end{align}
We bound the term in~\eqref{termendsec} at the end of this section. We show that this term is of the same order as that in~\eqref{eqn:weiss}. Thus,~\eqref{eqn:dmsmdcc1} is dominated by the first and third terms as evidenced by~\eqref{eqn:weiss},~\eqref{wllnt1} and~\eqref{termendsec}. Hence, the  moderate deviations constant for Case (i) is lower bounded by ${\theta^2}/{(2\mathrm{V}(R_1^*,D_1,D_2|P_{XY}) )}$. The proof of Case (ii) is analogous to Case (i) and hence omitted. The only difference is that we define the typical set $\calA_n''(P_X)$ such that
\begin{align}
\calA_{n}''(P_{X})
:=\left\{Q_{X}\in\calP_n(\calX):\left\|Q_{X}-P_{X}\right\|_{1}\leq \frac{\theta_1\rho_n}{\sqrt{\mathrm{V}(D_1|P_{X})}}\right\}. \label{eqn:typ2}
\end{align}
The most interesting case is Case (iii) where $R_1^*=R_Y(P_X,D_1)$ and $R_2^*=\rvR(R_1^*,D_1,D_2|P_X)$. Define 
\begin{align}
\rmV_{\mathrm{max}}(R_1^*,D_1,D_2):=\max\left\{\frac{\mathrm{V}(D_1|P_{X})}{\theta_1^2},\frac{\mathrm{V}(R_1^*,D_1,D_2|P_{X})}{\theta^2}\right\}.
\end{align}
Define the typical set
\begin{align}
\calA_n'''(P_X):=\left\{Q_{X}\in\calP_n(\calX):\left\|Q_{X}-P_{X}\right\|_{1}\leq \frac{\rho_n}{\sqrt{\rmV_{\mathrm{max}}(R_1^*,D_1,D_2)}}\right\}. \label{eqn:typ3}
\end{align}
In a similar manner as Case (i) and using the union bound, we obtain
\begin{align}
\epsilon_n(D_1,D_2)
\nn&\leq \Pr\left(\hat{T}_{X^n}\notin\calA_n'''(P_{X})\right)+ \Pr\left(R_{1,n}'<R_Y(\hat{T}_{X^n},D_1),\hatT_{X^n}\in\calA_n'''(P_{X})\right)\\*
&\qquad+\Pr\left(R_{2,n}'<\rvR(R_{1,n}',D_1,D_2|\hat{T}_{X^n}),\hatT_{X^n}\in\calA_n'''(P_{X})\right)\\
\nn&\leq \exp(|\calX|)\exp\left(-\frac{ n\rho_n^2}{2\rmV_{\mathrm{max}}(R_1^*,D_1,D_2)}\right)+\Pr\left(\frac{1}{n}\sum_{i=1}^n\left(\jmath_Y(X_i,D_1|P_X)-R_Y(P_X,D_1)\right)>\rho_{1,n}+o\left(\rho_n\right)\right)\\
&\qquad+\Pr\left(\frac{1}{n}\sum_{i=1}^n\left(\jmath_{YZ}(X_i,R_1^*,D_1,D_2|P_{X})-\rvR(R_1^*,D_1,D_2|P_{X})\right)>\theta \rho_n+o\left(\rho_n\right)\right)\label{termendthird}.
\end{align}
We bound the second and third terms in~\eqref{termendthird} at the end of this section. We show that this term is of the same order as the first term in~\eqref{termendthird}. Hence, the moderate deviations constant for Case (iii) is lower bounded by 
\begin{align}
\frac{1}{2\rmV_{\mathrm{max}}(R_1^*,D_1,D_2)}=\min\left\{\frac{\theta_1^2}{2\rmV(D_1|P_X)},\frac{\theta^2}{2\rmV(R_1^*,D_1,D_2|P_X)}\right\}.
\end{align}

\subsection{Converse}
To prove the converse part, we first define 

\begin{align}
\rho_{i,n}'&:=\theta_i\rho_n+\frac{|\calX|\log (n+1)+2\log n}{n},~i=1,2.
\end{align}

In a similar manner as the proof of Lemma~\ref{lbexcessp}, we can prove
\begin{align}
\epsilon_n(D_1,D_2)
&\geq \frac{1}{2}\Pr\left(R_1^*+\rho_{1,n}'<R_Y(\hat{T}_{X^n},D_{1,n})~\mathrm{or}~R_2^*+\rho_{2,n}'<\rvR(R_1^*+\rho_{1,n}',D_{1,n},D_{2,n}|\hat{T}_{X^n})\right)\label{lbep}.
\end{align}

We first consider Case (i) where $R_Y(P_X,D_1)<R_1^*<\rvR(R_1^*,D_1,D_2|P_X)$ and $R_2^*=\rvR(R_1^*,D_1,D_2|P_X)$. 
For large $n$ and $\hatT_{x^n}\in\calA_n'(P_{X})$ (this typical set was defined in~\eqref{eqn:typ}), applying Taylor's expansion   in a similar manner as~\eqref{eqn:dmscon} and noting that $\frac{1}{n}=o(\rho_n)$, we can further lower bound~\eqref{lbep} as follows:
\begin{align}
\epsilon_n(D_1,D_2)
&\geq \frac{1}{2}\Pr\left(R_1^*+\rho_{1,n}'<R_Y(\hat{T}_{X^n},D_{1,n})~\mathrm{or}~R_2^*+\rho_{2,n}'<\rvR(R_1^*+\rho_{1,n}',D_{1,n},D_{2,n}|\hat{T}_{X^nY^n}),\hat{T}_{X^n}\in\calA_n'(P_{X})\right)\\
&\geq\nn \frac{1}{2}\max\left\{\Pr\left(\frac{1}{n}\sum_{i=1}^n\left(\jmath_{YZ}(X_i,R_1^*,D_1,D_2|P_{X})-\rvR(R_1^*,D_1,D_2|P_{XY})\right)>\theta \rho_n+o\left(\rho_n\right)\right) \right.,\\
&\qquad\left.\Pr\left(\frac{1}{n}\sum_{i=1}^n\jmath_Y(X_i,D_1|P_X)>R_1^*+\rho_{1,n}+o\left(\rho_n\right)\right)\right\}-\frac{1}{2}\Pr\left(\hat{T}_{X^n}\notin\calA_n'(P_{X})\right)\label{eqn:lbmdp1}\\
&\nn\geq \frac{1}{2}\max\left\{\Pr\left(\frac{1}{n}\sum_{i=1}^n\left(\jmath_{YZ}(X_i,R_1^*,D_1,D_2|P_{X})-\rvR(R_1^*,D_1,D_2|P_{XY})\right)>\theta \rho_n+o\left(\rho_n\right)\right),\exp(-n\gamma)\right\} \\*
&\qquad-\frac{1}{2}\Pr\left(\hat{T}_{X^n}\notin\calA_n'(P_{X})\right),\label{eqn:lower_mdp}\\
\nn&=\frac{1}{2}\Pr\left(\frac{1}{n}\sum_{i=1}^n\left(\jmath_{YZ}(X_i,R_1^*,D_1,D_2|P_{X})-\rvR(R_1^*,D_1,D_2|P_{XY})\right)>\theta \rho_n+o\left(\rho_n\right)\right)\\*
&\qquad-\frac{1}{2}\Pr\left(\hat{T}_{X^n}\notin\calA_n'(P_{X})\right),\label{eqn:lower_mdp2}
\end{align} 
where~\eqref{eqn:lbmdp1} follows from the simple  facts that $\Pr(\calF\cap\calG)\geq \Pr(\calF)-\Pr(\calG^{\mathrm{c}})$ and $\Pr(\calF\cup\calG)\geq \max\left\{\Pr(\calF),\Pr(\calG)\right\}$ for two events $\calF,\calG$;~\eqref{eqn:lower_mdp} follows from~\eqref{wllnt1};~\eqref{eqn:lower_mdp2} holds for $n$ large enough since the maximum in~\eqref{eqn:lower_mdp} is dominated by the first term which is $\exp(-n\rho_n^2\frac{\theta^2}{2\mathrm{V}(R_1^*,D_1,D_2|P_X)})$ (to be shown in \eqref{eqn:gauorder}). Note that the second term in~\eqref{eqn:lower_mdp2} is in the same order of the first term as evidenced by~\eqref{eqn:weiss}. The proof for Case (i) is now complete. Case (ii) is analogous to Case (i) and hence is omitted.

We now consider Case (iii) where $R_1^*=R_Y(P_X,D_1)$ and $R_2^*=\rvR(R_1^*,D_1,D_2|P_X)$. In a similar manner as Case (i), we obtain
\begin{align}
&\nn\epsilon_n(D_1,D_2)\\*
&\nn\geq\frac{1}{2} \max\left\{\Pr\left(\frac{1}{n}\sum_{i=1}^n\left(\jmath_Y(X_i,D_1|P_X)-R_Y(P_X,D_1)\right)>\rho_{1,n}+o\left(\rho_n\right)\right)\right.,\\*
&\left.\qquad\Pr\left(\frac{1}{n}\sum_{i=1}^n\left(\jmath_{YZ}(X_i,R_1^*,D_1,D_2|P_{X})-\rvR(R_1^*,D_1,D_2|P_{X})\right)>\theta \rho_n+o\left(\rho_n\right)\right)\right\}-\frac{1}{2}\Pr\left(\hat{T}_{X^n}\notin\calA_n'''(P_{X})\right)\label{eqn:upper_mdp},
\end{align}
where $\calA_n'''(P_{X})$ is defined in~\eqref{eqn:typ3}. Note that the second term in~\eqref{eqn:upper_mdp} is in the same order of the first term in~\eqref{termendthird}.

Invoking~\cite[Theorem 3.7.1]{dembo2009large} and the fact that $\rho_n\to 0$,  we obtain
\begin{align}
\lim_{n\to\infty}-\frac{\log \Pr\left(\frac{1}{n}\sum_{i=1}^n\left(\jmath_Y(X_i,D_1|P_X)-R_Y(P_X,D_1)\right)>\rho_{1,n}+o\left(\rho_n\right)\right)}{n\rho_n^2}=\frac{\theta_1^2}{\mathrm{V}(D_1|P_X)},
\end{align}
and
\begin{align}
\nn&\lim_{n\to\infty}-\frac{\log\Pr\left(\frac{1}{n}\sum_{i=1}^n\left(\jmath_{YZ}(X_i,R_1^*,D_1,D_2|P_{X})-\rvR(R_1^*,D_1,D_2|P_{XY})\right)>\theta \rho_n+o\left(\rho_n\right)\right)}{n\rho_n^2}\\*
&=\frac{\theta^2}{2\mathrm{V}(R_1^*,D_1,D_2|P_{X})}\label{eqn:gauorder}.
\end{align}
Note that this calculation applies to~\eqref{termendsec},~\eqref{termendthird},~\eqref{eqn:lower_mdp2} and~\eqref{eqn:upper_mdp}. The proof is now complete.

\section{A Gaussian Memoryless Source with Quadratic Distortion Measures}
\label{sec:maingms}
In this section, we consider a GMS with the quadratic distortion measures  for both $d_1$ and $d_2$. This source-distortion measure triplet is successively refinable~\cite{equitz1991successive}. We note, though, that there exist non-successively refinable continuous source-distortion measure triplets such as the symmetric mixture of Gaussians with quadratic distortion measures~\cite{chowberger}. We do not analyze this  source  here. Here, we assume that $X^n$ is i.i.d.\ where each $X_i$ is generated according to $\calN(0,\sigma^2)$. In this section, we present the second-order coding region and moderate deviations constant as well as their proofs. Note that we cannot simply evaluate the rate-dispersion functions and plug them into Corollaries \ref{srmainresult} and \ref{srmdc} because the (achievability) proofs for those results hinged on the assumption that the alphabets $\calX$, $\calY$ and $\calZ$ are finite. 

Define $\log^+(x):=\log \max\{1,x\}$. Note that for a GMS, the rate-distortion functions are 
\begin{align}
R_Y(P_X,D_1)=\frac{1}{2}\log^+\left(\frac{\sigma^2}{D_1}\right)\label{eqn:gaurated1},\\
R_Y(P_X,D_1)=\frac{1}{2}\log^+\left(\frac{\sigma^2}{D_2}\right)\label{eqn:gaurated2}.
\end{align}
Throughout this section, we consider the case where $\sigma^2>D_1>D_2>0$.

Since a GMS with the quadratic distortion measures is successively refinable, our results in this section parallel the results for successively refinable discrete memoryless source-distortion measure triplets in Section~\ref{srsource}. However, as mentioned we need to redo the proofs as the source here is continuous. Indeed, the proofs contain several novel elements such as the use of appropriately-defined Gaussian types (analogues of discrete types~\cite{csiszar2011information}).
\begin{theorem}
\label{gausecondorder}
Depending on $(R_1^*,R_2^*)$, the optimal second-order $(R_1^*,R_2^*,D_1,D_2,\epsilon)$ coding region for the GMS with the quadratic distortion measure  is as follows:
\begin{itemize}
\item Case (i): $\frac{1}{2}\log \frac{\sigma^2}{D_1}<R_1^*<\frac{1}{2}\log \frac{\sigma^2}{D_2}$ and $R_2^*=\frac{1}{2}\log \frac{\sigma^2}{D_2}$
\begin{align}
\calL(R_1^*,R_2^*,D_1,D_2,\epsilon)=\left\{(L_1,L_2):L_2\geq \sqrt{ \frac{1}{2}} \rm\rmQ^{-1}(\epsilon)\right\}.
\end{align}
\item Case (ii): $R_1^*=\frac{1}{2}\log \frac{\sigma^2}{D_1}$ and $R_2^*>\frac{1}{2}\log \frac{\sigma^2}{D_2}$
\begin{align}
\calL(R_1^*,R_2^*,D_1,D_2,\epsilon)=\left\{(L_1,L_2):L_1 \geq \sqrt{ \frac{1}{2}} \rm\rmQ^{-1}(\epsilon)\right\}.
\end{align}
\item Case (iii): $R_1^*=\frac{1}{2}\log \frac{\sigma^2}{D_1}$ and $R_2^*=\frac{1}{2}\log \frac{\sigma^2}{D_2}$
\begin{align}
\calL(R_1^*,R_2^*,D_1,D_2,\epsilon)=\left\{(L_1,L_2): \min\{L_1,L_2\}\geq \sqrt{\frac{1}{2}} \rm\rmQ^{-1}(\epsilon)\right\}\label{eqn:gautwo}.
\end{align}
\end{itemize} 
\end{theorem}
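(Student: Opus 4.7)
The plan is to adapt the template of Corollary \ref{srmainresult} to the continuous Gaussian source via \emph{Gaussian type classes}. For the achievability, I will apply the sphere-covering theorem of Verger-Gaugry twice to obtain a Gaussian analog of Lemma \ref{typecovering}: first covering each Gaussian type by balls of radius $\sqrt{nD_1}$, then covering each such ball by balls of radius $\sqrt{nD_2}$. For the converse, I will invoke the one-shot bound in Lemma \ref{tiltedconverse}, which, as remarked in the excerpt, extends verbatim to continuous alphabets. A direct computation (cf.\ \cite{kostina2012fixed}) gives, for the GMS in nats,
\begin{align*}
\jmath_Y(x,D_1|P_X) = \frac{1}{2}\log\frac{\sigma^2}{D_1} + \frac{x^2 - \sigma^2}{2\sigma^2},\qquad
\jmath_Z(x,D_2|P_X) = \frac{1}{2}\log\frac{\sigma^2}{D_2} + \frac{x^2 - \sigma^2}{2\sigma^2},
\end{align*}
so that $\mathrm{V}(D_1|P_X) = \mathrm{V}(D_2|P_X) = 1/2$ and $\jmath_Y - \jmath_Z = \tfrac{1}{2}\log(D_2/D_1)$ is a deterministic constant. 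Hence the rate-dispersion matrix is rank $1$ and proportional to the all-ones matrix, so Case (iii) reduces, by exactly the same argument used to derive \eqref{eqn:ones}, to the minimum structure in \eqref{eqn:gautwo}.

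For the achievability, I would quantize the empirical power $\|x^n\|^2/n$ into a grid of step $\delta_n = O(n^{-1/2})$, defining the Gaussian type $\calT_a := \{x^n \in \bbR^n : na \le \|x^n\|^2 < n(a+\delta_n)\}$. For each $a$ in a $\sqrt{(\log n)/n}$-neighborhood of $\sigma^2$, the sphere-covering theorem delivers a set $\calB_Y(a)\subset\bbR^n$ of size $\exp\bigl(\tfrac{n}{2}\log(a/D_1) + O(\log n)\bigr)$ that $D_1$-covers $\calT_a$, and a second application delivers, for each $y^n\in\calB_Y(a)$, a set $\calB_Z(a,y^n)$ of size $\exp\bigl(\tfrac{n}{2}[\log(a/D_2)-\log(a/D_1)] + O(\log n)\bigr)$ that $D_2$-covers $\calT_a \cap \calN_1(y^n,D_1)$. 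Choosing the message sizes as in \eqref{achm1}--\eqref{achm2}, controlling atypically large power by a chi-square tail bound, and applying the univariate Berry-Esseen to $\sum_i (X_i^2 - \sigma^2) = 2\sigma^2\bigl[\jmath_Y(X^n,D_1|P_X^n) - nR_Y(P_X,D_1)\bigr]$ yields the direct parts of all three cases.

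For the converse, I would apply Lemma \ref{tiltedconverse} with $\gamma_1 = \gamma_2 = \tfrac{1}{2}\log n$ and the $n$-letter decomposition \eqref{eqn:tiltex1} to obtain \eqref{ntiltedconverse}, then exploit the deterministic relation $\jmath_Y - \jmath_Z = \tfrac{1}{2}\log(D_2/D_1)$. In Case (iii), this collapses the joint event in \eqref{ntiltedconverse} to a single event involving only the smaller of $L_1$ and $L_2$, and a univariate Berry-Esseen delivers the matching $\sqrt{1/2}\,\rmQ^{-1}(\epsilon)$ constant; Cases (i) and (ii) are univariate from the outset since only one of the two information-spectrum thresholds is active.

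The main technical obstacle is calibrating the Gaussian-type quantization step $\delta_n$: it must be small enough that the per-type rate inflation $\tfrac{1}{2}\log(1+\delta_n/a)$ is $o(1/\sqrt{n})$, yet large enough that the number of nonempty types is only polynomial in $n$, so that the log-cardinality penalty is absorbable into \eqref{achm1}--\eqref{achm2}. A secondary subtlety, absent in the DMS case, is that the atypical set $\{x^n : |\|x^n\|^2/n - \sigma^2| \ge c\sqrt{(\log n)/n}\}$ must be handled by Gaussian concentration (e.g., Chernoff bounds for chi-square random variables) rather than by combinatorial type counting, so that its contribution to $\epsilon_n(D_1,D_2)$ decays as $O(n^{-1/2})$ and does not spoil the Berry-Esseen rate.
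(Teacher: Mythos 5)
Your proposal follows essentially the same route as the paper: Gaussian type classes defined by quantizing the empirical power, two applications of Verger-Gaugry's sphere-covering theorem for the direct part, Berry--Esseen on the $\chi^2$ sum, and the one-shot converse of Lemma~\ref{tiltedconverse} with $\gamma_1=\gamma_2=\tfrac12\log n$ exploiting that $\jmath_Y-\jmath_Z$ is a deterministic constant. One small inconsistency in your parameters: you propose a quantization step $\delta_n = O(n^{-1/2})$, but then correctly note that the per-type threshold inflation must be $o(1/\sqrt n)$; if $\delta_n$ is truly of order $n^{-1/2}$ the additive term $\delta_n$ appearing in the threshold (cf.\ Lemma~\ref{gauuppexcess}) would shift the second-order constant, so you need $\delta_n = o(n^{-1/2})$ — the paper takes $\delta = 1/n$ together with typical-set width $\xi=n^{-1/3}$, which also keeps the number of types polynomial; your $\sqrt{(\log n)/n}$-width typical set is fine too (with a large enough leading constant to make the $\chi^2$ tail $O(n^{-1/2})$), it just gives a subpolynomial number of types rather than the paper's polynomial count.
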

The remark for \eqref{eqn:ones} in Corollary~\ref{srmainresult} applies here. The result in~\eqref{eqn:gautwo} implies that both  excess-distortion events in~\eqref{defexcessprob} are perfectly correlated so the one consisting of the {\em smaller} second-order rate $L_i,~i=1,2$ dominates, since the first-order rates are fixed at the first-order fundamental limits $(\frac{1}{2}\log \frac{\sigma^2}{D_1},\frac{1}{2}\log \frac{\sigma^2}{D_2})$. Qualitatively the region in \eqref{eqn:gautwo} (an unbounded  rectangle) is the same as that corresponding to $D_2=0.5$ in Figure \ref{plot_srregion}.

Recall the definition of moderate deviations constant (cf.~Definition~\ref{defmdconstant}) for $\theta_1$ and $\theta_2$.
\begin{theorem}\label{gau_md}
Depending on $(R_1^*,R_2^*)$, the moderate deviations constant for the GMS with the quadratic distortion measure  is as follows:
\begin{itemize}
\item Case (i): $\frac{1}{2}\log \frac{\sigma^2}{D_1}<R_1^*<\frac{1}{2}\log\frac{\sigma^2}{D_2}$ and $R_2^*=\frac{1}{2}\log \frac{\sigma^2}{D_2}$
\begin{align}
\nu^*(R_1^*,R_2^*|D_1,D_2)=\theta_2^2.
\end{align}
\item Case (ii): $R_1^*=\frac{1}{2}\log \frac{\sigma^2}{D_1}$ and $R_2^*>\frac{1}{2}\log \frac{\sigma^2}{D_2}$
\begin{align}
\nu^*(R_1^*,R_2^*|D_1,D_2)=\theta_1^2.
\end{align}
\item Case (iii): $R_1^*=\frac{1}{2}\log \frac{\sigma^2}{D_1}$ and $R_2^*=\frac{1}{2}\log \frac{\sigma^2}{D_2}$
\begin{align}
\nu^*(R_1^*,R_2^*|D_1,D_2)=\min\{\theta_1^2,\theta^2_2\}.
\end{align}
\end{itemize} 
\end{theorem}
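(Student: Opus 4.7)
The plan is to mirror the structure of the proof of Theorem~\ref{mdconstant} for the DMS case, replacing discrete types with appropriately-defined Gaussian types based on the empirical power $\|x^n\|^2/n$, and replacing the discrete type covering lemma (Lemma~\ref{typecovering}) with a Gaussian analogue derived from the sphere covering theorem of Verger-Gaugry. The key quantities needed are the distortion-tilted information densities for the GMS, which admit the closed forms
\begin{align}
\jmath_Y(x,D_1|P_X)=\frac{1}{2}\log\frac{\sigma^2}{D_1}+\frac{1}{2}\!\left(\frac{x^2}{\sigma^2}-1\right),\quad
\jmath_Z(x,D_2|P_X)=\frac{1}{2}\log\frac{\sigma^2}{D_2}+\frac{1}{2}\!\left(\frac{x^2}{\sigma^2}-1\right),
\end{align}
so that $\mathrm{V}(D_1|P_X)=\mathrm{V}(D_2|P_X)=\frac{1}{2}$ and, crucially, $\jmath_Y(X,D_1|P_X)-R_Y(P_X,D_1)=\jmath_Z(X,D_2|P_X)-R_Z(P_X,D_2)$ almost surely (so their covariance is $\frac{1}{2}$ as well). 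Once these are in hand, the $\nu^*$ formulas in Theorem~\ref{gau_md} drop out of the general formulas in Corollary~\ref{srmdc} with $\mathrm{V}(D_i|P_X)=\frac{1}{2}$, giving $\theta_i^2/(2\cdot \tfrac{1}{2})=\theta_i^2$.

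For the direct part, I would first quantize the empirical power into Gaussian types with a quantization step $\Delta_n$ that is vanishing but slower than $\rho_n$ (the precise choice being different from, and coarser than, that used in the second-order proof, exactly as Section~\ref{sec:maingms} warns). I would define a moderate-deviations typical set of powers $\calA_n'''(\sigma^2):=\{s:|s-\sigma^2|\le \sigma^2 \rho_n/\sqrt{V_{\max}}\}$, choose $\tfrac{1}{n}\log M_1=R_1^*+\theta_1\rho_n$ and $\tfrac{1}{n}\log(M_1M_2)=R_2^*+\theta_2\rho_n$, and use the Gaussian type covering lemma to construct, for each admissible quantized power $s\in\calA_n'''(\sigma^2)$, nested codebooks that $D_1$- and $D_2$-cover the corresponding spherical shell with log-cardinalities at most $\tfrac{n}{2}\log(s/D_1)+o(n\rho_n)$ and $\tfrac{n}{2}\log(s/D_2)+o(n\rho_n)$. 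Then I would union-bound
\begin{align}
\epsilon_n(D_1,D_2)\le \Pr(\|X^n\|^2/n\notin\calA_n'''(\sigma^2))+\text{(two conditional overflow events)},
\end{align}
Taylor-expand the local rate functions $\tfrac{1}{2}\log(s/D_i)$ around $s=\sigma^2$, and apply the moderate deviations principle of~\cite[Theorem 3.7.1]{dembo2009large} to the sums $\tfrac{1}{n}\sum_i \jmath_Y(X_i,D_1|P_X)$ and $\tfrac{1}{n}\sum_i \jmath_Z(X_i,D_2|P_X)$, exactly as in Section~\ref{proofmdcach}. In Cases~(i) and~(ii) only one overflow event is non-negligible, and in Case~(iii) the two events are perfectly correlated (because the tilted information densities differ only by a constant), so the $\max$ in the union bound realizes as $\min\{\theta_1^2,\theta_2^2\}$.

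For the converse, I would use the one-shot converse bound in Lemma~\ref{tiltedconverse}, whose proof (as noted in the paper) applies to arbitrary alphabets and in particular to the GMS. Setting $\gamma_1=\gamma_2=\tfrac{1}{2}\log n$, choosing $\log M_1=nR_1^*+n\theta_1\rho_n-\gamma_1$ and $\log(M_1M_2)=nR_2^*+n\theta_2\rho_n-\gamma_2$, and using the memoryless additivity in~\eqref{eqn:tiltex1}, I obtain a lower bound of the form
\begin{align}
\epsilon_n(D_1,D_2)\ge \max\!\left\{\Pr\!\left(\tfrac{1}{n}\!\sum_i \jmath_Y(X_i,D_1|P_X)-R_1^*>\theta_1\rho_n+o(\rho_n)\right),\Pr\!\left(\tfrac{1}{n}\!\sum_i \jmath_Z(X_i,D_2|P_X)-R_2^*>\theta_2\rho_n+o(\rho_n)\right)\right\}-o(e^{-n\rho_n^2}),
\end{align}
and then apply the MDP again to each tail probability to conclude the matching upper bound on $\nu^*$. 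The only subtlety relative to the DMS case is verifying that $n\rho_n^2\to\infty$ is enough to absorb the $O(\sqrt{n})$ correction $\gamma_i=\tfrac{1}{2}\log n$ into $o(n\rho_n)$, which it is because $\tfrac{\log n}{n\rho_n}\to 0$.

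The main obstacle is the direct part: establishing a Gaussian type covering lemma with the right first-order rate plus an $O(\log n/n)$ overhead, and choosing the quantization $\Delta_n$ of the empirical power coarsely enough that the number of types remains $\poly(n)$ (so that the union bound over types is harmless against the $e^{-n\rho_n^2}$ target) yet finely enough that Taylor expansion around $s=\sigma^2$ contributes only $o(\rho_n)$ to each rate. Once the covering lemma and the quantization are in place, the rest of the moderate deviations argument proceeds as in Section~\ref{proofmdc}, with the MDP doing the heavy lifting in both directions.
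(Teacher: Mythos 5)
Your achievability strategy is correct and essentially the same as the paper's (Section~\ref{sec:prf_g_m}): pick the typical-set half-width $\xi$ on the order of $\rho_n$ rather than $n^{-1/3}$, keep $\delta=1/n$ so that the $O(\delta)$ correction in Lemma~\ref{gauuppexcess} and the $O(\log n/n)$ rate overhead for transmitting the type are both $o(\rho_n)$, and apply the moderate deviations principle to the resulting tail events. One small slip: you describe the quantization step $\Delta_n$ as ``vanishing but slower than $\rho_n$''; if $\Delta_n\gg\rho_n$ the additive $-\Delta_n$ would swamp the $\theta_i\rho_n$ perturbation. What actually differs between the second-order and moderate-deviations regimes is the typical-set width $\xi$, while $\delta=1/n\ll\rho_n$ is the same in both.

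The converse contains a genuine gap. With $\gamma_1=\gamma_2=\tfrac12\log n$, the additive loss in Lemma~\ref{tiltedconverse} is $\exp(-\gamma_1)+\exp(-\gamma_2)=2/\sqrt{n}$, and you assert this is $o(e^{-n\rho_n^2})$. That requires $n\rho_n^2=o(\log n)$, which Definition~\ref{defmdconstant} does not impose: for example with $\rho_n=n^{-1/3}$ one has $n\rho_n^2=n^{1/3}\gg\log n$, so $2/\sqrt{n}\gg e^{-n\rho_n^2}$ and the bound $\epsilon_n\ge(\cdots)-2/\sqrt{n}$ becomes vacuous, the main term being exponentially smaller than the subtracted loss. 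Your observation that $\log n/(n\rho_n)\to 0$ only controls the $o(\rho_n)$ shift in the rate thresholds, not this additive penalty. The paper's remedy is to set $\gamma_i=n\zeta\theta_i\rho_n$ for an arbitrary $\zeta\in(0,1)$, so that $\exp(-\gamma_i)=\exp(-n\zeta\theta_i\rho_n)$ decays strictly faster than $e^{-Cn\rho_n^2}$ for any fixed $C$ (because $n\rho_n\gg n\rho_n^2$ as $\rho_n\to 0$); this introduces a $(1+\zeta)$ factor in the threshold $\theta_i\rho_n$, which is eliminated by letting $\zeta\downarrow 0$ after invoking the MDP.
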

We note that the results in Theorem \ref{gausecondorder} and \ref{gau_md} do not depend on the distortion and the source variance. This is expected since the dispersion of lossy source coding for Gaussian sources is $1/2$ nats$^2$ per source symbol~\cite{ingber2011,kostina2012fixed}. Similarly the moderate deviations constant for Gaussian rate-distortion also does not depend on the distortion level and the source variance~\cite{tan2012moderate}.

\subsection{Preliminaries for the  Proofs}
In this subsection, we present some preliminaries for the proofs of Theorems \ref{gausecondorder} and \ref{gau_md}. In particular, we present an appropriate definition of Gaussian types for our problem and  a type covering lemma for Gaussian types.

Let $\xi>0$ be specified later. Define the typical set 
\begin{align}
\calU^{\xi}:=\left\{x^n: e^{-2\xi}<\frac{\|x^n\|^2}{n\sigma^2}<e^{2\xi}\right\}.
\end{align}
In a similar manner as Eqn.~(35) in~\cite{tan2012moderate} (Cram\'er's theorem~\cite{dembo2009large}), we obtain
\begin{align}
\Pr\left(X^n\notin \calU^{\xi}\right)
&\leq 4\exp\left(-n I(\xi) \right) \label{eqn:cramer} 
\end{align} 
where the {\em large deviations rate function} of the $\chi_1^2$ random variable  is  
\begin{equation}
I(\xi) := \frac{1}{2} (e^{2\xi}-1-2\xi). \label{eqn:ratefunc}
\end{equation}
Let $\delta>0$ be specified later and let the number of types be
\begin{equation}
k=\left\lceil{\frac{e^{2\xi}-e^{-2\xi}}{\delta}} \right\rceil+1. \label{eqn:num_types}
\end{equation}
Note that $\delta $ and $\xi$ control the number of types. Define $\Lambda(i)=\sigma^2e^{-2\xi}+(i-1)\delta\sigma^2$. Also define the {\em GMS type classes} 
\begin{align}
\calU_i:=\left\{x^n: \Lambda(i-1)\leq \frac{\|x^n\|^2}{n}\leq \Lambda(i)\right\},~i\in[1:k]\label{defui}.
\end{align}
Hence,
\begin{align}
\calU^{\xi}\subset\bigcup_{i=1}^k \calU_i.
\end{align}
Note that $\calU_i$ is a collection of GMS sequences with normalized squared $l_2$ norm (power) within $(\Lambda(i-1),\Lambda(i)]$. Hence, we define the type of a GMS sequence $x^n$ as $i$ if $x^n\in\calU_i$. In particular, if $x^n\notin \calU_{i}$ for all $i\in[1:k]$, we define the type of $x^n$ as $0$. See also ~\cite[Eqn.~(61)]{arikan1998guessing} and~\cite[Definition~1]{kelly2012reliability} for other definitions of Gaussian type classes.

We then present a type covering lemma for a  GMS with the quadratic distortion measures which is analogous to the type covering lemma for a DMS with arbitrary distortion measures in Lemma~\ref{typecovering}.
\begin{lemma}
\label{Gaucovering}
Given $x^n\in\calU_i$, the following holds:
\begin{itemize}
\item There exists a set $\calB_Y\subset\bbR^n$ such that 
\begin{align}
\log |\calB_Y|\leq \frac{n}{2}\log \frac{\Lambda(i)}{D_1}+\frac{5}{2}\log n+\log 6,
\end{align}
and $\calB_Y$ $D_1$-covers $\calU_i$, i.e.,
\begin{align}
\calU_i\subset \bigcup_{y^n\in\calB_Y}\calN(y^n,D_1),
\end{align}
where 
\begin{align}
\calN(y^n,D_1):=\left\{x^n:\|x^n-y^n\|^2\leq D_1\right\}.
\end{align}
\item For each $y^n\in\calB_Y$, there exists a set $\calB_{Z}(y^n)\subset\bbR^n$ such that 
\begin{align}
\log \left(\sum_{y^n\in\calB_Y}|\calB_Z(y^n)|\right)\leq \frac{n}{2}\log \frac{\Lambda(i)}{D_2}+5\log n+2\log 6,
\end{align}
and $\calB_{Z}(y^n)$ $D_2$-covers $\calN_1(y^n,D_1)$, i.e.,
\begin{align}
\calN_1(y^n,D_1)=\bigcup_{z^n\in\calB_Z(y^n)}\calN(z^n,D_2).
\end{align}
\end{itemize}
\end{lemma}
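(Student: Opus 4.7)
The plan is to reduce both covering statements to direct applications of the Verger-Gaugry sphere covering theorem~\cite{verger2005covering}, which gives, for any ball of radius $R$ in $\bbR^n$ and any $r < R$, a covering by at most $6\, n^{5/2}\,(R/r)^n$ balls of radius $r$. The translation from the distortion measure to Euclidean balls is immediate: since $d_1(x^n,y^n) = \tfrac{1}{n}\|x^n-y^n\|^2$, the condition $d_1(x^n,y^n)\le D_1$ is equivalent to $\|x^n - y^n\|\le \sqrt{nD_1}$, and similarly for $d_2$. Thus each ``distortion ball'' $\calN_j(\cdot,D_j)$ is an Euclidean ball of radius $\sqrt{n D_j}$.

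For part~(i), I would first observe that by the definition of $\calU_i$ in~\eqref{defui}, every $x^n\in\calU_i$ satisfies $\|x^n\| \le \sqrt{n\Lambda(i)}$, so $\calU_i$ is contained in the Euclidean ball $B(\mathbf{0},\sqrt{n\Lambda(i)})$. Applying the Verger-Gaugry theorem with $R=\sqrt{n\Lambda(i)}$ and $r=\sqrt{nD_1}$ yields a set $\calB_Y\subset\bbR^n$ of centers with
\begin{align}
|\calB_Y| \;\le\; 6\,n^{5/2}\Bigl(\tfrac{\Lambda(i)}{D_1}\Bigr)^{n/2},
\end{align}
whose corresponding balls $D_1$-cover $B(\mathbf{0},\sqrt{n\Lambda(i)})\supset \calU_i$. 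Taking logarithms gives exactly the claimed bound $\tfrac{n}{2}\log\tfrac{\Lambda(i)}{D_1} + \tfrac{5}{2}\log n + \log 6$.

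For part~(ii), I would apply the same theorem a second time, but now to each ball $\calN_1(y^n,D_1)$, which is an Euclidean ball of radius $\sqrt{nD_1}$ centered at $y^n$. Verger-Gaugry with $R=\sqrt{nD_1}$ and $r=\sqrt{nD_2}$ produces, for every $y^n\in\calB_Y$, a set $\calB_Z(y^n)\subset\bbR^n$ with $|\calB_Z(y^n)|\le 6\,n^{5/2}(D_1/D_2)^{n/2}$ whose balls $D_2$-cover $\calN_1(y^n,D_1)$. Summing over $y^n\in\calB_Y$ and using the bound from part~(i),
\begin{align}
\log\Bigl(\sum_{y^n\in\calB_Y} |\calB_Z(y^n)|\Bigr)
&\le \log|\calB_Y| + \log\bigl(6\,n^{5/2}\bigr) + \tfrac{n}{2}\log\tfrac{D_1}{D_2} \nn\\
&\le \tfrac{n}{2}\log\tfrac{\Lambda(i)}{D_1} + \tfrac{5}{2}\log n + \log 6 + \tfrac{5}{2}\log n + \log 6 + \tfrac{n}{2}\log\tfrac{D_1}{D_2} \nn\\
&= \tfrac{n}{2}\log\tfrac{\Lambda(i)}{D_2} + 5\log n + 2\log 6,
\end{align}
which is the stated bound.

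The steps themselves involve no difficulty beyond invoking the Verger-Gaugry bound twice and doing bookkeeping; the real ``obstacle'' in this part of the paper is conceptual rather than computational, namely the need to first quantize the source power via the Gaussian type classes $\calU_i$ so that every sequence in a single type class can be covered by balls centered at the origin of a common radius. Getting this quantization right, together with the later choices of $\xi$ and $\delta$ in~\eqref{eqn:num_types} used when this lemma is subsequently applied to second-order and moderate deviations analyses, is where the care must be taken; the covering lemma itself falls out almost immediately from Verger-Gaugry.
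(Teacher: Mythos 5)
Your proposal is correct and takes essentially the same approach as the paper: the paper gives only a one-paragraph proof sketch stating that Verger-Gaugry's sphere-covering theorem is applied twice (once to $D_1$-cover $\calU_i$, once to $D_2$-cover each $\calN_1(y^n,D_1)$), and your write-up fills in exactly those two applications together with the bookkeeping; you also correctly handle the translation between the normalized distortion condition $d_1(x^n,y^n)\le D_1$ and Euclidean balls of radius $\sqrt{nD_1}$ (the paper's display defining $\calN(y^n,D_1)$ appears to omit the factor $n$, but your reading is the one consistent with the rest of the paper).
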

The proof of Lemma~\ref{Gaucovering} uses~\cite[Theorem 1.2]{verger2005covering} multiple times. For the first reconstruction using $Y^n$, we observe that $6n^{5/2}(\Lambda(i)/D_1)^{n/2}$ points can $D_1$-cover $\calU_i$. For the second reconstruction using $Z^n$, we observe that  $6n^{5/2}(D_1/D_2)^{n/2}$ points suffice to $D_2$-cover each ball centered at $y^n\in\calB_{Y}$ with radius $D_1$.

We now present an upper bound on the excess-distortion probability of  the code prescribed by Lemma \ref{Gaucovering}. Recall that $k$ is the number of types. Similarly to the proof of Lemma~\ref{uppexcess} for a DMS, for a GMS, we also need to transmit the type. This requires no more that $\log k$ nats.  Observe that there is a tradeoff between the size of the typical set controlled by $\xi$ and the number of types $k$. As $\xi$ increases, the probability that a sequence is atypical decreases. See \eqref{eqn:cramer}--\eqref{eqn:ratefunc}. However, the number of types increases. Depending on the regime (second-order or moderate deviations) we will choose $\xi$ differently. Now, given any $(n,M_1,M_2)$-code, define
\begin{align}
nR_{1,n}&=\log M_1-\frac{5}{2}\log n-\log k-\log 6, \label{eqn:defR1n}\\*
nR_{2,n}&=\log (M_1M_2)-5\log n-2\log 6.
\end{align}
\begin{lemma}
\label{gauuppexcess}
There exists an $(n,M_1,M_2)$-code such that 
\begin{align}&\epsilon_n(D_1,D_2)\leq 4\exp\left(-nI(\xi)\right)+\Pr\left(\frac{1}{n}\sum_{i=1}^n \frac{X_i^2}{\sigma^2}>\frac{D_1}{\sigma^2}\exp(2R_{1,n})-\delta~\mathrm{or}~\frac{1}{n}\sum_{i=1}^n \frac{X_i^2}{\sigma^2}>\frac{D_2}{\sigma^2}\exp(2R_{2,n})-\delta\right)\label{gautypeach}.
\end{align}
\end{lemma}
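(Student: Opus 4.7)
The plan mirrors Lemma~\ref{uppexcess}, with the Gaussian types $\calU_1,\ldots,\calU_k$ from~\eqref{defui} playing the role of the discrete types and Lemma~\ref{Gaucovering} replacing the discrete type covering lemma. Failure will be charged either to $X^n$ falling outside the typical set $\calU^\xi$ (already bounded by Cram\'er via~\eqref{eqn:cramer}) or to the type of $X^n$ being too energetic for the rate budget prescribed by $(R_{1,n},R_{2,n})$.

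First I would construct the code. For each $i \in [1:k]$, apply Lemma~\ref{Gaucovering} to obtain a $D_1$-cover $\calB_Y(i)$ of $\calU_i$ and, for every $y^n\in\calB_Y(i)$, a $D_2$-cover $\calB_Z(y^n)$ of $\calN_1(y^n,D_1)$. Encoder $f_1$ outputs the pair (type index $i$, index of a nearest $\hat y^n\in\calB_Y(i)$) when $x^n\in\calU_i$, and an arbitrary symbol otherwise; encoder $f_2$ outputs the index of a nearest $\hat z^n\in\calB_Z(\hat y^n)$. The decoders invert these indices. Under this construction, $S_1$ takes at most $\sum_i |\calB_Y(i)|$ values and the pair $(S_1,S_2)$ at most $\sum_i\sum_{y^n\in\calB_Y(i)}|\calB_Z(y^n)|$ values.

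Next I would match the code's cardinality with $(M_1,M_2)$. From~\eqref{eqn:defR1n}, $M_1 = 6k n^{5/2} e^{nR_{1,n}}$ and $M_1 M_2 = 36 n^5 e^{nR_{2,n}}$. Using the two bounds in Lemma~\ref{Gaucovering}, the code encodes $x^n\in\calU_i$ successfully provided $6 n^{5/2}(\Lambda(i)/D_1)^{n/2} \le M_1/k$ and $36 n^5(\Lambda(i)/D_2)^{n/2} \le M_1 M_2/k$. Substituting the identities above and taking logarithms, these reduce to $\Lambda(i)\le D_1 e^{2R_{1,n}}$ and $\Lambda(i)\le D_2 e^{2R_{2,n}}$ respectively, once the stray $O(\log k)$ overhead generated by the $k$-fold sum in the second inequality is absorbed into the polynomial $O(\log n)$ prefactors (valid because $k=O(\poly(n))$ for the choices of $(\xi,\delta)$ used in subsequent applications). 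Consequently the code fails on $x^n\in\calU^\xi$ only if, for the type $i$ containing $x^n$, $\Lambda(i)>D_1 e^{2R_{1,n}}$ or $\Lambda(i)>D_2 e^{2R_{2,n}}$.

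Finally I would convert these type-level conditions into conditions on $\|x^n\|^2/(n\sigma^2)$. By~\eqref{defui}, $x^n\in\calU_i$ forces $\|x^n\|^2/n > \Lambda(i-1) = \Lambda(i)-\delta\sigma^2$; each failure condition therefore yields $\|x^n\|^2/(n\sigma^2) > (D_j/\sigma^2) e^{2R_{j,n}} - \delta$ for $j=1$ or $j=2$. Combining with $\Pr(X^n\notin\calU^\xi)\le 4 e^{-nI(\xi)}$ from~\eqref{eqn:cramer} via the union bound produces exactly the claimed inequality. The main obstacle will be the rate bookkeeping in the middle step: the $\log k$ overhead from summing per-type $z^n$-cover sizes is not explicit in $R_{2,n}$, and one must verify that this surplus term sits safely inside the polynomial prefactor $36 n^5$ (equivalently, inside the $5\log n + 2\log 6$ slack) for every admissible $(\xi,\delta)$, so that the norm inequalities retain the clean form stated in the lemma.
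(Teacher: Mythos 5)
Your proposal is correct and mirrors the paper's proof exactly: construct the code by invoking Lemma~\ref{Gaucovering} for each GMS type class, have the encoder declare an error when $X^n$ is atypical (controlled by Cram\'er via~\eqref{eqn:cramer}) or when the type's power $\Lambda(i)$ exceeds the rate budget determined by $(R_{1,n},R_{2,n})$, and then translate the type-level conditions into inequalities on $\frac{1}{n}\sum_i X_i^2/\sigma^2$ via $\Lambda(i-1)=\Lambda(i)-\delta\sigma^2$ and the disjointness of the $\calU_i$. The one place you diverge is the second sufficient condition, where you charge $M_1M_2/k$ rather than the paper's $M_1M_2$; this extra $\log k$ accounting of the type-index overhead, which you correctly observe is absorbed since $k=O(\poly(n))$ in every subsequent application, is in fact a point the paper's own proof glosses over when it asserts that satisfying its two rate thresholds implies ``no error will be made,'' so your more cautious bookkeeping is well placed and does not affect the final asymptotics.
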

The proof of Lemma~\ref{gauuppexcess} is analogous to Lemma~\ref{uppexcess} and given in Appendix~\ref{proofgauuppexcess}.

\subsection{Proof of Second-Order Asymptotics (Theorem~\ref{gausecondorder})} \label{sec:prf_g_2}
\label{proofgausecond}
We begin with the achievability for Theorem \ref{gausecondorder}. Let $\xi :=n^{-1/3}$ and $\delta :=1/n$. Invoking~\eqref{eqn:ratefunc} and Taylor expansion, we obtain that the first term on the right-hand-side of \eqref{gautypeach} behaves as
\begin{align}
4\exp\left(-n I(\xi) \right)=4\exp\left(-n^{1/3}+o(n^{1/3})\right) =:\kappa_n\to 0.
\end{align}
Additionally, define
\begin{align}
\log M_1&=nR_1^*+L_1\sqrt{n}+\frac{7}{2}\log n+\log 6, \label{eqn:defM1}\\*
\log (M_1M_2)&=nR_2^*+L_2\sqrt{n}+5\log n+2\log 6.\label{eqn:defM2}
\end{align}
Now, with our choice of $\xi$ and $\delta$, we see from \eqref{eqn:num_types} that the number of types is $k=\ceil{4n^{2/3}+O(n^{1/3})}+1$. For $n$ large enough, $k\leq n$. Hence, we only have \emph{polynomially} (in fact at most {\em linearly}) many types. Furthermore, observe that the coefficient of the $\log n$ terms in \eqref{eqn:defR1n} and \eqref{eqn:defM1} differ by one because  we need to transmit the type requiring $\log k\le \log n$ nats (cf.\ proof of Lemma \ref{gauuppexcess}). The terms scaling as $O(\log n)$ in \eqref{eqn:defM1} and \eqref{eqn:defM2} do not affect the second-order coding region.

Now, note that $Y_i=X_i^2/\sigma^2$ and $Y_i$ is $\chi_1^2$-distributed.  Invoking Lemma~\ref{gauuppexcess}, we obtain
\begin{align}
\nn&1-\epsilon_n(D_1,D_2)\\*
&\geq -\kappa_n+\Pr\left(\frac{1}{n}\sum_{i=1}^n Y_i\leq \frac{D_1}{\sigma^2}\exp\left(2\left(R_1^*+\frac{L_1}{\sqrt{n}}\right)\right)-\frac{1}{n},~\frac{1}{n}\sum_{i=1}^n Y_i\leq \frac{D_2}{\sigma^2}\exp\left(2\left(R_2^*+\frac{L_2}{\sqrt{n}}\right)\right)-\frac{1}{n}\right)\label{eqn:gaulbexcess}.
\end{align}
We now consider different cases. We first consider Case (i) where $\frac{1}{2}\log \frac{\sigma^2}{D_1}<R_1^*<\frac{1}{2}\log \frac{\sigma^2}{D_2}$ and $R_2^*=\frac{1}{2}\log \frac{\sigma^2}{D_2}$. Since $e^x\geq 1+x$ for all $x \in\bbR$, we obtain
\begin{align}
\exp\left(\frac{2L_i}{\sqrt{n}}\right)
&\geq 1+\frac{2L_i}{\sqrt{n}},~i=1,2.
\end{align}
Define $\tau_n=\frac{1}{n}$. According to the weak law of large numbers, we obtain
\begin{align}
\nn&\Pr\left(\frac{1}{n}\sum_{i=1}^n Y_i\leq \frac{D_1}{\sigma^2}\exp\left(2\left(R_1^*+\frac{L_1}{\sqrt{n}}\right)\right)-\tau_n\right)\\
&\geq \Pr\left(\frac{1}{n}\sum_{i=1}^n Y_i\leq \frac{D_1}{\sigma^2}\exp(2R_1^*)\left(1+\frac{2L_1}{\sqrt{n}}\right)-\tau_n\right)\to 1.
\end{align}
Invoking  the Berry-Esseen Theorem, we obtain
\begin{align}
\nn&\Pr\left(\frac{1}{n}\sum_{i=1}^n Y_i\leq \frac{D_2}{\sigma^2} \exp\left(2\left(\frac{1}{2}\log \frac{\sigma^2}{D_2}+\frac{L_2}{\sqrt{n}}\right)\right)-\tau_n\right)\\
&\geq \Pr\left(\frac{1}{n}\sum_{i=1}^n Y_i\leq 1+\frac{2L_2}{\sqrt{n}}-\tau_n\right)\\
&\geq 1-\rmQ\left(\sqrt{2}L_2-\frac{1}{\sqrt{n}}\right)-\sqrt{\frac{2}{n}}\label{deftaun}.
\end{align}
Hence, by using the bound in \eqref{eqn:gaulbexcess}, we obtain
\begin{align}
\epsilon_n(D_1,D_2)\leq \rmQ\left(\sqrt{2}L_2-\frac{1}{\sqrt{n}}\right)+\kappa_n+\sqrt{\frac{2}{n}}
\end{align}
Thus, if $(L_1,L_2)$ satisfy 
\begin{align}
L_2\geq\sqrt{\frac{1}{2}} \rmQ^{-1}(\epsilon),
\end{align}
then $\limsup_{n\to\infty}\epsilon_n(D_1,D_2)\leq \epsilon$.

Case (ii) is analogous to Case (i) and thus omitted. The most interesting case is Case (iii), where $R_1^*=\frac{1}{2}\log \frac{\sigma^2}{D_1}$ and $R_2^*=\frac{1}{2}\log \frac{\sigma^2}{D_2}$. The covariance matrix of $[Y_i-1, Y_i-1]$ is
\begin{align}
\mathbf{V}=2\cdot \mathrm{ones}(2,2).
\end{align}
Because $\bV$ is singular, we cannot use the multi-variate Berry-Esseen Theorem here. However, the analysis is simple. Indeed, 
\begin{align}
&\nn\Pr\left(\frac{1}{n}\sum_{i=1}^n Y_i\leq \frac{D_1}{\sigma^2}\exp\left(2\left(R_1^*+\frac{L_1}{\sqrt{n}}\right)\right)-\tau_n,~\frac{1}{n}\sum_{i=1}^n Y_i\leq \frac{D_2}{\sigma^2}\exp\left(2\left(R_2^*+\frac{L_2}{\sqrt{n}}\right)\right)-\tau_n\right)\\
&\geq \Pr\left(\frac{1}{n}\sum_{i=1}^n (Y_i-1)\leq \frac{2L_1}{\sqrt{n}}-\tau_n,~\frac{1}{n}\sum_{i=1}^n (Y_i-1)\leq \frac{2L_2}{\sqrt{n}}-\tau_n\right)\\
&=\Pr\left(\frac{1}{n}\sum_{i=1}^n (Y_i-1)\leq \frac{2\min\{L_1,L_2\}}{\sqrt{n}}-\tau_n\right)\\*
&\geq 1-\rmQ\left(\sqrt{2}\min\{L_1,L_2\}-\frac{1}{\sqrt{n}}\right)-\sqrt{\frac{2}{n}}.
\end{align}
Hence,
\begin{align}
\epsilon_n(D_1,D_2)\leq \rmQ\left(\sqrt{2}\min\{L_1,L_2\}-\frac{1}{\sqrt{n}}\right)+\kappa_n+\sqrt{\frac{2}{n}}
\end{align}
If $(L_1,L_2)$ satisfy 
\begin{align}
\min\{L_1,L_2\}\geq \sqrt{\frac{1}{2}} \rmQ^{-1}(\epsilon),
\end{align}
then $\limsup_{n\to\infty}\epsilon_n(D_1,D_2)\leq \epsilon$.

Next, we turn to the converse proof. This follows from Lemma~\ref{tiltedconverse}. As shown in~\cite[Example 2]{kostina2012fixed}, for a GMS $\calN(0,\sigma^2)$, 
\begin{align}
\jmath_Y(x,D_1|P_X)&=\frac{1}{2}\log \frac{\sigma^2}{D_1}+\frac{1}{2}\left(\frac{x^2}{\sigma^2}-1\right),
\end{align}
and similarly for $\jmath_Z(x,D_2|P_X)$. 
Hence,
\begin{align}
\mathrm{V}(D_1|P_X)=\mathrm{V}(D_2|P_X)= \frac{1}{2},\label{eqn:vard1}
\end{align}
and similarly,
\begin{align}
\mathrm{Cov}[\jmath_Y(X,D_1|P_X),\jmath_Z(X,D_2|P_X)]
&=\frac{1}{2}.
\end{align}
The covariance matrix is
\begin{align}
\mathbf{V}(D_1,D_2|P_X)=\frac{1}{2} \cdot \mathrm{ones}(2,2) .
\end{align}
The rest of the proof is similar to  the converse proof of Case (iii)(a) of Corollary~\ref{srmainresult} (Section~\ref{proofsrmain}).

\subsection{Proof of Moderate Deviations (Theorem~\ref{gau_md})}\label{sec:prf_g_m}
The achievability part can be done in a similar manner as~\cite[Theorem 5]{tan2012moderate}. Here we provide an alternative proof which parallels our analysis for a DMS in Section~\ref{proofmdcach} and the achievability proof of second-order asymptotics for the a GMS in Section~\ref{proofgausecond}.

Define 
\begin{align}
\rho_{1,n}'&:=\theta_1\rho_n-\frac{7\log n}{2n}-\frac{\log 6}{2n},\\
\rho_{2,n}'&:=\theta_2\rho_n-\frac{5\log n}{n}-\frac{2\log 6}{2n}.
\end{align}

We first consider Case (i) where $\frac{1}{2}\log \frac{\sigma^2}{D_1}<R_1^*<\frac{1}{2}\log\frac{\sigma^2}{D_2}$ and $R_2^*=\frac{1}{2}\log\frac{\sigma^2}{D_2}$. Choose $\xi^2=\frac{\theta_2^2\rho_n^2}{2\mathrm{V}(D_2|P_X)}$ and $\delta=\frac{1}{n}$. Then  from \eqref{eqn:num_types}, $k=\ceil{n(4\xi+O(\xi^2))}+1=O(n\rho_n)\leq n$ for large $n$. Thus, similarly to the proof of the achievability part for the second-order asymptotics, we have only \emph{at most linearly} many types which requires $\log k\leq \log n$ nats to transmit and does not affect the moderate deviations constant. Thus, invoking Lemma~\ref{gauuppexcess}, we see that there exists an $(n,M_1,M_2)$-code such that 
\begin{align}
\log M_i=n\left(R_i^*+\rho_{i,n}'\right)~i=1,2,
\end{align}
and
\begin{align}
&\epsilon_n(D_1,D_2)\nn\leq 
4\exp\left(-n\left(\frac{\theta_2^2\rho_n^2}{2\mathrm{V}(D_2|P_X)}+o(\rho_n^2)\right)\right)\\*
&\qquad+\Pr\left(\frac{1}{n}\sum_{i=1}^n \frac{X_i^2}{\sigma^2}>\frac{D_1}{\sigma^2}\exp(2(R_1^*+\rho_{1,n}'))-\frac{1}{n}~\mathrm{or}~\frac{1}{n}\sum_{i=1}^n \frac{X_i^2}{\sigma^2}>\frac{D_2}{\sigma^2}\exp(2(R_2^*+\rho_{2,n}'))-\frac{1}{n}\right)\label{eqn:gaumdcup1}.
\end{align}
We then focus on the second term in~\eqref{eqn:gaumdcup1}. According to the Chernoff bound, we obtain that for some constant $\gamma>0$,
\begin{align}
\Pr\left(\frac{1}{n}\sum_{i=1}^n \frac{X_i^2}{\sigma^2}>\frac{D_1}{\sigma^2}\exp(2(R_1^*+\rho_{1,n}'))-\frac{1}{n}\right)\leq \exp(-n\gamma).
\end{align}
By the union bound,
\begin{align}
&\nn\epsilon_n(D_1,D_2)\\*
&\leq 
4\exp\left(-n\left(\frac{\theta_2^2\rho_n^2}{2\mathrm{V}(D_2|P_X)}+o(\rho_n^2)\right)\right)+\exp(-n\gamma)+\Pr\left(\frac{1}{n}\sum_{i=1}^n \frac{X_i^2}{\sigma^2}>\frac{D_2}{\sigma^2}\exp(2(R_2^*+\rho_{2,n}'))-\frac{1}{n}\right)\label{eqn:unimdcgau}.
\end{align}
Recall that $Y_i=\frac{X_i^2}{\sigma^2}$. For the third term in~\eqref{eqn:unimdcgau},
\begin{align}
\Pr\left(\frac{1}{n}\sum_{i=1}^n \frac{X_i^2}{\sigma^2}>\frac{D_2}{\sigma^2}\exp(2(R_2^*+\rho_{2,n}'))-\frac{1}{n}\right)
&=\Pr\left(\frac{1}{n}\sum_{i=1}^n (Y_i-1)>2\rho_{2,n}'+O(\rho_{2,n}'^2)-\frac{1}{n}\right)\label{eqn:taylorgauc1}\\
&=\Pr\left(\frac{1}{n}\sum_{i=1}^n (Y_i-1)>2\theta_2\rho_n+o(\rho_n)\right)\label{eqn:proprhon}.
\end{align}
where~\eqref{eqn:taylorgauc1} follows from Taylor expansion and~\eqref{eqn:proprhon} follows due to~\eqref{normalmdc}, from which we have $\frac{\log n}{n}=o(\rho_n)$, $\frac{1}{n}=o(\rho_n)$, and $\rho_{2,n}'^2 =O(\rho_n^2)=o(\rho_n)$. Invoking~\cite[Theorem 3.7.1]{dembo2009large}, we obtain
\begin{align}
\lim_{n\to\infty}-\frac{\log \Pr\left(\frac{1}{n}\sum_{i=1}^n (Y_i-1)>2\theta_2\rho_n+o(\rho_n)\right)}{n\rho_n^2}=\frac{4\theta_2^2}{4}=\theta_2^2\label{eqn:gaumd2}.
\end{align}
Note that in~\eqref{eqn:unimdcgau}, the first and third terms dominate  and they decay at the same rate (cf.~\eqref{eqn:vard1}). Hence, we obtain
\begin{align}
\liminf_{n\to\infty}-\frac{\log \epsilon_n(D_1,D_2)}{n\rho_n^2}\geq \theta_2^2.
\end{align}
Case (ii) is analogous to Case (i) and thus omitted. We thus focus on case (iii) where $R_1^*=\frac{1}{2}\log\frac{\sigma^2}{D_1}$ and $R_2^*=\frac{1}{2}\log\frac{\sigma^2}{D_2}$. Choose $\delta=\frac{1}{n}$ and
\begin{align}
\xi^2=\min\left\{\frac{\theta_1^2\rho_n^2}{2\mathrm{V}(D_1|P_X)},\frac{\theta_1^2\rho_n^2}{2\mathrm{V}(D_2|P_X)}\right\}.
\end{align}
In a similar manner as Case (i), we can prove that  that there exists an $(n,M_1,M_2)$-code such that 
\begin{align}
\log M_i=n\left(R_i^*+\rho_{i,n}'\right)~i=1,2,
\end{align}
and
\begin{align}
&\epsilon_n(D_1,D_2)\nn\leq 
4\exp\left(-n\left(\frac{\theta_2^2\rho_n^2}{2\mathrm{V}(D_2|P_X)}+o(\rho_n^2)\right)\right)\\
&\qquad+\Pr\left(\frac{1}{n}\sum_{i=1}^n \frac{X_i^2}{\sigma^2}>\frac{D_1}{\sigma^2}\exp(2(R_1^*+\rho_{1,n}'))-\frac{1}{n}~\mathrm{or}~\frac{1}{n}\sum_{i=1}^n \frac{X_i^2}{\sigma^2}>\frac{D_2}{\sigma^2}\exp(2(R_2^*+\rho_{2,n}'))-\frac{1}{n}\right)\label{eqn:gaumdcup2}.
\end{align}
Denote the second term in~\eqref{eqn:gaumdcup2} as $\epsilon_n^{\mathrm{(ii)}}(D_1,D_2)$. In a similar manner as Case (i) and using the union bound, we obtain
\begin{align}
\epsilon_n^{\mathrm{(ii)}}(D_1,D_2)\leq 
\Pr\left(\frac{1}{n}\sum_{i=1}^n \left(Y_i-1\right)>2\theta_1\rho_n+o(\rho_n)\right)+\Pr\left(\frac{1}{n}\sum_{i=1}^n \left(Y_i-1\right)>2\theta_2\rho_n+o(\rho_n)\right).
\end{align}

Invoking~\cite[Theorem 3.7.1]{dembo2009large}, we obtain
\begin{align}
\lim_{n\to\infty}-\frac{\log \Pr\left(\frac{1}{n}\sum_{i=1}^n (Y_i-1)>2\theta_1\rho_n+o(\rho_n)\right)}{n\rho_n^2}=\frac{4\theta_1^2}{4}=\theta_1^2\label{eqn:gaumd1}.
\end{align}
Hence, combining~\eqref{eqn:gaumdcup2} with~\eqref{eqn:vard1},~\eqref{eqn:gaumd2} and~\eqref{eqn:gaumd1}, we obtain
\begin{align}
\liminf_{n\to\infty}-\frac{\log \epsilon_n(D_1,D_2)}{n\rho_n^2}\geq \min\{\theta_1^2,\theta_2^2\}.
\end{align}

The converse part follows from Lemma~\ref{tiltedconverse}.
Let $\zeta\in(0,1)$ be arbitrary. For $i=1,2$, let $\gamma_i=n\zeta\theta_i\rho_{n}$ and $\log M_i=n(R_i^*+\theta_i\rho_n)$. Using~\eqref{eqn:tiltex1}, we obtain
\begin{align}
\nn\epsilon_n(D_1,D_2)
&\geq 
\max\left\{\Pr\left(\sum_{i=1}^n\jmath_Y(X_i,D_1|P_X)\geq nR_1^*+n(1+\zeta)\theta_1\rho_n\right),\Pr\left(\sum_{i=1}^n\jmath_Z(X_i,D_2|P_X)\geq nR_2^*+n(1+\zeta)\theta_2\rho_n\right)\right\} \\*
&\qquad-\exp(-n\zeta\theta_1\rho_n)-\exp(-n\zeta\theta_2\rho_n).
\end{align}

Now we consider the different cases. We first consider Case (i) where $R_Y(P_X,D_1)<R_1^*<R_{Z}(P_X,D_2)$ and $R_2^*=R_{Z}(P_X,D_2)$. From the Chernoff bound, we obtain for some $\gamma>0$,
\begin{align}
\Pr\left(\sum_{i=1}^n\jmath_Y(X_i,D_1|P_X)\geq nR_1^*+n(1+\zeta)\theta_1\rho_n\right)\leq \exp(-n\gamma).
\end{align}
Hence, we obtain
\begin{align}
\nn&\epsilon_n(D_1,D_2)\\*
&\geq 
\max\left\{\Pr\left(\frac{1}{n}\sum_{i=1}^n\left(\jmath_Z(X_i,D_2|P_X)- R_2^*\right)\geq (1+\zeta)\theta_2\rho_n\right),\exp(-n\gamma)\right\}-\exp(-n\zeta\theta_1\rho_n)-\exp(-n\zeta\theta_2\rho_n)\label{gaumdc_1}.
\end{align}
Using~\cite[Theorem 3.7.1]{dembo2009large}, we obtain that 
\begin{align}
\lim_{n\to\infty}-\frac{\log \Pr\left(\frac{1}{n}\sum_{i=1}^n\left(\jmath_Z(X_i,D_2|P_X)- R_2^*\right)\geq (1+\zeta)\theta_2\rho_n\right)}{n\rho_n^2}=\frac{(1+\zeta)^2\theta_2^2}{2\mathrm{V}(D_2|P_X)}.
\end{align}
Hence, for large $n$, due to the fact that  $\exp\left(-n\rho_n^2\frac{(1+\zeta)^2\theta_2^2}{2\mathrm{V}(D_2|P_X)}\right)$ dominates $\exp(-n\gamma)$, we obtain
\begin{align}
\epsilon_n(D_1,D_2)
&\geq \Pr\left(\frac{1}{n}\sum_{i=1}^n\left(\jmath_Z(X_i,D_2|P_X)- R_2^*\right)\geq (1+\zeta)\theta_2\rho_n\right)-\exp\left(-n\zeta\theta_1\rho_n)-\exp(-n\zeta\theta_2\rho_n\right)\label{gaumdc1}.
\end{align}
Note that for large $n$,~\eqref{gaumdc1} is dominated by the first term since $\exp\left(-n\rho_n^2\frac{(1+\zeta)^2\theta_2^2}{2\mathrm{V}(D_2|P_X)}\right)$ dominates $\exp(-n\zeta\theta_i\rho_n),~i=1,2$. Thus, 
\begin{align}
\limsup_{n\to\infty}-\frac{\epsilon_n(D_1,D_2)}{n\rho_n^2}\leq \frac{(1+\zeta)^2\theta_2^2}{2\mathrm{V}(D_2|P_X)}.
\end{align} 
Case (ii) is analogous to Case (i) and thus omitted. We now consider Case (iii) where $R_1^*=R_Y(P_X,D_1)$ and $R_2^*=R_{Z}(P_X,D_2)$. In a similar manner as Case (i), we can prove that
\begin{align}
\nn\epsilon_n(D_1,D_2)
&\geq 
\max\left\{\Pr\left(\frac{1}{n}\sum_{i=1}^n\left(\jmath_Y(X_i,D_1|P_X)-R_1^*\right)\geq (1+\zeta)\theta_1\rho_n\right),\Pr\left(\frac{1}{n}\sum_{i=1}^n\left(\jmath_Z(X_i,D_2|P_X)-R_2^*\right)\geq (1+\zeta)\theta_2\rho_n\right)\right\} \\
&\qquad-\exp(-n\zeta\theta_1\rho_n)-\exp(-n\zeta\theta_2\rho_n)\label{gaumdc3}.
\end{align}
Invoking~\cite[Theorem 3.7.1]{dembo2009large} again, we obtain that
\begin{align}
\lim_{n\to\infty}-\frac{\log \Pr\left(\frac{1}{n}\sum_{i=1}^n\left(\jmath_Y(X_i,D_2|P_X)- R_1^*\right)\geq (1+\zeta)\theta_1\rho_n\right)}{n\rho_n^2}=\frac{(1+\zeta)^2\theta_1^2}{2\mathrm{V}(D_1|P_X)}.
\end{align}
Note that~\eqref{gaumdc3} is dominated by the first term. Thus,
\begin{align}
\limsup_{n\to\infty}-\frac{\epsilon_n(D_1,D_2)}{n\rho_n^2}\leq \min\left\{\frac{(1+\zeta)^2\theta_1^2}{2\mathrm{V}(D_1|P_X)},\frac{(1+\zeta)^2\theta_2^2}{2\mathrm{V}(D_2|P_X)}\right\}.
\end{align}
For all cases, let $\zeta\downarrow 0$. This completes the proof for a GMS by appealing to~\eqref{eqn:vard1}.

\section{Conclusion}
\label{sec:conc}
In this paper, we have derived the second-order coding region and moderate deviations constant for the successive refinement source coding problem under \emph{joint} excess distortion event. We did so for both a DMS with arbitrary distortion measures and a GMS with the quadratic distortion measures and obtained  some new insights. Our results for a DMS with arbitrary distortion measures can be specialized to successively refinable discrete memoryless source-distortion measure triplets to obtain simpler expressions. 

In the future, one may derive the second-order asymptotics and moderate deviations for a Laplacian source  with the absolute distortion measures \cite{zhong2006type} following a similar method as used in this paper. Since this source-distortion measure triplet is successively refinable~\cite{equitz1991successive}, we do not envision any significant difficulties. We may also endeavor to do the same for more challenging source-distortion measure triplets such as the symmetric mixture of Gaussians~\cite{chowberger}, which is a continuous source that is not successively refinable, hence new techniques may be required. We also aim to derive the second-order asymptotics and moderate deviations constant for the multiple description source coding problem with one deterministic decoder~\cite{fu2002rate}. This may be done, possibly, using similar methods to those introduced in this paper.

\appendix
\subsection{Proof of Lemma~\ref{propertytilted}}
\label{proofpropertytilted}
It is easy to observe that $\rvR(R_1^*,D_1,D_2|P_X)$ in~\eqref{minimumr2} is a convex optimization problem. For $(\lambda,\nu_1,\nu_2)\in\mathbb{R}_+^3$, define
\begin{align}
g(\lambda,\nu_1,\nu_2):=\inf_{P_{YZ|X}} I(X;YZ)+\lambda(I(X;Y_1)-R_1^*)+\nu_1(\mathbb{E}[d_1(X,Y)]-D_1)+\nu_2(\mathbb{E}[d_2(X,Z)]-D_2).
\end{align}
Considering the dual problem, we obtain
\begin{align}
\label{convexopt}
\rvR(R_1^*,D_1,D_2|P_X)=\max_{(\lambda,\nu_1,\nu_2)\in\mathbb{R}_+^3}g(\lambda,\nu_1,\nu_2)=g(\lambda^*,\nu_1^*,\nu_2^*).
\end{align}
Given $Q_{YZ}$, let $Q_Y$ be the induced marginal distribution on $\calY$. For arbitrary $P_{YZ|X}$ and $Q_{YZ}$, define
\begin{align}
\nn&F(P_{YZ|X},Q_{YZ}|R_1^*,D_1,D_2)\\*
&:=D(P_{YZ|X}\|Q_{YZ}|P_X)+\lambda^*(D(P_{Y|X}\|Q_Y|P_X)-R_1^*)+\nu_1^*(\mathbb{E}[d_1(X,Y)]-D_1)+\nu_2^*(\mathbb{E}[d_2(X,Z)]-D_2)\\
&=I(X;YZ)+D(P_{YZ}\|Q_{YZ})+\lambda^*(I(X;Y)+D(P_Y\|Q_Y)-R_1^*)\nn\\*
&\qquad +\nu_1^*(\mathbb{E}[d_1(X,Y)]-D_1)+\nu_2^*(\mathbb{E}[d_2(X,Z)]-D_2).
\end{align}

For $(\lambda,\nu_1,\nu_2)\in\mathbb{R}_+^3$ and $(x,y,z)\in\calX\times\calY\times\calZ$, define 
\begin{align}
f(x,y,z|Q_{YZ},\lambda,\nu_1,\nu_2):=\exp\left(-\lambda\left(\log\frac{P_{Y|X}(y|x)}{Q_Y(y)}-R_1^*\right)-\nu_1(d_1(x,y)-D_1)-\nu_2(d_2(x,z)-D_2)\right)
\end{align}
Then we can define the generalized tilted information density
\begin{align}
\Lambda(x|Q_{YZ},\lambda,\nu_1,\nu_2):=-\log\mathbb{E}_{Q_{YZ}}\left[f(x,Y,Z|Q_{YZ},\lambda,\nu_1,\nu_2)\right].
\end{align}
We can relate $F(P_{YZ|X},Q_{YZ}|R_1^*,D_1,D_2)$ and $\Lambda(x|Q_{YZ},Q_{Y},\lambda,\nu_1,\nu_2)$ in the following lemma.  
\begin{lemma}
\label{relatefl}
\begin{align}
F(P_{YZ|X},Q_{YZ}|R_1^*,D_1,D_2)\geq \mathbb{E}_{P_X}\left[\Lambda(X|Q_{YZ},\lambda^*,\nu_1^*,\nu_2^*)\right],
\end{align}
with equality if and only if $P_{YZ|X}$ satisfies
\begin{align}
P_{YZ|X}(yz|x)&=Q_{YZ}(y,z)\exp\left(\Lambda(x|Q_{YZ},Q_{Y},\lambda^*,\nu_1^*,\nu_2^*)-\lambda^*\left(\log\frac{P_{Y|X}(y|x)}{Q_Y(y)}-R_1^*\right)\right.\\*
&\qquad\left.-\nu_1^*(d_1(x,y)-D_1)-\nu_2^*(d_2(x,z)-D_2)\vphantom{\frac{P_{Y|X}(y|x)}{Q_Y(y)}}\right).
\end{align}
\end{lemma}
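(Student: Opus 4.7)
The plan is to prove Lemma~\ref{relatefl} by exhibiting a ``tilted'' conditional distribution $\tilde{Q}_{YZ|X}$ such that the gap $F - \mathbb{E}_{P_X}[\Lambda]$ is exactly a conditional KL divergence $D(P_{YZ|X}\|\tilde{Q}_{YZ|X}|P_X)$. The inequality and the characterization of equality then follow immediately from the non-negativity of relative entropy and its standard equality condition.

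First, I would define the candidate distribution in precisely the exponential-family form suggested by the claimed equality condition:
\begin{align}
\tilde{Q}_{YZ|X}(y,z|x) := Q_{YZ}(y,z)\exp\Big(&\Lambda(x|Q_{YZ},\lambda^*,\nu_1^*,\nu_2^*) -\lambda^*\big(\log\tfrac{P_{Y|X}(y|x)}{Q_Y(y)}-R_1^*\big) \nn\\
&-\nu_1^*(d_1(x,y)-D_1)-\nu_2^*(d_2(x,z)-D_2)\Big).
\end{align}
The very definition of $\Lambda(x|\cdot)$ as a negative log-expectation under $Q_{YZ}$ is the normalizing constant needed to guarantee $\sum_{y,z}\tilde{Q}_{YZ|X}(y,z|x)=1$ for every $x$, so $\tilde{Q}_{YZ|X}$ is a bona fide conditional distribution.

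Next, I would compute $\log(P_{YZ|X}/\tilde{Q}_{YZ|X})$ and take expectation under $P_X\times P_{YZ|X}$, dealing with the four kinds of terms one at a time. The $\log Q_{YZ}$ piece combines with $\log P_{YZ|X}$ to give $D(P_{YZ|X}\|Q_{YZ}|P_X)=I(X;YZ)+D(P_{YZ}\|Q_{YZ})$, and the $\lambda^*\log(P_{Y|X}/Q_Y)$ piece depends only on $(x,y)$, so its expectation collapses (after marginalizing $Z$) to $\lambda^*D(P_{Y|X}\|Q_Y|P_X)=\lambda^*(I(X;Y)+D(P_Y\|Q_Y))$. The $R_1^*$, $D_1$, $D_2$ pieces and the distortion terms match directly against the corresponding summands in the definition of $F$, while the $\Lambda(x|\cdot)$ piece, which depends only on $x$, contributes $-\mathbb{E}_{P_X}[\Lambda(X|Q_{YZ},\lambda^*,\nu_1^*,\nu_2^*)]$. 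Collecting everything yields the clean identity
\begin{align}
D(P_{YZ|X}\|\tilde{Q}_{YZ|X}|P_X) = F(P_{YZ|X},Q_{YZ}|R_1^*,D_1,D_2) - \mathbb{E}_{P_X}[\Lambda(X|Q_{YZ},\lambda^*,\nu_1^*,\nu_2^*)].
\end{align}

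Finally, the non-negativity of relative entropy gives the claimed inequality, and the standard equality condition $D(P\|\tilde{Q})=0$ iff $P=\tilde{Q}$ almost surely translates directly into the explicit form for $P_{YZ|X}$ stated in the lemma. I do not anticipate any serious obstacle; the only piece of bookkeeping to be careful about is the reduction of the $\lambda^*\log(P_{Y|X}/Q_Y)$ expectation to a conditional KL divergence on the $Y$-marginal, but this is routine because the integrand depends only on $(x,y)$ and $P_{YZ|X}$ has $\calY$-marginal $P_{Y|X}$.
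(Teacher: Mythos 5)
Your proposal is correct and is essentially the same argument as the paper's: the paper applies the log-sum inequality to $\sum_{y,z}P_{YZ|X}(y,z|x)\log\frac{P_{YZ|X}(y,z|x)}{Q_{YZ}(y,z)f(x,y,z|Q_{YZ},\lambda^*,\nu_1^*,\nu_2^*)}$, which is precisely the unnormalized form of your identity $F-\mathbb{E}_{P_X}[\Lambda]=D(P_{YZ|X}\|\tilde{Q}_{YZ|X}|P_X)$ once one recognizes $\exp(\Lambda(x))$ as the normalizer of $Q_{YZ}(y,z)f(x,y,z)$. Your phrasing makes the gap explicit as a conditional relative entropy and thereby reads the equality condition off directly, which is a minor stylistic improvement over invoking the log-sum equality condition, but the mathematical content is identical.
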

\begin{proof}
Invoking the log-sum inequality, we obtain
\begin{align}
\nn&F(P_{YZ|X},Q_{YZ}|R_1^*,D_1,D_2)\\*
\nn&=\sum_{x,y,z}P_{X}(x)P_{YZ|X}(y,z|x)\left(\log\frac{P_{YZ|X}(y,z|x)}{Q_{YZ}(y,z)}+\lambda^*\left(\log\frac{P_{Y|X}(y|x)}{Q_{Y}(y)}-R_1^*\right)\right.\\
&\qquad\left.+\nu_1^*(d_1(x,y)-D_1)+\nu_2^*(d_2(x,z)-D_2)\vphantom{\frac{P_{Y|X}(y|x)}{Q_{Y}(y)}}\right)\\
&=\sum_{x,y,z}P_{X}(x)P_{YZ|X}(yz|x)\log\frac{P_{YZ|X}(y,z|x)}{Q_{YZ}(y,z)f(x,y,z|Q_{YZ},\lambda^*,\nu_1^*,\nu_2^*)}\\
&\geq \sum_{x}P_{X}(x)\left(\sum_{y,z}P_{YZ|X}(yz|x)\right)\log\frac{\sum_{y,z}P_{YZ|X}(yz|x)}{\sum_{y,z}Q_{YZ}(y,z)f(x,y,z|Q_{YZ},\lambda^*,\nu_1^*,\nu_2^*)}\\*
&=\sum_{x}P_{X}(x)\Lambda(x|Q_{YZ},\lambda^*,\nu_1^*,\nu_2^*).
\end{align}
\end{proof}

Recall that $P_{YZ|X}^*$ is the optimal test channel achieving $\rvR(R_1^*,D_1,D_2|P_X)$ and $P_{YZ}^*$ is the induced marginal distributions. Note that
\begin{align}
\rvR(R_1^*,D_1,D_2|P_X)
&=\inf_{Q_{YZ}}\inf_{P_{YZ|X}}F(P_{YZ|X},Q_{YZ}|R_1^*,D_1,D_2)\\
&\leq \inf_{P_{YZ|X}}F(P_{YZ|X},P_{YZ}^*|R_1^*,D_1,D_2)\\
&\leq F(P_{YZ|X}^*,P_{YZ}^*|R_1^*,D_1,D_2)\\
&=\rvR(R_1^*,D_1,D_2|P_X),\label{convexopt2},
\end{align}
where~\eqref{convexopt2} follows from~\eqref{convexopt}. Hence, $P_{YZ|X}^*$ achieves $\inf_{P_{YZ|X}}F(P_{YZ|X},P_{YZ}^*,P_{Z}^*|R_1^*,D_1,D_2)$. Invoking Lemma~\ref{relatefl}, we obtain
\begin{align}
\rvR(R_1^*,D_1,D_2|P_X)=\mathbb{E}_{P_X}\left[\Lambda(X|P_{YZ}^*,\lambda^*,\nu_1^*,\nu_2^*)\right],
\end{align}
and
\begin{align}
\nn&\Lambda(x|P_{YZ}^*,\lambda^*,\nu_1^*,\nu_2^*)\\*
&=\log\frac{P_{YZ|X}^*(y,z|x)}{P_{YZ}^*(y,z)}+\lambda^*\left(\log\frac{P_{Y|X}^*(y|x)}{P_Y^*(y)}-R_1^*\right)+\nu_1^*(d_1(x,y)-D_1)+\nu_2^*(d_2(x,z)-D_2).
\end{align}
The proof is complete by noting that
\begin{align}
\jmath_{YZ}(x,R_1^*,D_1,D_2|P_X)=\Lambda(x|P_{YZ}^*,P_{Y}^*,\lambda^*,\nu_1^*,\nu_2^*).
\end{align}

\subsection{Proof of Lemma~\ref{derivativer2}}
\label{proofderivativer2}
From the assumption in Lemma~\ref{derivativer2}, we obtain that $Q_X$ is supported on $\calX$. Let $Q_{YZ|X}^*$ be the optimal test channel achieving $\rvR(R_1^*,D_1,D_2|Q_X)$. Let $Q_{YZ}^*,Q_Y^*,Q_{Y|X}^*$ be the induced marginal distributions. Invoking Lemma~\ref{propertytilted}, we obtain
\begin{align}
\rvR(R_1^*,D_1,D_2|Q_X)&=\sum_{x}Q_X(x)\jmath_{YZ}(x,R_1^*,D_1,D_2|Q_X),
\end{align}
and
\begin{align}
\nn&\jmath_{YZ}(x,R_1^*,D_1,D_2|Q_X)\\
&=\log\frac{Q_{YZ|X}^*(y,z|x)}{Q_{YZ}^*(y,z)}+\lambda^*_{Q_X}\left(\log\frac{Q_{Y|X}^*(y|x)}{Q_Y^*(y)}-R_1^*\right)-\nu_{1,Q_X}^*(d_1(x,y)-D_1)-\nu_{2,Q_X}^*(d_2(x,z)-D_2),
\end{align}
where $\lambda^*_{Q_X},\nu_{1,Q_X}^*,\nu_{2,Q_X}^*$ are defined similarly as $\lambda^*,\nu_1^*,\nu_2^*$.
Hence,
\begin{align}
\frac{\partial \rvR(R_1^*,D_1,D_2|Q_X)}{Q_X(a)}\bigg|_{Q_X=P_X}
&=\jmath_{YZ}(a,R_1^*,D_1,D_2|P_X)+\frac{\partial }{\partial Q_X(a)}\mathbb{E}_{P_X}[\jmath_{YZ}(X,R_1^*,D_1,D_2|Q_X)].
\end{align}
Recall that given optimal test channel $P_{YZ|X}^*$ for $\rvR(R_1^*,D_1,D_2|P_X)$,
\begin{align}
\mathbb{E}_{P_X\times P_{YZ|X}^*}[d_1(X,Y)-D_1]&=0, \\
\mathbb{E}_{P_X\times P_{YZ|X}^*}[d_2(X,Z)-D_2]&=0.
\end{align}
Hence, we obtain for any $a\in\calX$, 
\begin{align}
\nn&\frac{\partial }{\partial Q_X(a)}\mathbb{E}_{P_X}[\jmath_{YZ}(X,R_1^*,D_1,D_2|Q_X)]\bigg|_{Q_X=P_X}\\
\nn&=\sum_{x}P_{X}(x)\sum_{y,z}P_{YZ|X}^*(yz|x)\left(\frac{\partial }{\partial Q_X(a)}\left(\log\frac{Q_{YZ|X}^*(y,z|x)}{Q_{YZ}^*(y,z)}\right)\bigg|_{Q_X=P_X}+\lambda^*\frac{\partial }{\partial Q_X(a)}\left(\log\frac{Q_{Y|X}^*(y|x)}{Q_Y^*(y)}-R_1^*\right)\bigg|_{Q_X=P_X}\right.\\*
&\qquad\left.+\frac{\partial \lambda^*_{Q_X} }{\partial Q_X(a)}\bigg|_{Q_X=P_X}\left(\log\frac{P_{Y|X}^*(y|x)}{P_Y^*(y)}-R_1^*\right)\right)\\*
&=-(1+\lambda^*),\label{notallzero}
\end{align}
where~\eqref{notallzero} follows for two reasons:
\begin{itemize}
\item The constant term $-(1+\lambda^*)$ comes from the first two terms in~\eqref{notallzero} which follows in a similar manner as~\cite[Theorem 2.2]{kostina2013lossy};
\item For optimal test channel, we have
\begin{align}
\mathbb{E}\left[\log \frac{P_{Y|X}^*(Y|X)}{P_{Y}(Y)}\right]=R_1^*.
\end{align}
\end{itemize}

\subsection{Proof of Lemma~\ref{tiltedconverse}}
\label{prooftiltedconverse}
Note that for any $(y,z)\in\calY\times\calZ$, from~\cite[Property 3]{kostina2012converse}, we have
\begin{align}
\mathbb{E}_{P_X}[\exp(\jmath_Y(X,D_1)+s_1^*(D-d_1(X,y)))]\leq 1,\\
\mathbb{E}_{P_X}[\exp(\jmath_Z(X,D_2)+s_2^*(D-d_2(X,z)))]\leq 1\label{kostinap3}.
\end{align}

Consider random transformations for encoders and decoders. Let random variables $U$ take values in $\{1,2,\ldots,M_1\}$ and $V$ take values in $\{1,2,\ldots,M_2\}$. Let $Q_{U}$ and $Q_V$ be uniform distributions on $\{1,2,\ldots,M_1\}$ and $\{1,2,\ldots,M_2\}$ respectively. We use $P_{U|X}$ and $P_{Y|U}$ to denote encoder $f_1$ and decoder $\phi_1$. Similarly, we use $P_{V|X}$ and $P_{Z|UV}$ to denote $f_2$ and $\phi_2$. Let $Q_Z$ be induced by $P_{Z|UV}$, $Q_U$ and $Q_V$. For any $\gamma_1\geq 0$ and $\gamma_2\geq 0$, we obtain
\begin{align}
&\nn\Pr\left(\jmath_Y(X,D_1|P_X)\geq \log M_1+\gamma_1~\mathrm{or}~\jmath_Z(X,D_2|P_X)\geq \log(M_1M_2)+\gamma_2\right)\\
\nn&\leq \Pr\left(\jmath_Y(X,D_1|P_X)\geq \log M_1+\gamma_1~\mathrm{or}~\jmath_Z(X,D_2|P_X)\geq \log(M_1M_2)+\gamma_2,~d_1(X,Y)\leq D_1,~d_2(X,Z)\leq D_2\right)\\
&\qquad+\epsilon_n(D_1,D_2)\\
\nn&\leq \Pr\left(\jmath_Y(X,D_1|P_X)\geq \log M_1+\gamma_1,~d_1(X,Y)\leq D_1\right)\\
&\qquad+\Pr\left(\jmath_Z(X,D_2|P_X)\geq \log(M_1M_2)+\gamma_2,~d_2(X,Z)\leq D_2\right)+\epsilon_n(D_1,D_2),\label{tiltupp}
\end{align}
where the first term in~\eqref{tiltupp} can be upper bounded by $\exp(-\gamma_1)$ as~\cite[Theorem 1]{kostina2012converse}. Here we upper bound the second term in~\eqref{tiltupp} as follows:
\begin{align}
&\nn\Pr\left(\jmath_Z(X,D_2|P_X)\geq \log(M_1M_2)+\gamma_2,~d_2(X,Z)\leq D_2\right)\\
&=\Pr\left(M_1M_2\leq \exp\left(\jmath_Z(X,D_2|P_X)\right)-\gamma_2,d_2(X,Z)\leq D_2\right)\\
&\leq \Pr\left(M_1M_2\leq \exp\left(\jmath_X(X,D_2|P_X)-\gamma_2\right)1\{d_2(X,Z)\leq D_2\}\right)\\
&\leq \frac{\mathbb{E}\left[\exp\left(\jmath_Z(X,D_2|P_X)-\gamma_2\right)1\{d_2(X,Z)\leq D_2\}\right]}{M_1M_2}\label{markovineq}\\
&\leq \frac{\exp(-\gamma_2)}{M_1M_2}\mathbb{E}\left[\exp\left(\jmath_Z(X,D_2|P_X)1\{d_2(X,Z)\leq D_2\}\right)\right]\\
&\leq \frac{\exp(-\gamma_2)}{M_1M_2}\mathbb{E}\left[\exp\left(\jmath_Z(X,D_2|P_X)+s_2^*(D_2-d_2(X,Z))\right)\right]\\
&=\frac{\exp(-\gamma_2)}{M_1M_2}\sum_{x}P_{X}(x)\sum_{u,v}P_{U|X}(u|x)P_{V|X}(v|x)\sum_{z}P_{Z|UV}(z|uv)\exp\left(\jmath_Z(x,D_2|P_X)+s_2^*(D_2-d_2(x,z))\right)\\
&\leq \exp(-\gamma_2)\sum_{x}P_X(x)\sum_{z}Q_{Z}(z)\exp\left(\jmath_Z(x,D_2|P_X)+s_2^*(D_2-d_2(x,z))\right)\label{tiltupp2}\\
&\leq \exp(-\gamma_2)\sum_{z}Q_{Z}(z)\mathbb{E}_{P_X}[\exp(\jmath_Z(X,D_2)+s_2^*(D-d_2(X,z)))]\\
&\leq \exp(-\gamma_2),\label{tiltupp3}
\end{align}
where~\eqref{markovineq} follows from Markov inequality;~\eqref{tiltupp2} follows from $P_{U|X}(u|x)\leq 1$, $P_{V|X}(v|x)\leq 1$ and the definition of $Q_Z$;~\eqref{tiltupp3} follows from~\eqref{kostinap3}.

\subsection{Proof of Lemma~\ref{uppexcess}}
\label{proofuppexcess}
Set $(R_1,R_2)=(R_{1,n},R_{2,n})$. Consider the following coding scheme. Given a source sequence $x^n$, encoder $f_1$ calculates the type $\hatT_{x^n}$ and sends it to both decoders with at most $|\calX|\log(n+1)$ nats. Then encoder $f_1$ calculates $R_Y(\hat{T}_{x^n},D_1)$ and $\rvR(R_{1,n},D_1,D_2|\hat{T}_{x^n})$, if $nR_Y(\hat{T}_{x^n},D_1)+(c_1+|\calX|)\log n>\log M_1$ or if $n\rvR(R_{1,n},D_1,D_2|\hat{T}_{x^n})+c_2\log n>\log(M_1M_2)$, the system declares an error directly. Otherwise, the two encoders operate as follows. Encoder $f_1$ chooses a set $\calB_{Y}$ specified by Lemma~\ref{typecovering} and sends out the codeword $y^n$ if $y^n=\argmin_{\tilde{y}^n} d_1(x^n,\tilde{y}^n)$. Then for each $y^n\in\calB_{Y}$, encoder $f_2$ chooses the set $\calB_{Z}(y^n)$ specified by Lemma~\ref{typecovering} and sends out the codeword $z^n$ if $z^n=\argmin_{\tilde{z}^n} d_2(x^n,z^n)$. At the decoder side, no error will be made. Hence, we have proved the upper bound on $\epsilon_n(D_1,D_2)$ in Lemma~\ref{uppexcess}.

\subsection{Proof of Lemma~\ref{typestrongconverse}}
\label{prooftypesc}
Define the set
\begin{align}
\calD_{Q_X}:=\left\{x^n\in\calT_{Q_X}:d_1(x^n,\phi_1(f_1(x^n)))\leq D_1,~d_2(x^n,\phi_2(f_1(x^n),f_2(x^n)))\leq D_2\right\}.
\end{align}
Recall that $U_{\calT_{Q_X}}$ denotes the uniform distribution over the type class $\calT_{Q_X}$. Let $\beta=\frac{\log n}{n}$. Define another distribution $Q_{\calT_{Q_X}}(x^n)$ such that
\begin{align}
Q_{\calT_{Q_{X}}}(x^n):=\frac{\exp(n(\alpha+\beta))U_{\calT_{Q_{X}}}(x^n)}{\exp(n(\alpha+\beta))U_{\calT_{Q_{X}}}(D_{Q_{X}})+(1-U_{\calT_{Q_{X}}}(D_{Q_{X}}))}
\end{align}
for $x^n\in\calD_{Q_{X}}$ and
\begin{align}
Q_{\calT_{Q_{X}}}(x^n):=\frac{U_{\calT_{Q_{X}}}(x^n)}{\exp(n(\alpha+\beta))U_{\calT_{Q_{X}}}(D_{Q_{X}})+(1-U_{\calT_{Q_{X}}}(D_{Q_{X}}))}
\end{align}
for $x^n\notin\calD_{Q_{X}}$.

From the assumption of the Lemma in \eqref{eqn:lb_type_str_conv}, we know that the $(n,M_1,M_2)$-code satisfies
\begin{align}
U_{\calT_{Q_X}}(\calD_X)\geq \exp(-n\alpha).
\end{align}
Hence, we obtain
\begin{align}
Q_{\calT_{Q_{X}}}(D_{Q_{X}})
&= \frac{\exp(n(\alpha+\beta))U_{\calT_{Q_{X}}}(D_{Q_{X}})}{\exp(n(\alpha+\beta))U_{\calT_{Q_{X}}}(D_{Q_{X}})+(1-U_{\calT_{Q_{X}}}(D_{Q_{X}}))}\\
&=\frac{\exp(n\beta)}{\exp(n\beta)+\exp(-n\alpha)\frac{1-U_{\calT_{Q_{X}}}(D_{Q_{X}})}{U_{\calT_{Q_{X}}}(D_{Q_{X}})}}\\
&\geq \frac{\exp(n\beta)}{\exp(n\beta)+1}\\
&\geq 1-\frac{1}{n},\label{betavalue}
\end{align}
where \eqref{betavalue} results from that $n\beta=\log n$.
Therefore, we have
\begin{align}
\mathbb{E}[d_1(X^n,Y^n)]
&=\sum_{x^n}Q_{\calT_{Q_X}}(x^n)d_1(x^n,\phi_1(f_1(x^n)))\\
&=\sum_{x^n\in\calD_{Q_X}}Q_{\calT_{Q_X}}(x^n)d_1(x^n,\phi_1(f_1(x^n)))+\sum_{x^n\notin\calD_{Q_X}}Q_{\calT_{Q_X}}(x^n)d_1(x^n,\phi_1(f_1(x^n)))\\
&\leq D_1+\frac{\overline{d}_1}{n},\label{distortion1}
\end{align}
and similarly
\begin{align}
\mathbb{E}[d_2(X^n,Z^n)]&\leq D_2+\frac{\overline{d}_2}{n}\label{distortion2}.
\end{align}
Let $J$ be the uniform random variable on $\{1,2,\ldots,n\}$ independent of all other random variables. By~\eqref{distortion1} and~\eqref{distortion2}, we obtain
\begin{align}
D_1+\frac{\overline{d}_1}{n}
&\geq \mathbb{E}[d_1(X^n,Y^n)]\\*
&=\mathbb{E}\left[\frac{1}{n}\sum_{i=1}^nd_1(X_i,Y_i)\right]\\*
&=\mathbb{E}[d_1(X_J,Y_J)].
\end{align}
and 
\begin{align}
D_2+\frac{\overline{d}_2}{n}
&\geq \mathbb{E}[d_2(X_J,Z_J)].
\end{align}

Now we apply weak converse argument here. Note that $S_1=f_1(X^n)$ and $Y^n=\phi_1(S_1)$. Hence, $X^n\to S_1 \to\hat{X}^n$. However, since $X^n\sim Q_{\calT_{Q_X}}$, $X^n$ is not i.i.d. Following a similar manner as converse proof in~\cite[pp.~59]{el2011network}, we obtain
\begin{align}
\log M_1
&\geq H(S_1)\\
&=I(X^n;S_1)\\
&\geq I(X^n;Y^n)\\
&=\sum_{i=1}^n I(X_i;Y^n|X^{i-1})\\
&\geq \sum_{i=1}^n I(X_i;Y_i|X^{i-1})\\
&\geq \sum_{i=1}^n \left(I(X_i;Y_i,X^{i-1})-I(X_i;X^{i-1})\right)\\
&\geq \sum_{i=1}^n I(X_i;Y_i)-\left(\sum_{i=1}^nH(X_i)-H(X^n)\right)\\
&=n I(X_J;Y_J|J)-\left(nH(X_J|J)-H(X^n)\right)\\
&=n I(X_J;Y_J|J)+nI(J;X_J)-\left(nH(X_J)-H(X^n)\right)\\
&=n I(X_J;Y_J,J)-\left(nH(X_J)-H(X^n)\right).
\end{align}
Note that $S_2=f_2(X^n)$ and $Z^n=\phi_2(S_1,S_2)$. Hence, in a similar manner, we obtain
\begin{align}
\log(M_1M_2)
&\geq H(S_1,S_2)\\
&=I(S_1,S_2;X^n)\\
&\geq I(X^n;Y^n,Z^n)\\
&\geq \sum_{i=1}^n I(X_i;Y_i,Z_i|X^{i-1})\\*
&=n I(X_J;Y_J,Z_J,J)-\left(nH(X_J)-H(X^n)\right).
\end{align}
Then in a similar manner as (152)--(154) in~\cite{watanabe2015second}, we have that there exists a conditional distribution $P_{YZ|X_J}$ such that
\begin{align}
\mathbb{E}[d_1(X_J,Y_J)]&=\mathbb{E}[d_1(X_J,Y)],\\
\mathbb{E}[d_2(X_J,Z_J)]&=\mathbb{E}[d_2(X_J,Z)],\\
I(X_J;Y_J,J)&=I(X_J;Y)=I(P_{X_J},P_{Y|X_J}),\\
I(X_J;Y_J,Z_J,J)&=I(X_J;YZ)=I(P_{X_J},P_{YZ|X_J}).
\end{align}
Then, in a similar manner as (155)--(157) in~\cite{watanabe2015second}, we can prove that $P_{X_J}(x)=Q_{X}(x)$. Hence, we conclude
\begin{align}
\mathbb{E}[d_1(X_J,Y_J)]&=\mathbb{E}_{Q_X\times Q_{YZ|X}}[d_1(X,Y)],\\
\mathbb{E}[d_2(X_J,Z_J)]&=\mathbb{E}_{Q_X\times Q_{YZ|X}}[d_2(X,Z)],\\
I(X_J;Y_J,J)&=I(Q_{X},P_{Y|X}),\\
I(X_J;Y_J,Z_J,J)&=I(Q_{X},P_{YZ|X}),
\end{align}
and
\begin{align}
H(X_J)=H(Q_X).
\end{align}
Following similar steps as (162)--(167) in~\cite{watanabe2015second}, we can prove
\begin{align}
\left|H(X_J)-\frac{1}{n}H(X^n)\right|&\leq \frac{|\calX|\log(n+1)}{n}+(\alpha+\beta).
\end{align} 
The proof is now complete by noting that $\beta=\frac{\log n}{n}$.

\subsection{Proof of Lemma~\ref{lbexcessp}}
\label{prooflbexcessp}
Set $\alpha=\log n/n$.
Given $(R_{1,n},R_{2,n})$, invoking Lemma~\ref{typestrongconverse}, we obtain that if $(R_{1,n},R_{2,n})\notin \rvR(D_{1,n},D_{2,n}|Q_{X})$, then
\begin{align}
\Pr\left(d_1(X^n,\hat{X}^n)>D_1~\mathrm{or}~d_2(X^n,Z^n)>D_2|X^n\in\calT_{Q_X}\right)
\geq 1-\frac{1}{n}.
\end{align}
Hence,
\begin{align}
\epsilon_n(D_1,D_2)
&=\sum_{Q_{X}\in\calP_n(\calX)}P_{X}^n(\calT_{Q_X})\Pr\left(d_1(X^n,\hat{X}^n)>D_1~\mathrm{or}~d_2(X^n,Z^n)>D_2|X^n\in\calT_{Q_X}\right)\\
&\geq \sum_{\substack{Q_X\in\calP_n(\calX):R_{1,n}<R_Y(Q_X,D_{1,n})\\R_{2,n}<\rvR(R_{1,n},D_{1,n},D_{2,n}|Q_X)}}P_{X}^n(\calT_{Q_X})\Pr\left(d_1(X^n,\hat{X}^n)>D_1~\mathrm{or}~d_2(X^n,Z^n)>D_2|X^n\in\calT_{Q_X}\right)\\
&\geq \sum_{\substack{Q_X\in\calP_n(\calX):R_{1,n}<R_Y(Q_X,D_{1,n})\\R_{2,n}<\rvR(R_{1,n},D_{1,n},D_{2,n}|Q_X)}}P_{X}^n(\calT_{Q_X})\left(1-\frac{1}{n}\right)\\
&\geq \Pr\left(R_{1,n}<R_Y(\hat{T}_{X^n},D_{1,n})~\mathrm{or}~R_{2,n}<\rvR(R_{1,n},D_{1,n},D_{2,n}|\hat{T}_{X^n})\right)-\frac{1}{n}.
\end{align}

\subsection{Proof of Lemma~\ref{gauuppexcess}}
\label{proofgauuppexcess}
Given $x^n$, if $x^n\notin \calU^{\xi}$, the system declares an error. Otherwise, encoder $f_1$ sends the type of $x^n$ by using no more than $\log k$ nats since there are $k$ different types $[1:k]$. Suppose $x^n\in\calU_i$. Encoder $f_1$ calculates $\log \frac{\Lambda(i)}{D_j},~j=1,2$. If $\log M_1<\frac{n}{2}\log \frac{\Lambda(i)}{D_1}+\frac{5}{2}\log n+\log k+\log 6$ or $\log (M_1M_2)<\frac{n}{2}\log \frac{\Lambda(i)}{D_2}+5\log n+2\log 6$, the system declares an error. Otherwise, invoking Lemma~\ref{Gaucovering}, we conclude that no error will be made. Define $\gamma_n=4\exp\left(-n I(\xi)\right)$. Hence,
\begin{align}
\epsilon_n(D_1,D_2)
&=\Pr(X^n\notin \calU^{\xi})+\sum_{i=1}^k \Pr(X^n\in\calU_i)\Pr\left(d_1(X^n,Y^n)>D_1~\mathrm{or}~d_2(X^n,Z^n)>D_2\bigg|X^n\in\calU_i\right)\\
&\leq \gamma_n+ \sum_{i=1}^k \Pr(X^n\in\calU_i)\Pr\left(R_{1,n}<\frac{1}{2}\log\frac{\Lambda(i)}{D_1}~\mathrm{or}~R_{2,n}<\frac{1}{2}\log\frac{\Lambda(i)}{D_2}\bigg|X^n\in\calU_i\right)\label{atypical}\\
&=\gamma_n+\sum_{i=1}^k \Pr(X^n\in\calU_i)\Pr\left(\Lambda(i)>D_1\exp(2R_{1,n})~\mathrm{or}~\Lambda(i)>D_2\exp(2R_{2,n})\bigg|X^n\in\calU_i\right)\\
&\leq \gamma_n+\sum_{i=1}^k \Pr\left(\frac{\|X^n\|^2}{n}+\delta\sigma^2>D_1\exp(2R_{1,n})~\mathrm{or}~\frac{\|X^n\|^2}{n}+\delta\sigma^2>D_2\exp(2R_{1,n}),~X^n\in\calU_i\right)\label{eqn:lambdabyx}\\
&\leq \gamma_n+\Pr\left(\frac{1}{n}\sum_{i=1}^n \frac{X_i^2}{\sigma^2}>\frac{D_1}{\sigma^2}\exp(2R_{1,n})-\delta~\mathrm{or}~\frac{1}{n}\sum_{i=1}^n \frac{X_i^2}{\sigma^2}>\frac{D_2}{\sigma^2}\exp(2R_{2,n})-\delta,~X^n\in\calU^{\xi}\right)\label{disjointset}\\
&\leq \gamma_n+\Pr\left(\frac{1}{n}\sum_{i=1}^n \frac{X_i^2}{\sigma^2}>\frac{D_1}{\sigma^2}\exp(2R_{1,n})-\delta~\mathrm{or}~\frac{1}{n}\sum_{i=1}^n \frac{X_i^2}{\sigma^2}>\frac{D_2}{\sigma^2}\exp(2R_{2,n})-\delta\right),
\end{align}
where~\eqref{atypical} follows from~\eqref{eqn:cramer}; \eqref{eqn:lambdabyx} follows because for $X^n\in\calU_i$ ($\calU_i$ was defined in \eqref{defui}), $\|X^n\|^2/n+\delta\sigma^2\geq \Lambda(i)$; \eqref{disjointset} follows since $\calU_i$ and $\calU_j$ are disjoint for any $i\neq j$. The proof of Lemma~\ref{gauuppexcess} is now complete.

\bibliographystyle{IEEEtran}
\bibliography{IEEEfull_2}
\end{document}